\newtheorem{theorem}{Theorem}[section]
\newtheorem{lemma}[theorem]{Lemma}
\theoremstyle{definition}
\newtheorem{definition}{Definition}[section]
 \newtheorem{example}{Example}[section]
\theoremstyle{remark}
\numberwithin{figure}{section}
\newcommand{\PP}{\mathcal{P}}
\newcommand{\R}{\mathbb{R}}
\journal{Data \& Knowledge Engineering}
\begin{document}
 
\begin{frontmatter}

\title{Accurate and Efficient Profile Matching in Knowledge Bases\tnoteref{*}}
\tnotetext[*]{The research reported in this paper was supported by the Austrian Forschungsf\"orderungsgesellschaft (FFG) for the Bridge project ``Accurate and Efficient Profile Matching in Knowledge Bases'' (ACEPROM) (\textbf{FFG: [841284]}). The research has further been supported by the Austrian Ministry for Transport, Innovation and Technology, the Federal Ministry of Science, Research and Economy, and the Province of Upper Austria in the frame of the COMET center SCCH (\textbf{FFG: [844597]}). We are further grateful to the co-funding support by two companies \textbf{3S} and \textbf{OntoJob} in the frame of the ACEPROM project.}

\author[1]{Jorge Martinez Gil\corref{a}}
\ead{jorge.martinez-gil@scch.at}
\author[1]{Alejandra Lorena Paoletti}
\ead{lorena.paoletti@scch.at}
\author[2]{G\'{a}bor R\'{a}cz}
\ead{gabee33@gmail.com}
\author[2]{Attila Sali}
\ead{sali@renyi.hu}
\author[1]{Klaus-Dieter Schewe}
\ead{kdschewe@acm.org}
\cortext[a]{Corresponding author}

\address[1]{Software Competence Center Hagenberg, Austria}
\address[2]{Alfr\'{e}d R\'{e}nyi Institute of Mathematics, Hungary}

\begin{abstract}

A profile describes a set of properties, e.g. a set of skills a person may have, a set of skills required for a particular job, or a set of abilities a football player may have with respect to a particular team strategy. Profile matching aims to determine how well a given profile fits to a requested profile and vice versa. The approach taken in this article is grounded in a matching theory that uses filters in lattices to represent profiles, and matching values in the interval [0,1]: the higher the matching value the better is the fit. Such lattices can be derived from knowledge bases exploiting description logics to represent the knowledge about profiles. An interesting first question is, how human expertise concerning the matching can be exploited to obtain most accurate matchings. It will be shown that if a set of filters together with matching values by some human expert is given, then under some mild plausibility assumptions a {\em matching measure} can be determined such that the computed matching values preserve the relevant rankings given by the expert. A second question concerns the efficient querying of databases of profile instances. For {\em matching queries} that result in a ranked list of profile instances matching a given one it will be shown how corresponding top-$k$ queries can be evaluated on grounds of pre-computed matching values, which in turn allows the maintenance of the knowledge base to be decoupled from the maintenance of profile instances. In addition, it will be shown how the matching queries can be exploited for {\em gap queries} that determine how profile instances need to be extended in order to improve in the rankings. Finally, the theory of matching will be extended beyond the filters, which lead to a matching theory that exploits fuzzy sets or probabilistic logic with maximum entropy semantics. 

\end{abstract}

\begin{keyword}
knowledge base \sep lattice \sep filter \sep description logic \sep matching measure \sep top-$k$ query \sep probabilistic logic \sep maximum entropy
\end{keyword}

\end{frontmatter}

\section{Introduction}

A profile describes a set of properties, and profile matching is concerned with the problem to determine how well a given profile fits to a requested one. Profile matching appears in many application areas such as matching applicants for jobs to job requirements, matching system configurations to requirements specifications, matching team players to game strategies in sport, etc. 

In order to support profile matching by knowledge-based tools the first question concerns the representation of the profiles. For this one might use just sets, e.g. the set of skills of a person could contain ``knowledge of Java'', ``knowledge of parsing theory'', ``knowledge of Italian'', ``team-oriented person'', etc. In doing so, profile matching would have to determine measures for the difference of sets. Such approaches exist, but they will usually lead only to very coarse-grained matchings. For instance, ``knowledge of Java'' implies ``knowledge of object-oriented programming'', which further implies ``knowledge of programming''. Thus, at least hierarchical dependencies between the terms in a profile should be taken into account, which is the case in many taxonomies for skills. Mathematically it seems appropriate to exploit partially-ordered sets or better lattices to capture such dependencies, which justifies to consider not just sets, but filters in lattices as an appropriate way of representing profiles. 

However, other more general dependencies would still not be captured. For instance, the ``knowledge of Java'' may be linked to ``two years of experience'' and ``application programming in web services''. If such a fine-grained characterisation of skills is to be captured as well, the well established field of knowledge representation offers solutions in form of ontologies, which formally can be covered by description logics. This further permits a thorough distinction between the terminological level (aka TBox) of a knowledge base capturing abstract concepts such as ``knowledge of Java'' or ``web services'' and their relationships, and corresponding assertional knowledge (aka ABox) referring to particular instances. For example, the ABox may contain the skills of particular persons such as Lorena or Jorge linking them to ``Lorena's knowledge of Java'' or ``Jorge's knowledge of Java'', which further link to ``ten years of experience'' or ``six years of experience'', respectively. Therefore, it appears appropriate to assume that a knowledge base is used for the representation of the knowledge about abstract and concrete terms appearing in profiles.

For profile matching this raises the question, how the extended relationships could be integrated into profiles. Pragmatically, it appears justified to assume that the knowledge base is finite. For instance, for recruiting a distinction between ``$n$ years of experience'' only makes sense for positive integer values for $n$ up to some maximum. Also, a classification of application areas will only require finitely many terms. As the relationships in a knowledge base correspond semantically to binary relations, we can exploit inverse images and thus exploit concepts such as ``knowledge of Java with $n$ years of experience'' for different possible values of $n$---of course, the concept with a larger value of $n$ subsumes the concept with a smaller value''---to derive a sophisticated lattice from the knowledge base. Even the knowledge of the actual instances in the ABox can be exploited to keep the derived lattice reasonably small.

As profiles can be represented by filters in lattices, profile matching can be approached by appropriate {\em matching measures} $\mu$, i.e. we assign to each pair of filters a numeric {\em matching value} $\mu(\mathcal{F}_1,\mathcal{F}_2)$, which we can normalise to be in the interval $[0,1]$. The meaning should be the degree to which the profile represented by $\mathcal{F}_1$ fits to the profile represented by $\mathcal{F}_2$, such that a higher matching value corresponds to a better fit. Naturally, if a given profile contains everything that is requested through another profile, the given profile should be considered a perfect fit and receive a matching value of 1. This implies that the matching measure cannot be symmetric. For instance, almost everyone would satisfy the requirements for a position of ``receptionist'', but it is very unlikely that for such a position a PhD-qualified candidate would be selected. This is, because the inverted matching measure is low, i.e. the requested profile for the receptionist does not fit well to the profile of the PhD-qualified candidate, in other words, s/he is considered to be over-qualified. In this way the matching measures can be handled flexibly to capture also concepts such as over-qualification or matching by means of multiple criteria, e.g. technical skills and social skills.

\subsection{Our Contribution}

In this paper we develop a theory of profile matching, which is inspired by the problem of matching in the recruiting domain \cite{paoletti:dexa2015} but otherwise independent from a particular application domain. As argued above our theory will be based on profiles that are defined by filters in appropriate lattices \cite{martinez:medi2016}. We will show that information represented in knowledge bases using highly expressive description logics similar to DL-LITE \cite{artale:jair2009}, $\mathcal{SROIQ}$ \cite{baader:2003} or OWL-2 \cite{grau:jws2008} can be captured adequately by filters in lattices that are derived from the TBox of the knowledge base. For a matching measure we then assign weights to the elements of the lattice, where the weighting function should satisfy some normalising constraints. This will be exploited in the definition of asymmetric {\em matching measures}. We argue that every matching measure can be obtained by such weights.

While these definitions constitute a justified contribution to formalise profile matching, we will investigate how matching measures can be maintained in the light of human expertise. If bias (i.e. matching decisions that are grounded in concepts not appearing in the knowledge base) can be excluded, the question is, if human-made matchings can always be covered by an appropriate matching measure. As human experts rather use rankings than precise matching values, we formalise this problem by a notion of {\em ranking-preserving matching measure}. Our first main result is that under some mild assumptions---the satisfaction of plausibility constraints by the human-made matchings---a ranking-preserving matching measure can always be determined. The proof of this result requires to show the solvability of some linear inequalities. However, we also show that in general, not all relationships in human-made matchings can be preserved in a matching measure.

Our second main result shows how {\em matching queries} can be efficiently implemented. As matchings are usually done to determine a ranked list of matching profiles for a fixed profile---in recruiting this usually refers to the shortlisting procedure---or conversely a ranked list of profiles for which a given profile may fit, and only the $k$ top-ranked profile instances are considered to be relevant (for some user-defined integer constant $k$), the problem boils down to establish an efficient implementation for top-$k$ queries. The naive approach to first evaluate a query that determines a ranked list of profile instances, from which the ``tail'' starting with the $k+1$st element is thrown away, would obviously be highly inefficient. In addition, in our case the ranking criterium is based on the matching values, which themselves require a computation on the basis of filters, so it is desirable to minimise the number of such computations \cite{paoletti:dexa2016}. Therefore, we favour an approach that separates the profiles that are determined by the underlying lattice and thus do not change very often from the profile instances in a concrete matching query, e.g. the information about current applicants for an open position. Note that there may be several such instances associated with the same profile. Then we can create data structures around pre-computed matching values. The number of such pre-computed values is far less than the number of profile pairs, and most importantly, the data structure can exploit that only the larger ones of these values will be relevant for the matching queries. Based on these ideas we contribute an algorithm for the efficient evaluation of top-$k$ matching queries. 

In addition, we take this approach to querying further to support also {\em gap queries}, which determine for a given profile extensions that improve the rankings for the given profile with respect to possible requested profiles. In the recruitment domain such gap queries are of particular interest for job agents, who can use the results to recommend additional training to candidates that would otherwise have low chances on the job market. Furthermore, for educational institutions the results of such queries give information about the needed qualifications that should be targeted by training courses.

Finally, we investigate further extensions to the matching theory with respect to relations between the concepts in a profile that are not covered by ontologies. In particular, the presence of a particular concept in a profile may only partially imply the presence of another concept. For instance, ``knowledge of Java'' and ``knowledge of NetBeans'' may be unrelated in a knowledge base, yet with a degree (or probability) of $0.7$ the former one may imply the latter one. We therefore explore additional links between the elements of the lattice that are associated with a degree (or probability); even cycles will be permitted. We contribute an {\em enriched matching theory} by means of values associated to paths \cite{racz:foiks2016}. Regarding the values associated with the added links as probabilities suggests a different approach that exploits {\em probabilistic logic programs}, for which we exploit the maximum entropy semantics, which interprets the additional links as adding minimal additional information. We show that matching values are the result of probabilistic queries that are obtained from sentences determined by the extended links.

\subsection{Related Work}

Knowledge representation is a well established branch of Artificial Intelligence. In particular, description logics have been in the centre of interest, often as the strict formal counterpart of the more vaguely used terms ``ontology'' or ``semantic technology''. At its core a description logic can be seen as a fragment of first-order logic with unary and binary predicates with a decidable implication. A TBox captures terminological knowledge, i.e. well-formed sentences (implications) of the logic, while an ABox captures instances, i.e. assertional knowledge. Many different description logics have been developed, which differ by their expressiveness (see \cite{baader:2003} for a survey). The DL-LITE family provides a collection of description logics that appear to be mostly sufficient for our purposes, though we will also exploit partly constructs from the highly expressive description logic is $\mathcal{SROIQ}$ to highlight the relationship between knowledge representation and profile matching.

Description logics have been used in many application branches, in particular as a foundation for the semantic web \cite{klein:semweb2003} and for knowledge sharing \cite{gruber:ijhcs1995}. For the usage in the context of the semantic web the language OWL-2, which is essentially equivalent to $\mathcal{SROIQ}(D)$, has become a standard. Ontologies have also been used in the area of recruiting in connection with profile matching (see \cite{falk:is2006} for a survey). However, while it is claimed that matching accuracy can be improved \cite{mochol:bis2007}, the matching approach itself remains restricted to Boolean matching, which basically means to count how many requested skills also appear in a given profile \cite{mochol:sebiz2006}. Surprisingly, sophisticated taxonomies in the recruitment domain such as DISCO \cite{disco}, ISCO \cite{isco:2008} and ISCED \cite{isced} have not yet been properly linked with the much more powerful and elegant languages for knowledge representation.

With respect to foundations of a profile matching theory we already argued that it is not appropriate to define profiles just as sets of unrelated items, even though many distance measures for sets such as Jaccard or S{\o}rensen-Dice have proven to be useful in ecological applications \cite{levandowsky:nat1971}. The first promising attempt to take hierarchical dependencies into account was done by Popov and Jebelean \cite{popov:2013}, which defines the initial filter-based measure. However, weights are not used, only cardinalities, which correspond to the special case that all concepts are equally weighted. Our matching theory is inspired by this work, but takes the filter-based approach much farther. To our best knowledge no other approach in this direction has been tried.

With respect to the analysis of matching measures, in particular in connection with human-defined matchings it is tempting to exploit ontology learning \cite{martinez:adbis2016} or formal concept analysis (FCA) \cite{ganter:1999,ganter:jucs2004}. FCA has been exploited successfully in many areas, also in combination with ontologies \cite{cimiano:icfca2004} (capturing structure) and rough sets \cite{ganter:trs2011} (capturing vagueness). A first attempt to exploit formal concept analysis for the learning of matching measures has been reported in \cite{looser:apccm2013}. However, it turned out that starting from matching relations, the derived concept lattices still require properties to be examined that are expressed in terms of the original relation, so the ideas to exploit formal concept analysis were abandoned. 
Regarding ontology learning a first application to e-recruitment has been investigated in \cite{martinez:jikm2014}. The resulting learning algorithms can be combined with our results on the existence of ranking-preserving matching measures. Remotely related to our objective to determine matching measures that are in accordance with human-made matchings is the research on human preferences in ranking \cite{tran:kais2016} and on product ranking with user credibility \cite{zhang:kais2016}.

Top-$k$ queries have attracted a lot of attention in connection with ranking problems. Usually, they are investigated in the context of relational databases \cite{chakrabarti:tkde2004}, and the predominant problem associated with them is efficiency \cite{han:kais2016}. This is particularly the case for join queries \cite{ilyas:vldbj2004} or for the determination of a dominant query \cite{han:kais2015}. Extensions in the direction of fuzzy logic \cite{straccia:fss2012} and probabilities have also been tried \cite{theobald:vldb2004}. For our purposes here, most of the research on top-$k$ query processing is only marginally relevant, because it is not linked to the specific problem of finding the best matches. On one hand this brings with it the additional problem that the matching values need to be computed as well, but on the other hand permits a simplification, as the possible matching values do not frequently change.

Concerning enriched matching with fuzzy degrees seems at first sight to lead to the NP-complete problem of finding longest paths in a weighted directed graph, but in our case the weighting is multiplicative with values in [0,1]. This enables an interpretation using fuzzy filters \cite{hajek:1998}. For the probabilistic extension our research will be based on the probabilistic logic with maximum entropy semantics in
\cite{beierle:entropy2015,kern-isberner:ai2004}, for which sophisticated reasoning methods exist \cite{schramm:or1999}.

\subsection{Organisation of the Article}

The remainder of this article is organised as follows. In Section \ref{sec:dl} we introduce fundamentals from knowledge representation with description logics. We particularly emphasise the features in the description logics DL-LITE and $\mathcal{SROIQ}$ without adopting them completely. Actually, we leave it to the particular application domain to decide, which knowledge representation is most appropriate. However, we require that such a knowledge base gives rise to a lattice that captures the information found in profiles. We show how the roles give rise to particular subconcepts and thus can be omitted from further consideration. Thus we show how we can obtain a lattice that is needed for our matching theory from a knowledge base that captures general terminology of an application domain. For instance, we envision that on one hand for recruiting an extension of the various taxonomies such as ISCO, DISCO and ISCED perfectly makes sense even without this connection to matching, while on the other hand matching has to exploit the available knowledge sources.

In Section \ref{sec:matching} we introduce the fundamentals of our approach to profile matching. We start with profiles defined by filters in lattices and define weighted {\em matching measures} on top of them. Naturally, the lattices are derived from knowledge bases as discussed in Section \ref{sec:dl}. We further discuss the lattice of {\em matching value terms}, which symbolically characterise possible matching values.

Section \ref{sec:learning} is then dedicated to the relationship between the filter-based matching theory and matchings by human experts. That is, the section is dedicated to the problem to determine, if and how a matching measure as defined in Section \ref{sec:matching} can be obtained from human-defined matching values. For this we first derive plausibility constraints that human-made matching should fulfil in order to exclude unjustified bias. We then show that if the plausibility constraints are satisfied, weights can be defined in such a way that particular rankings based on the corresponding matching measure coincide with the human-made ones. The rankings we consider are restricted to either the same requested profile or the same given profile plus requested profiles in the same relevance class. This permits minimum updates to existing matching measures in order to establish compliance with human expertise.

In Section \ref{sec:queries} the issue of queries is addressed, which concerns the implementation of the matching theory from Section \ref{sec:matching}. First we present an approach to implement top-$k$ queries for matching, which extend naturally to matching queries under multiple criteria, e.g. capturing over-qualification. The second class of queries investigated concerns gaps, i.e. possible enlargements of profiles that lead to better rankings of a profile instance.

Section \ref{sec:probability} is dedicated to enriched matchings that exploit additional links between concepts in the knowledge base. This takes the matching theory from Section \ref{sec:matching} further. First we discuss maximum length matching, which is based on a fuzzy set interpretation of such links. Second, we discuss an interpretation in probabilistic logic with maximum entropy semantics. Naturally, for the enriched matching theory it would be desirable to address again the issues of preservation of human-defined rankings and efficient querying, but these topics are still under investigation and thus left out of this article.

Finally, we conclude the article in Section \ref{sec:schluss} with a brief summary and discussion of open questions that need to be addressed to apply our matching theory in practice.

\section{Profiles and Knowledge Bases}\label{sec:dl}

This section is dedicated to a brief introduction of our understanding of knowledge bases that form the background for our approach to profile matching. In the introduction we already outlined that we consider a profile to describe a set of properties, and that dependencies between such properties should be taken into account. Therefore, our proposal is to exploit description logics for the representation of domain knowledge. Thus, for the representation of knowledge we adopt the fundamental distinction between {\em terminological} and {\em assertional} knowledge that has been used in description logics since decades. In accordance with basic definitions about description logics \cite[pp.~17ff.]{baader:2003} for the former one we define a language, which defines the TBox of a knowledge base, while the instances define corresponding ABoxes.

A TBox consists of concepts, roles and constraints on them. The description logic used here is derived from DL-LITE \cite{artale:jair2009} and $\mathcal{SROIQ}$ \cite{baader:2003}, which we use as role models for the features that in many application domains need to be supported.

\begin{description}

\item[$\mathcal{S}$] stands for the presence of a top concept $\top$, a bottom concept $\bot$, intersection $C_1 \sqcap C_2$, union $C_1 \sqcup C_2$, and for concepts $\forall R. C$ and $\exists R. C$ (the semantics of these will be defined later).

\item[$\mathcal{R}$] stands for role chains $R_1 \circ R_2$ and role hierarchies $R_1 \sqsubseteq R_2$ (the latter ones we do not need).

\item[$\mathcal{O}$] stands for nominals, i.e. we provide individual $I_0$ and permit to use concepts of the form $\{ a \}$. Then in combination with $\mathcal{S}$ also enumerations $\{ a_1, \dots, a_n \} = \{ a_1 \} \sqcup\dots\sqcup \{a_n \}$ are enabled.

\item[$\mathcal{I}$] stands for inverse atomic roles $R_0^-$.

\item[$\mathcal{Q}$] stands for quantified cardinality restrictions $\ge m. R. C$ and $\le m. R. C$ (the semantics of these will be defined later).

\end{description}

We believe that in most application domains---definitely for job recruiting---it is advisable to exploit these features in a domain knowledge base, except role hierarchies and inverse roles. Therefore, let us assume that $C_0$, $I_0$ and $R_0$ represent not further specified sets of basic concepts, individuals and roles, respectively. Then {\em atomic concepts} $A$, {\em concepts} $C$ and {\em roles} $R$ are defined by the following grammar:
\begin{align*}
R \quad &= \quad R_0 \mid R_1 \circ R_2 \\
A \quad &= \quad C_0 \mid \top \mid \; \ge m . R . C \; (\text{with $m > 0$}) \mid \{ I_0 \} \\
C \quad &= \quad A \mid \neg C \mid C_1 \sqcap C_2 \mid C_1 \sqcup C_2 \mid \exists R . C \mid \forall R . C
\end{align*}

\begin{definition}\label{def-tbox}\rm

A {\em TBox} $\mathcal{T}$ comprises concepts $C$ and roles $R$ as defined by the grammar above plus a finite set of constraints of the form $C_1 \sqsubseteq C_2$ with concepts $C_1$ and $C_2$.

\end{definition}

Each assertion $C_1 \sqsubseteq C_2$ in a TBox $\mathcal{T}$ is called a {\em subsumption
axiom}. Note that Definition \ref{def-tbox} only permits subsumption between concepts, not between
roles, though it is possible to define more complex terminologies that also permit role
subsumption (as in $\mathcal{SROIQ}$). As usual,  we use several shortcuts: (1) $C_1 \equiv C_2$ can be used instead of $C_1 \sqsubseteq C_2 \sqsubseteq C_1$, (2) $\bot$ is a shortcut for $\neg \top$, (3) $\{ a_1 ,\dots, a_n \}$ is a shortcut for $\{ a_1 \} \sqcup\dots\sqcup \{ a_n \}$, (4) $\le m . R . C$ is a shortcut for $\neg \ge m+1 . R . C$, and (5) $= m . R . C$ is a shortcut for $\ge m . R . C \;\sqcap\; \le m . R . C$.

\begin{example}\label{bsp-tbox}

With a skill ``programming'' we would like to associate other properties such as ``years of experience'', ``application area'', and ``degree of complexity'', which defines a complex aggregate structure. In a TBox this may lead to subsumption axioms such as
\begin{gather*}
\text{programming} \sqsubseteq \exists \text{experience}.\{ 1 , \dots, 10 \} \\
\text{programming} \sqsubseteq \exists \text{area}.\{ \text{business}, \text{science}, \text{engineering} \} \; \text{and} \\
\text{programming} \sqsubseteq \exists \text{complexity}.\{ 1 , \dots, 5 \}
\end{gather*}

\end{example}

Obviously, the concepts in a TBox define a lattice $\mathcal{L}$ with $\sqcap$ and $\sqcup$ as operators for meet and join, and $\sqsubseteq$ for the partial order. For our purpose of matching we are particularly interested in named concepts, i.e. we assume that for each concept $C$ as defined by the grammar above we also have a constraint $C_0 \equiv C$ with some atomic concept name in $C_0$. Then we can identify the elements of $\mathcal{L}$ with the names in $C_0$. In particular, in order to exploit the roles as in Example \ref{bsp-tbox} we consider ``blow-up'' concepts \cite{paoletti:dexa2015} that have the form $C \sqcap \exists R. C^{\prime\prime}$, where $C$ is a concept, for which $C \sqsubseteq \exists R. C^\prime$ holds and $C^{\prime\prime} \sqsubseteq C^\prime$ holds. In particular, this becomes relevant, if $C^{\prime\prime}$ is defined by individuals, say $C^{\prime\prime} = \{ a_1 ,\dots, a_n \}$.

Using such blow-up concepts, we can express concepts such as ``programming of complexity level 4 in science with at least 3 years of experience''. We tacitly assume that for matching we exploit a lattice that is defined by a TBox exploiting blow-up concepts as well as $\sqcap$, $\sqcup$ and $\sqsubseteq$.

\begin{definition}\label{def-structure}\rm

A {\em structure} $\mathcal{S}$ for a TBox $\mathcal{T}$ consists of a non-empty set
$\mathcal{O}$ together with subsets $\mathcal{S}(C_0) \subseteq \mathcal{O}$ and $\mathcal{S}(R_0)
\subseteq \mathcal{O} \times \mathcal{O}$ for all basic concepts $R_0$ and basic roles $R_0$,
respectively, and individuals $\bar{a} \in \mathcal{O}$ for all $a \in I_0$. $\mathcal{O}$ is called the base set of the structure.

\end{definition}

We first  extend the interpretation of basic concepts and roles and to all concepts and roles as
defined by the grammar above, i.e. for each concept $C$ we define a subset $\mathcal{S}(C)
\subseteq \mathcal{O}$, and for each role $R$ we define a subset $\mathcal{S}(R) \subseteq
\mathcal{O} \times \mathcal{O}$ as follows:
\begin{gather*}
\mathcal{S}(R_1 \circ R_2) = \{ (x,z) \mid \exists y . (x,y) \in \mathcal{S}(R_1) \wedge (y,z) \in \mathcal{S}(R_2) \\
\mathcal{S}(\top) = \mathcal{O} \qquad
\mathcal{S}(\{ a \}) = \{ \bar{a} \} \qquad
\mathcal{S}(\neg C) = \mathcal{O} - \mathcal{S}(C) \\
\mathcal{S}(\ge m . R . C) = \{ x \in \mathcal{O} \mid \# \{ y \mid (x,y) \in \mathcal{S}(R) \wedge y \in \mathcal{S}(C) \} \ge m \} \\
\mathcal{S}(C_1 \sqcap C_2) = \mathcal{S}(C_1) \cap \mathcal{S}(C_2) \qquad
\mathcal{S}(C_1 \sqcup C_2) = \mathcal{S}(C_1) \cup \mathcal{S}(C_2) \\
\mathcal{S}(\exists R. C) = \{ x \in \mathcal{O} \mid (x,y) \in \mathcal{S}(R) \;\text{for some}\; y \in \mathcal{S}(C) \} \\
\mathcal{S}(\forall R. C) = \{ x \in \mathcal{O} \mid (x,y) \in \mathcal{S}(R) \Rightarrow y \in \mathcal{S}(C) \;\text{for all}\; y \}
\end{gather*}

In doing so, we can consider concepts as predicate symbols $C$ of arity 1 in first-order logic, and roles as predicate symbols of arity 2. Then the extension of the structure defines a set of ground instances of the form $C(a)$ and $R(a,b)$. An {\em ABox} for a TBox $\mathcal{T}$ is usually defined as a finite set of such ground atoms. Thus, an ABox is de facto defined by a structure, for which in addition we usually assume consistency.

\begin{definition}\label{def-abox}\rm

A structure $\mathcal{S}$ for a TBox $\mathcal{T}$ is {\em consistent} if $\mathcal{S}(C_1) \subseteq \mathcal{S}(C_2)$ holds for all assertions $C_1 \sqsubseteq C_2$ in $\mathcal{T}$.

\end{definition}

For the following we always consider a concept $C$ in a TBox as representation of abstract properties, e.g. ``knowledge of Java'', and individuals in the ABox as concrete properties such as the ``Java knowledge of Lisa''. 

\begin{definition}\label{def-profile}

Given a finite consistent structure, a {\em profile} $P$ is a subset of the base set $\mathcal{O}$. The {\em representing filter} of a profile $P$ is the filter $\mathcal{F}(P) \subseteq \mathbb{F}$ of the lattice $\mathcal{L}$ defined by the TBox with $\mathcal{F}(P) = \{ C \in \mathcal{L} \mid \exists p \in P . \; p \in \mathcal{S}(C) \}$.

\end{definition}

\section{Weighted Profile Matching Based on Lattices and Filters}\label{sec:matching}

In this section we present the formal definitions underlying our approach to profile matching. In the previous section we have seen how to obtain lattices from knowledge bases. Our theory is therefore based on such lattices $\mathcal{L}$. We have further seen that a profile, understood as a set $P$ of concept instances in an ABox, always gives rise to a representing filter $\mathcal{F}$ in the lattice $\mathcal{L}$. Therefore, our theory exploits filters in lattices. We first define the notion of a {\em matching measure} as a function defined on pairs of filters. If $\mu$ is such a matching measure and $\mathcal{F}$, $\mathcal{G}$ are filters, the $\mu(\mathcal{F},\mathcal{G})$ will be a real number in the interval $[0,1]$, which we will call a {\em matching value}. Matching measures will explot weights assigned to concepts in the lattice $\mathcal{L}$. Finally in this section we discuss a specific lattice, the lattice of {\em matching value terms} that is defined by the matching measures.

\subsection{Filter-Based Matching}

As stated above, profiles are to describe sets of properties, and profile matching should result in values that determine how well a given profile fits to a requested one, so we base our theory on lattices. Throughout this section let $(\mathcal{L}, \le)$ be a lattice. Informally, for $A, B \in \mathcal{L}$ we have $A \le B$, if the property $A$ subsumes property $B$, e.g. for skills this means that a person with skill $A$ will also have skill $B$.

\begin{definition}\label{def-filter}\rm

A {\em filter} is a non-empty subset $\mathcal{F} \subseteq \mathcal{L}$, such that for all $C, C^\prime$ with $C \le C^\prime$ whenever $C \in \mathcal{F}$ holds, then also $C^\prime \in \mathcal{F}$ holds.

\end{definition}

\noindent
We concentrate on filters $\mathcal{F}$ in order to define matching measures. 

\begin{definition}\label{def-weighting}\rm

Let $\mathbb{F} \subseteq \mathcal{P}(\mathcal{L})$ denote the set of filters. A {\em weighting function} on $\mathcal{L}$ is a function $w : \mathcal{P}(\mathcal{L}) \rightarrow [0,1]$ satisfying 

\begin{enumerate}

\item $w(\mathcal{L}) = 1$, and 

\item $w(\bigcup_{i\in I} A_i) = \sum_{i \in I} w(A_i)$ for pairwise disjoint $A_i$ ($i \in I$).

\end{enumerate}

\end{definition}

\begin{definition}\label{def-matching}\rm

A {\em matching measure} is a function $\mu : \mathbb{F} \times \mathbb{F} \rightarrow [0,1]$ such that $\mu(\mathcal{F}_1,\mathcal{F}_2) = w(\mathcal{F}_1 \cap \mathcal{F}_2) / w(\mathcal{F}_2)$ holds for some weighting function $w$ on $\mathcal{L}$.

\end{definition}

\begin{example}\label{bsp-lattice}

Take a simple lattice $\mathcal{L}$ with only five elements: $\mathcal{L} = \{ C_1, C_2, C_3, C_4, C_5 \}$. The lattice structure is shown in Figure \ref{fig-lattice}. Then we obtain seven filters for this lattice, each generated by one or two elements of the lattice. These filters are also shown in Figure \ref{fig-lattice}.

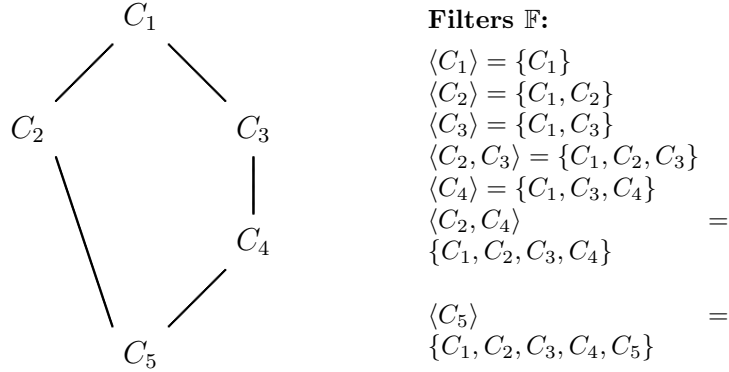
\begin{figure}
\begin{minipage}{6cm}
\begin{center}
\unitlength7.5mm
\begin{picture}(8,6)
\thicklines
\put(4,0){\makebox(0,0){\large $C_5$}}
\put(6,2){\makebox(0,0){\large $C_4$}}
\put(2,4){\makebox(0,0){\large $C_2$}}
\put(6,4){\makebox(0,0){\large $C_3$}}
\put(4,6){\makebox(0,0){\large $C_1$}}
\qbezier(4.5,0.5)(5,1)(5.5,1.5)
\qbezier(3.5,0.5)(3,2)(2.5,3.5)
\qbezier(2.5,4.5)(3,5)(3.5,5.5)
\qbezier(5.5,4.5)(5,5)(4.5,5.5)
\qbezier(6,2.5)(6,3)(6,3.5)
\end{picture}
\end{center}
\end{minipage}
\qquad
\begin{minipage}{4cm}
\textbf{Filters $\mathbb{F}$:}\\[0.8ex]
$\langle C_1 \rangle = \{ C_1 \}$\\
$\langle C_2 \rangle = \{ C_1, C_2 \}$\\
$\langle C_3 \rangle = \{ C_1, C_3 \}$\\
$\langle C_2, C_3 \rangle = \{ C_1, C_2, C_3 \}$\\
$\langle C_4 \rangle = \{ C_1, C_3, C_4 \}$\\
$\langle C_2, C_4 \rangle = \{ C_1, C_2, C_3, C_4 \}$\\

$\langle C_5 \rangle = \{ C_1, C_2, C_3, C_4, C_5 \}$
\end{minipage}
\caption{\label{fig-lattice}A simple lattice and its filters}
\end{figure}

\renewcommand{\arraystretch}{1.2}
\begin{table}[h!]
\begin{center}
\begin{tabular}{|r||*7{c|}}
\hline
& $\langle C_1 \rangle$ & $\langle C_2 \rangle$ & $\langle C_3 \rangle$ & $\langle C_2, C_3 \rangle$ & $\langle C_4 \rangle$ & $\langle C_2, C_4 \rangle$ & $\langle C_5 \rangle$ \\ \hline\hline
$\langle C_1 \rangle$ & 1 & $\frac{1}{4}$ & $\frac{1}{3}$ & $\frac{1}{6}$ & $\frac{1}{6}$ & $\frac{1}{9}$ & $\frac{1}{10}$ \\ \hline
$\langle C_2 \rangle$ & 1 & 1 & $\frac{1}{3}$ & $\frac{2}{3}$ & $\frac{1}{6}$ & $\frac{4}{9}$ & $\frac{2}{5}$ \\ \hline
$\langle C_3 \rangle$ & 1 & $\frac{1}{4}$ & 1 & $\frac{1}{2}$ & $\frac{1}{2}$ & $\frac{1}{3}$ & $\frac{3}{10}$ \\ \hline
$\langle C_2, C_3 \rangle$ & 1 & 1 & 1 & 1 & $\frac{1}{2}$ & $\frac{2}{3}$ & $\frac{3}{5}$ \\ \hline
$\langle C_4 \rangle$ & 1 & $\frac{1}{4}$ & 1 & $\frac{1}{2}$ & 1 & $\frac{2}{3}$ & $\frac{2}{5}$ \\ \hline
$\langle C_2, C_4 \rangle$ & 1 & 1 & 1 & 1 & 1 & 1 & $\frac{9}{10}$ \\ \hline
$\langle C_5 \rangle$ & 1 & 1 & 1 & 1 & 1 & 1 & 1 \\ \hline
\end{tabular}
\caption{\label{tab-matching}A matching measure $\mu$ on the lattice $\mathcal{L}$}
\end{center}
\end{table}

If we now define weights $w(C_1) = \frac{1}{10}$, $w(C_2) = \frac{3}{10}$, $w(C_3) = \frac{1}{5}$, $w(C_4)= \frac{3}{10}$, $w(C_5) = \frac{1}{10}$, then we obtain the matching measure values $\mu(\mathcal{F},\mathcal{G})$ shown in Table \ref{tab-matching}. In the table the row label is $\mathcal{F}$ and the column label is $\mathcal{G}$.

\end{example}

Obviously, every matching measure $\mu$ is defined by weights $w(C) = w(\{ C \}) \in [0,1]$ for the elements $C \in \mathcal{L}$. With this we immediately obtain $w(\mathcal{F}) = \sum_{C \in \mathcal{F}} w(C)$ and $w(\mathcal{L} - \mathcal{F}) = 1- w(\mathcal{F})$. Then $\mu(\mathcal{F}_1,\mathcal{F}_2) =  \sum_{C \in \mathcal{F}_1 \cap \mathcal{F}_2} w(C) \cdot \left ( \sum_{C \in \mathcal{F}_2} w(C) \right )^{-1}$ expresses, how much of (requested) profile represented by $\mathcal{F}_2$ is contained in the (given) profile represented by $\mathcal{F}_1$.

\begin{example} \label{bsp-popov}

The matching measure $\mu_{pj}$ defined in \cite{popov:2013} uses simply cardinalities:
\[ \mu_{pj}(\mathcal{F}_1,\mathcal{F}_2) = \#(\mathcal{F}_1 \cap \mathcal{F}_2) / \#\mathcal{F}_2 \]

Thus, it is defined by the weighting function $w$ on $\mathcal{L}$ with $w(A) = \# A / \# \mathcal{L}$, i.e. all properties have equal weights.

\end{example}

Note that in general matching measures are not symmetric. If $\mu(\mathcal{F}_1,\mathcal{F}_2)$ expresses how well a given profile (represented by $\mathcal{F}_1$) matches a requested profile (represented by $\mathcal{F}_2$), then $\mu(\mathcal{F}_2,\mathcal{F}_1)$ measures what is ``too much'' in the given profile that is not required in the requested profile.

\begin{example}\label{bsp-overq}

Take the lattice $\mathcal{L}$ from Example \ref{bsp-lattice} shown in Figure \ref{fig-lattice}. Let $\mathcal{G} = \langle C_2, C_3 \rangle$ be the filter that represents the requirements. If we take filters $\mathcal{F}_1 = \langle C_3 \rangle$ and $\mathcal{F}_2 = \langle C_4 \rangle$ representing given profiles, then we have $\mu(\mathcal{F}_1,\mathcal{G}) = \mu(\mathcal{F}_2,\mathcal{G}) = \frac{1}{2}$, i.e. both given filters match the requirements equally well. However, if we also consider the {\em inverse matching measure} $\bar{\mu}$ with values $\bar{\mu}(\mathcal{F}_1,\mathcal{G}) = \mu(\mathcal{G},\mathcal{F}_1) = 1$ and $\bar{\mu}(\mathcal{F}_2,\mathcal{G}) = \mu(\mathcal{G},\mathcal{F}_2) = \frac{1}{2}$, then we see that $\mathcal{F}_1$ matches ``better'', as $\mathcal{F}_2$ contains $C_4$, which is not required at all.

\end{example}

The example shows that it may make sense to consider not just a single matching measure $\mu$, but also its inverse---In the recruiting area this corresponds to ``over-qualification'' \cite{paoletti:dexa2015}---$\bar{\mu}$, several matching measures defined on different lattices, or weight\-ed aggregates of such measures.

\subsection{The Lattice of Matching Value Terms}

We know that a matching measure $\mu$ on $\mathbb{F}$ is defined by weights $w(C)$ for each concept $C \in \mathcal{L}$, i.e.
\[ \mu(\mathcal{F},\mathcal{G}) = \cfrac{\sum_{C \in \mathcal{F} \cap \mathcal{G}} w(C)}{\sum_{C \in \mathcal{G}} w(C)} \]

with $w(C) > 0$. In case $\mathcal{G} \subseteq \mathcal{F}$, the right hand side becomes 1. In all other cases the actual value on the right hand side depends on the weights $w(C)$. Let us call the right hand side expression (including 1) a {\em matching value term} (mvt).

\begin{example}\label{bsp-mvt}

Let us look at the lattice $\mathcal{L}$ from Example \ref{bsp-lattice}. With $w(C_1) = a$, $w(C_2) = b$, $w(C_3) = c$, $w(C_4)= d$ and $w(C_5) = e$ we obtain the matching value terms shown in Table \ref{tab-mvt}.

\renewcommand{\arraystretch}{1.4}
\begin{table}
\begin{center}
\begin{tabular}{|r||*7{c|}}
\hline
& $\langle C_1 \rangle$ & $\langle C_2 \rangle$ & $\langle C_3 \rangle$ & $\langle C_2, C_3 \rangle$ & $\langle C_4 \rangle$ & $\langle C_2, C_4 \rangle$ & $\langle C_5 \rangle$ \\ \hline\hline
$\langle C_1 \rangle$ & 1 & $\frac{a}{a+b}$ & $\frac{a}{a+c}$ & $\frac{a}{a+b+c}$ & $\frac{a}{a+c+d}$ & $\frac{a}{a+b+c+d}$ & $\frac{a}{a+b+c+d+e}$ \\ \hline
$\langle C_2 \rangle$ & 1 & 1 & $\frac{a}{a+c}$ & $\frac{a+b}{a+b+c}$ & $\frac{a}{a+c+d}$ & $\frac{a+b}{a+b+c+d}$ & $\frac{a+b}{a+b+c+d+e}$ \\ \hline
$\langle C_3 \rangle$ & 1 & $\frac{a}{a+b}$ & 1 & $\frac{a+c}{a+b+c}$ & $\frac{a+c}{a+c+d}$ & $\frac{a+c}{a+b+c+d}$ & $\frac{a+c}{a+b+c+d+e}$ \\ \hline
$\langle C_2, C_3 \rangle$ & 1 & 1 & 1 & 1 & $\frac{a+c}{a+c+d}$ & $\frac{a+b+c}{a+b+c+d}$ & $\frac{a+b+c}{a+b+c+d+e}$ \\ \hline
$\langle C_4 \rangle$ & 1 & $\frac{a}{a+b}$ & 1 & $\frac{a+c}{a+b+c}$ & 1 & $\frac{a+c+d}{a+b+c+d}$ & $\frac{a+c+d}{a+b+c+d+e}$ \\ \hline
$\langle C_2, C_4 \rangle$ & 1 & 1 & 1 & 1 & 1 & 1 & $\frac{a+b+c+d}{a+b+c+d+e}$ \\ \hline
$\langle C_5 \rangle$ & 1 & 1 & 1 & 1 & 1 & 1 & 1 \\ \hline
\end{tabular}
\caption{\label{tab-mvt}Matching measure terms for the lattice $\mathcal{L}$}
\end{center}
\end{table}

\end{example} 

\begin{definition}

Let $\mathbb{V}$ denote the set of all matching value terms (including 1) for the lattice $\mathcal{L}$. A partial order $\le$ on $\mathbb{V}$ is defined by $v_1 \le v_2$ iff each substitution of positive values for $w(C)$ results in values $\bar{v}_1, \bar{v}_2 \in [0,1]$ with $\bar{v}_1 \le \bar{v}_2$.

\end{definition}

Let us first look at the following three special cases:

\begin{enumerate}

\item If $v_2$ differs from $v_1$ by a summand $w(C)$ added to the nominator, i.e. $\mathcal{F}_2 = \mathcal{F}_1 \cup \{C\}$ and $\mathcal{G}_2 = \mathcal{G}_1$, then we obviously obtain $v_1 \le v_2$.

\item If $v_1$ differs from $v_2$ by a summand $w(C)$ added to the denominator, i.e. $\mathcal{G}_1 = \mathcal{G}_2 \cup \{C\}$ and $\mathcal{F}_2 = \mathcal{F}_1$, then we obviously obtain $v_1 \le v_2$.

\item If $v_2$ differs from $v_1$ by a summand $w(C)$ added to both the nominator and the denominator, i.e. $\mathcal{G}_2 = \mathcal{G}_1 \cup \{C\}$ and $\mathcal{F}_2 = \mathcal{F}_1 \cup \{C\}$, then we also obtain $v_1 \le v_2$.

\end{enumerate}

\begin{lemma}\label{lem-v1}

The partial order $\le$ on the set $\mathbb{V}$ of matching value terms is the reflexive, transitive closure of the relation defined by the cases (1), (2) and (3) above.

\end{lemma}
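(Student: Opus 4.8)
\emph{Proof plan.} Let $R$ be the relation on $\mathbb{V}$ generated by cases (1)--(3) and let $\le^{*}$ be its reflexive, transitive closure; the claim is $\le^{*}\ =\ \le$. The inclusion $\le^{*}\subseteq\le$ is the easy direction: each single instance of (1), (2) or (3) already satisfies $v_1\le v_2$, as observed just before the lemma, and $\le$ is a partial order, hence reflexive and transitive, so it contains the reflexive transitive closure of $R$. The rest of the plan concerns the converse $\le\ \subseteq\ \le^{*}$.

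First I would replace $\le$ by a transparent combinatorial condition. A matching value term $v\ne 1$ is determined by the pair $(A_v,\mathcal{G}_v)$, where $\mathcal{G}_v$ is the filter in the denominator and $A_v=\mathcal{F}_v\cap\mathcal{G}_v$ is the set in the numerator; $A_v$ is itself a filter (an intersection of filters), $A_v\subsetneq\mathcal{G}_v$, and I write $\mathcal{D}_v=\mathcal{G}_v\setminus A_v\ne\emptyset$. The key fact is: for $v_2\ne 1$, $v_1\le v_2$ if and only if $A_{v_1}\subseteq A_{v_2}$ and $\mathcal{D}_{v_2}\subseteq\mathcal{D}_{v_1}$ (and $v\le 1$ for every $v$). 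Only the ``only if'' direction is needed for the lemma, and it falls out of testing $v_1\le v_2$ on extreme weightings: for a fixed concept $C$ assign weight $t$ to $C$ and weight $1$ to all other concepts and let $t\to\infty$. If $C\in A_{v_1}$ then $v_1\to 1$, so $v_2\to 1$ as well, which is only possible if $C\in A_{v_2}$; this gives $A_{v_1}\subseteq A_{v_2}$. If $C\in\mathcal{D}_{v_2}$ then $v_2\to 0$ unless $C\in\mathcal{G}_{v_1}$ (so that $v_1\to 0$ too), and since $C\notin A_{v_2}\supseteq A_{v_1}$ we get $C\in\mathcal{D}_{v_1}$; this gives $\mathcal{D}_{v_2}\subseteq\mathcal{D}_{v_1}$. (The ``if'' direction, not needed here, is the remark that $v=w(A_v)/(w(A_v)+w(\mathcal{D}_v))$ increases in $w(A_v)$ and decreases in $w(\mathcal{D}_v)$.)

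Granting this characterization, the task is to exhibit an $R$-path from $v_1$ to $v_2$. If $v_2=1$, apply move (1) repeatedly, adjoining the elements of $\mathcal{D}_{v_1}$ to the numerator in an order that reverses a linear extension of the lattice order on $\mathcal{D}_{v_1}$ (largest first); then each element adjoined is maximal in the current complement of the numerator inside $\mathcal{G}_{v_1}$, so the enlarged numerator is again a filter and the step is a legal instance of (1), and the path terminates at $1$. If $v_2\ne 1$, then by the characterization $A_1:=A_{v_1}\subseteq A_2:=A_{v_2}$, $\mathcal{D}_2:=\mathcal{D}_{v_2}\subseteq\mathcal{D}_1:=\mathcal{D}_{v_1}$, and moreover $\mathcal{G}_{v_1}\not\subseteq A_2$ because $\mathcal{D}_2\ne\emptyset$. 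Build the path in three phases: Phase~A uses moves (1) to add the elements of $A_2\cap\mathcal{D}_1$ to the numerator (largest first), reaching $(A_2\cap\mathcal{G}_{v_1},\,\mathcal{G}_{v_1})$; Phase~B uses moves (2), read in the direction that deletes a denominator summand, to remove the elements of $\mathcal{D}_1\setminus\mathcal{G}_{v_2}$ from the denominator (smallest first), reaching $(A_2\cap\mathcal{G}_{v_1},\,\mathcal{G}_{v_1}\cap\mathcal{G}_{v_2})$; Phase~C uses moves (3) to adjoin the elements of $A_2\setminus\mathcal{G}_{v_1}$ to numerator and denominator simultaneously (largest first), reaching $(A_2,\,\mathcal{G}_{v_2})=v_2$. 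In every phase the chosen order is precisely what makes the touched element extremal in the relevant filter at the moment it is touched, so that every intermediate pair again consists of filters (hence is a genuine matching value term) and each move is a bona fide instance of (1), (2) or (3); e.g.\ in Phase~B, any lattice element below a to-be-removed $C$ that still lies in the current denominator must itself belong to $\mathcal{D}_1\setminus\mathcal{G}_{v_2}$ and has therefore already been removed, so $C$ is minimal in the current denominator. One also verifies, using $\mathcal{D}_2\ne\emptyset$, that the numerator stays a proper subset of the denominator throughout, so that no intermediate term collapses to $1$ prematurely (which would be a dead end).

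I expect the main obstacle to be this last step: the careful bookkeeping needed to check that the three phases can be concatenated into a single $R$-path along which well-formedness (being a pair of filters) never fails and along which each move is literally one of (1), (2), (3). The combinatorial description of $\le$ is the conceptual crux, but technically it is only a few limit computations with the ``all weights equal to $1$ except one'' family.
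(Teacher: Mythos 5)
Your proof is correct and follows essentially the same route as the paper's: characterize reachability under moves (1)--(3) by the condition ``numerator grows and denominator-minus-numerator shrinks'' (your $A_{v_1}\subseteq A_{v_2}$, $\mathcal{D}_{v_2}\subseteq\mathcal{D}_{v_1}$ is exactly the paper's $\mathcal{H}_1,\mathcal{H}_2,\mathcal{H}_3$ decomposition), and refute $v_1\le v_2$ by sending a single offending weight to infinity when that condition fails, just as the paper makes $\bar v_2$ arbitrarily small or $\bar v_1$ arbitrarily close to $1$. If anything you are more careful than the paper on the converse direction: the paper simply asserts that the move sequence exists, whereas your largest-first/smallest-first ordering of the three phases verifies that every intermediate term is a genuine pair of filters and never collapses to $1$.
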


\begin{proof}

Let $v_i = \cfrac{\sum_{C \in \mathcal{F}_i} w(C)}{\sum_{C \in \mathcal{G}_i} w(C)}$ with $\mathcal{F}_i \subseteq \mathcal{G}_i$ ($i=1,2$). If $v_2$ results from $v_1$ by a sequence of the operations (1), (2) and (3) above, we obviously get $\mathcal{F}_2 = \mathcal{F}_1 \cup \mathcal{H}_1 \cup \mathcal{H}_3$ and $\mathcal{G}_2 = (\mathcal{G}_1 \cup \mathcal{H}_3) \mathcal{H}_2$ subject to the conditions $\mathcal{H}_1 \subseteq \mathcal{G}_1 - \mathcal{H}_2$, $ \mathcal{H}_2 \subseteq \mathcal{G}_1 = \mathcal{F}_1$ and $\mathcal{H}_3 \cap \mathcal{G}_1 = \emptyset$. Thus, if $v_2$ does not result from $v_1$ by such a sequence of operations, we either find a $C \in \mathcal{G}_2 - \mathcal{F}_2$ with $C \notin \mathcal{G}_1$ or there is some $D \in \mathcal{F}_1$ with $D \notin \mathcal{F}_2 \cap \mathcal{G}_1$. In the first case $\bar{v}_2$ can be made arbitrarily small by a suitable substitution, in the second case $\bar{v}_1$ can be made arbitrarily large. Therefore, we must have $v_1 \not\le v_2$.

\end{proof}

With this characterisation of the partial order on $\mathbb{V}$ we can show that $\mathbb{V}$ is in fact a lattice.

\begin{theorem}\label{lem-v2}

$(\mathbb{V},\le)$ is a lattice with top element 1 and bottom element $\cfrac{w(\top)}{\sum_{C \in \mathcal{L}} w(C)}$. The join is given by $v_1 \sqcup v_2 = \cfrac{\sum_{C \in \mathcal{F}_1 \cup \mathcal{F}_2} w(C)}{\sum_{C \in (\mathcal{G}_1 \cup \mathcal{F}_2) \cap (\mathcal{F}_1 \cup \mathcal{G}_2)} w(C)}$, and the meet is given by $v_1 \sqcap v_2 = \cfrac{\sum_{C \in \mathcal{F}_1 \cap \mathcal{F}_2} w(C)}{\sum_{C \in \langle(\mathcal{G}_1 - \mathcal{F}_1) \cup (\mathcal{G}_2 - \mathcal{F}_2) \cup (\mathcal{F}_1 \cap \mathcal{F}_2)\rangle} w(C)}$, where $\langle S \rangle$ denotes the filter generated by the subset $S$.

\end{theorem}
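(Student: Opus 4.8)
The plan is to change coordinates for matching value terms so that the order of Lemma~\ref{lem-v1} becomes transparent, and then read off the operations. For every mvt fix its \emph{canonical representation} $v=\frac{\sum_{C\in\mathcal F}w(C)}{\sum_{C\in\mathcal G}w(C)}$ with filters $\mathcal F\subseteq\mathcal G$; for mvts different from $1$ the representation with $\mathcal F\subsetneq\mathcal G$ is unique, and the mvt $1$ we agree to write as $(\mathcal L,\mathcal L)$. Equivalently encode $v$ by the pair $(\mathcal F,\mathcal B)$ with $\mathcal B=\mathcal G-\mathcal F$, so that $\mathcal F$ is a filter, $\mathcal B$ is disjoint from $\mathcal F$, and $\mathcal F\cup\mathcal B$ is again a filter (equivalently $\mathcal B=\langle\mathcal B\rangle-\mathcal F$); the value is then $w(\mathcal F)/(w(\mathcal F)+w(\mathcal B))$. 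The first step is to restate the partial order in these coordinates: $v_1\le v_2$ iff $\mathcal F_1\subseteq\mathcal F_2$ and $\mathcal B_1\supseteq\mathcal B_2$. The direction ``$\Leftarrow$'' is immediate since $x\mapsto x/(x+y)$ increases in $x$ and decreases in $y$; for ``$\Rightarrow$'' one either unwinds operations (1)--(3) of Lemma~\ref{lem-v1} (each enlarges $\mathcal F$ and shrinks $\mathcal G-\mathcal F$) or argues directly that if $D\in\mathcal F_1-\mathcal F_2$ then $w(D)\to\infty$ forces $\bar v_1\to1$ while $\bar v_2$ stays $<1$, and symmetrically if $D\in\mathcal B_2-\mathcal B_1$.

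With this characterization the extremes are immediate. Every filter contains the greatest element $\top$ of $\mathcal L$, and every $\mathcal B$ is disjoint from its $\mathcal F$, so $\mathcal F\supseteq\langle\top\rangle=\{\top\}$ and $\mathcal B\subseteq\mathcal L-\{\top\}$ for all mvts; hence $(\langle\top\rangle,\mathcal L-\{\top\})$, i.e.\ $w(\top)/\sum_{C\in\mathcal L}w(C)$, is the least mvt, while $(\mathcal L,\emptyset)=1$ is the greatest.

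For the join I would show that $v_1\sqcup v_2$ is, in $(\mathcal F,\mathcal B)$-coordinates, $\bigl(\mathcal F_1\cup\mathcal F_2,\ (\mathcal B_1\cap\mathcal B_2)-(\mathcal F_1\cup\mathcal F_2)\bigr)$. By distributivity its denominator filter $(\mathcal F_1\cup\mathcal F_2)\cup(\mathcal B_1\cap\mathcal B_2)$ equals $(\mathcal G_1\cup\mathcal F_2)\cap(\mathcal F_1\cup\mathcal G_2)$, matching the stated formula, and this intersection of filters being a filter shows the pair is a genuine mvt. It is an upper bound since $\mathcal F_1\cup\mathcal F_2\supseteq\mathcal F_i$ and the second component is $\subseteq\mathcal B_1\cap\mathcal B_2\subseteq\mathcal B_i$; and it is least because any upper bound $(\mathcal F,\mathcal B)$ has $\mathcal F\supseteq\mathcal F_1\cup\mathcal F_2$ and $\mathcal B\subseteq\mathcal B_1\cap\mathcal B_2$, and $\mathcal B$ being disjoint from $\mathcal F\supseteq\mathcal F_1\cup\mathcal F_2$ forces $\mathcal B\subseteq(\mathcal B_1\cap\mathcal B_2)-(\mathcal F_1\cup\mathcal F_2)$. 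The meet is dual: the greatest lower bound must have $\mathcal F$-part $\mathcal F_1\cap\mathcal F_2$ (the meet in the lattice of filters) and $\mathcal B$-part as small as feasibility permits, which is $\langle\mathcal B_1\cup\mathcal B_2\rangle-(\mathcal F_1\cap\mathcal F_2)$; since $\langle\mathcal B_1\cup\mathcal B_2\cup(\mathcal F_1\cap\mathcal F_2)\rangle=\langle\mathcal B_1\cup\mathcal B_2\rangle\cup(\mathcal F_1\cap\mathcal F_2)$ this is exactly the denominator filter in the statement, and the glb property follows because any lower bound $(\mathcal F,\mathcal B)$ satisfies $\mathcal F\subseteq\mathcal F_1\cap\mathcal F_2$ and $\mathcal F\cup\mathcal B\supseteq\langle\mathcal B_1\cup\mathcal B_2\rangle$, whence $\mathcal B\supseteq\langle\mathcal B_1\cup\mathcal B_2\rangle-(\mathcal F_1\cap\mathcal F_2)$. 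Existence of all binary meets and joins, together with the poset structure from the definition of $\le$, gives that $\mathbb V$ is a lattice.

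The step I expect to be most delicate is the bookkeeping around representations: ensuring the formulas are applied to canonical representatives (in particular that $1$ is taken as $(\mathcal L,\mathcal L)$ rather than $(\langle\top\rangle,\langle\top\rangle)$, since the meet formula is not invariant under the choice for the value $1$), and checking in each case that the set expressions in the statement really define filters with numerator set contained in denominator set, so that the outputs are again mvts. Everything else is the routine dictionary between the numerator/denominator form and the $(\mathcal F,\mathcal B)$ form plus the monotonicity of $x\mapsto x/(x+y)$.
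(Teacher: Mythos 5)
Your proof is correct, but it takes a genuinely different route from the paper's. The paper argues operationally: it leans on the characterisation of $\le$ from Lemma~\ref{lem-v1} and, for each of the two formulas, traces which summands $w(C)$ are added to or removed from nominator and denominator, classifying each $C$ as inducing an operation of type (1), (2) or (3), and then argues leastness/greatestness by showing any other bound must contain or exclude the same summands --- a fairly delicate case analysis. Your re-coordinatisation $v\mapsto(\mathcal F,\mathcal B)$ with $\mathcal B=\mathcal G-\mathcal F$ replaces all of this with the componentwise characterisation $v_1\le v_2$ iff $\mathcal F_1\subseteq\mathcal F_2$ and $\mathcal B_1\supseteq\mathcal B_2$, which you prove directly from the monotonicity of $x/(x+y)$ together with the $w(D)\to\infty$ argument, so you do not even need Lemma~\ref{lem-v1}; join and meet then become componentwise union/intersection up to the filter-closure correction $\langle\cdot\rangle$ forced by the requirement that denominators be filters, and your set identities (the distributivity computation identifying $(\mathcal G_1\cup\mathcal F_2)\cap(\mathcal F_1\cup\mathcal G_2)$ with $(\mathcal F_1\cup\mathcal F_2)\cup(\mathcal B_1\cap\mathcal B_2)$, and $\langle\mathcal B_1\cup\mathcal B_2\cup(\mathcal F_1\cap\mathcal F_2)\rangle=\langle\mathcal B_1\cup\mathcal B_2\rangle\cup(\mathcal F_1\cap\mathcal F_2)$) check out. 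This buys a cleaner, more global picture ($\mathbb V$ as a sub-poset of a product of two lattices) and it surfaces a genuine, if minor, point the paper glosses over: the meet formula is not invariant under the choice of representative for the value $1$, so a convention such as your $(\mathcal L,\mathcal L)$ is needed. What the paper's route buys is continuity with the operational machinery of Lemma~\ref{lem-v1}, which is reused later (e.g.\ in Lemma~\ref{lem-mvt}). The only things I would ask you to write out are the explicit verification that your proposed meet is indeed a lower bound (it is: $\mathcal B_i\subseteq\langle\mathcal B_1\cup\mathcal B_2\rangle$ and $\mathcal B_i\cap\mathcal F_1\cap\mathcal F_2=\emptyset$) and the one-line uniqueness argument for canonical representatives of mvts different from $1$; neither is a gap.
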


\begin{proof}

The expressions for meet and join are well-defined, as the sum in the nominator always ranges over a subset of the range of the sum in the denominator.

Now let $v_i = \cfrac{\sum_{C \in \mathcal{F}_i} w(C)}{\sum_{C \in \mathcal{G}_i} w(C)}$ with $\mathcal{F}_i \subseteq \mathcal{G}_i$. For the join the summands $w(C)$ in the nominator of $v_1 \sqcup v_2$ not appearing in the nominator $v_i$ are those $C \in \mathcal{F}_j - \mathcal{F}_i$ for $\{i,j\} = \{1,2\}$. Among these, $C \in \mathcal{F}_j - \mathcal{G}_i$ define summands also added to the denominator of $v_1 \sqcup v_2$, i.e. each such $C$ defines an operation of type (3) above to increase $v_i$. The remaining $C \in (\mathcal{F}_j \cap \mathcal{G}_i) - \mathcal{F}_i$ define operations of type (1) above to increase $v_i$. Furthermore, $C \in \mathcal{G}_i - \mathcal{G}_j$ define those summands $w(C)$ in the denominator of $v_i$ that do not appear in the denominator of $v_1 \sqcup v_2$, so they define operations of type (2) above to increase $v_i$. That is, $v_1 \sqcup v_2$ results from $v_i$ by a sequence of operations of types (1), (2) and (3), which shows $v_i \le v_1 \sqcup v_2$.

Conversely, let $v_1, v_2 \le v^\prime$. As $v^\prime$ must result from $v_i$ by a sequence of operations of types (1), (2) and (3), it must contain all summands $w(C)$ with $C \in \mathcal{F}_1 \cup \mathcal{F}_2$ in its nominator, and these summands must then also appear in its denominator. Furthermore, as a consequence of increasing $v_i$ by operations of type (2), the denominator must not contain summands $w(C)$ with $C \in (\mathcal{G}_1 \cup \mathcal{G}_2) - (\mathcal{G}_1 \cap \mathcal{G}_2)$ unless $C \in \mathcal{F}_1 \cup \mathcal{F}_2$. That is, the denominator of $v^\prime$ only contains summands $w(C)$ with $C \in (\mathcal{G}_1 \cap \mathcal{G}_2) \cup \mathcal{F}_1 \cup \mathcal{F}_2$, which define the denominator of $v_1 \sqcup v_2$. so we obtain $v_1 \sqcup v_2 \le v^\prime$, which proves that the join is indeed defined by the equation above.

For the meet we can argue analogously. Comparing $v_i$ with $v_1 \sqcap v_2$ the former one contains additional summands $w(C)$ in its nominator with $C \in \mathcal{F}_i - \mathcal{F}_j$. Out of these, if $C \in \mathcal{F}_i \cap (\mathcal{G}_j - \mathcal{F}_j)$, then summands only appear in the nominator, which gives rise to an operation of type (1) increasing $v_1 \sqcap v_2$. If $C \in \mathcal{F}_i - \mathcal{G}_j$ holds, then the summand $w(C)$ has been added to both the nominator and denominator of $v_i$, which gives rise to an operation of type (3) to increase $v_1 \sqcap v_2$. If $C \in \mathcal{G}_j - \mathcal{G}_i - \mathcal{F}_j$ holds, then the summand $w(C)$ appears in the denominator of $v_1 \sqcap v_2$, but not in the denominator of $v_i$, which defines an operation of type (2) to increase $v_1 \sqcap v_2$. That is, $v_i$ results from $v_1 \sqcap v_2$ by a sequence of operations of type (1), (2) and (3) above, which shows $v_1 \sqcap v_2 \le v_i$.

Conversely, if $v^\prime \le v_1, v_2$ holds, then the nominator of $v^\prime$ can only contain summands $w(C)$ with $C \in \mathcal{F}_1 \cap \mathcal{F}_2$. If we remove all $C \in \mathcal{F}_i - \mathcal{F}_j$ also from the denominator $\mathcal{G}_i$, which corresponds to operation (3), we obtain $ \mathcal{G}_i - (\mathcal{F}_i - \mathcal{F}_j) = (\mathcal{G}_i - \mathcal{F}_i) \cup (\mathcal{G}_i \cap \mathcal{F}_j)$. Furthermore, operations of type (2) that decrease $v_i$ can only add summands to the denominator, so in total we get the union $(\mathcal{G}_1 - \mathcal{F}_1) \cup (\mathcal{G}_2 - \mathcal{F}_2) \cup (\mathcal{F}_1 \cap \mathcal{F}_2)$. However, operations of type (3) can only be applied, if nominator and denominator are defined by filters. This shows $v^\prime \le v_1 \sqcap v_2$, which proves that the meet is indeed defined by the equation above.

The statements concerning the top and bottom element in $\mathbb{V}$ with respect to $\le$ are obvious.

\end{proof}

\begin{figure}
\unitlength1.1cm
\begin{center}
\begin{picture}(10.5,9)
\thicklines
\put(5,9){\makebox(0,0){$1$}}
\put(2,7.5){\makebox(0,0){$\frac{a+b}{a+b+c}$}}
\put(4,7.5){\makebox(0,0){$\frac{a+c+d}{a+b+c+d}$}}
\put(7,7.5){\makebox(0,0){$\frac{a+b+c}{a+b+c+d}$}}
\put(9,7.5){\makebox(0,0){$\frac{a+b+c+d}{a+b+c+d+e}$}}
\put(1,6){\makebox(0,0){$\frac{a}{a+c}$}}
\put(2.5,6){\makebox(0,0){$\frac{a+b}{a+b+c+d}$}}
\put(4,6){\makebox(0,0){$\frac{a+c}{a+b+c}$}}
\put(5.5,6){\makebox(0,0){$\frac{a+c}{a+c+d}$}}
\put(7,6){\makebox(0,0){$\frac{a+b+c}{a+b+c+d+e}$}}
\put(9,6){\makebox(0,0){$\frac{a+c+d}{a+b+c+d+e}$}}
\put(2,4.5){\makebox(0,0){$\frac{a}{a+b}$}}
\put(4,4.5){\makebox(0,0){$\frac{a}{a+c+d}$}}
\put(5.25,4.5){\makebox(0,0){$\frac{a+c}{a+b+c+d}$}}
\put(1,3){\makebox(0,0){$\frac{a}{a+b+c}$}}
\put(7,3){\makebox(0,0){$\frac{a+c}{a+b+c+d+e}$}}
\put(2,1.5){\makebox(0,0){$\frac{a}{a+b+c+d}$}}
\put(4,1.5){\makebox(0,0){$\frac{a+b}{a+b+c+d+e}$}}
\put(4,0){\makebox(0,0){$\frac{a}{a+b+c+d+e}$}}
\qbezier(2.3,7.65)(3,8)(4.8,8.9)
\qbezier(4.2,7.7)(4.5,8,25)(4.9,8.9)
\qbezier(6.8,7.7)(6,8.25)(5.1,8.9)
\qbezier(8.7,7.65)(7,8.25)(5.2,8.9)
\qbezier(1.1,6.15)(1.5,6.75)(1.9,7.35)
\qbezier(2.45,6.15)(2.25,6.75)(2.05,7.35)
\qbezier(2.95,6.15)(4,6.5)(6.55,7.35)
\qbezier(4,6.2)(4,6.75)(4,7.3)
\qbezier(5.7,6.2)(6,6.5)(6.8,7.3)
\qbezier(7,6.2)(7,6.5)(7,7.3)
\qbezier(7.14,6.21)(8,6.75)(8.86,7.29)
\qbezier(8.5,6.15)(6.5,6.75)(4.5,7.35)
\qbezier(9,6.2)(9,6.75)(9,7.3)
\qbezier(2.2,4.65)(3,5.25)(3.8,5.85)
\qbezier(3.7,4.65)(3,5)(1.3,5.85)
\qbezier(3.8,4.7)(3.5,5)(2.7,5.8)
\qbezier(4.2,4.7)(4.5,5)(5.3,5.8)
\qbezier(5.075,4.65)(4.675,5.25)(4.175,5.85)
\qbezier(5.275,4.65)(5.375,5.25)(5.475,5.85)
\qbezier(1,3.2)(1,4.5)(1,5.8)
\qbezier(1.1,3.15)(1.5,3.75)(1.9,4.35)
\qbezier(6.825,3.15)(6.125,3.75)(5.425,4.35)
\qbezier(7,3.2)(7,4.5)(7,5.8)
\qbezier(7.1,3.15)(8,4.5)(8.9,5.85)
\qbezier(1.9,1.65)(1.5,2.25)(1.1,2.85)
\qbezier(2.1,1.65)(3,3)(3.9,4.35)
\qbezier(2.35,1.65)(3.7,3)(5.05,4.35)
\qbezier(3.9,1.8)(3.5,3)(2.6,5.7)
\qbezier(4.2,1.8)(5,3)(6.8,5.7)
\qbezier(3.8,0.15)(3,0.75)(2.2,1.35)
\qbezier(4,0.2)(4,0.75)(4,1.3)
\qbezier(4.2,0.2)(5,1)(6.8,2.8)
\end{picture}
\caption{\label{fig-mvt}The lattice of matching value terms for the lattice $\mathcal{L}$}
\end{center}
\end{figure}
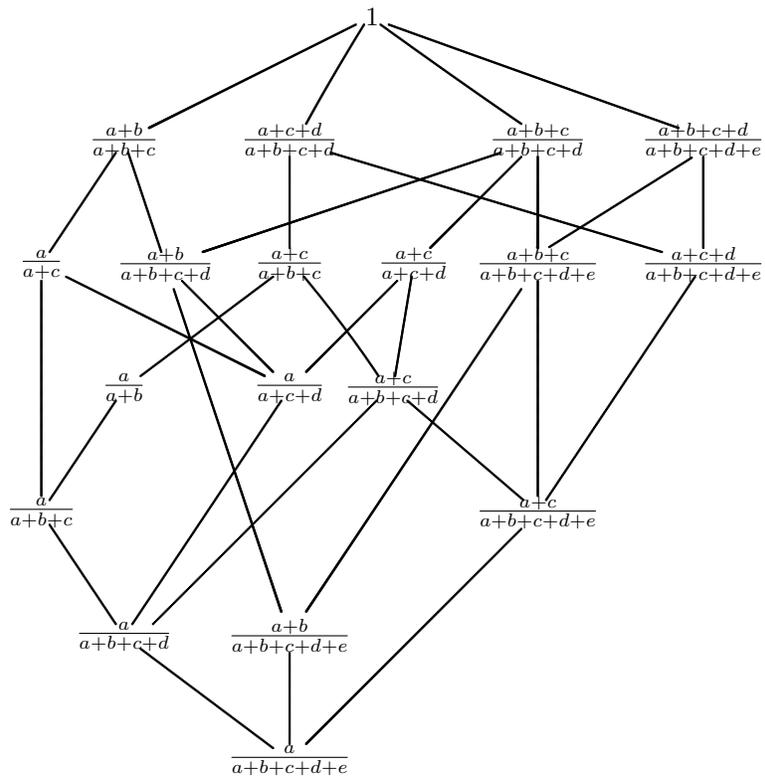

\begin{example}

\ Figure \ref{fig-mvt} shows the lattice $(\mathbb{V},\le)$ of matching value terms for the lattice $\mathcal{L}$ and the set of filters $\mathbb{F}$ from Example \ref{bsp-lattice}.

\end{example}

\begin{definition}\label{def-admissible}\rm

A relation $r \subseteq \mathbb{F} \times \mathbb{F}$ is called {\em admissible} iff the following conditions hold:

\begin{enumerate}

\item If $\mathcal{G} \subseteq \mathcal{F}$ holds, then $r(\mathcal{F},\mathcal{G})$ holds.

\item If $r(\mathcal{F},\mathcal{G})$ holds and $C \notin \mathcal{F}$, then also $r(\mathcal{F},\mathcal{G} - \{C\})$ holds.

\item If $r(\mathcal{F},\mathcal{G})$ holds, then also $r(\mathcal{F} \cup \{C\},\mathcal{G} \cup \{C\})$ holds for all $C \in \mathcal{L}$.

\end{enumerate}

\end{definition}

\noindent
Let us concentrate on a single admissible relation $r \subseteq \mathbb{F} \times \mathbb{F}$.

\begin{lemma}\label{lem-mvt}

Let $r \subseteq \mathbb{F} \times \mathbb{F}$ be an admissible relation. Define the set
\[ \mathcal{R} = \{ v \in \mathbb{V} \mid \exists \mathcal{F}, \mathcal{G} \in \mathbb{F} .\; v = mvt(\mathcal{F},\mathcal{G}) \wedge r(\mathcal{F},\mathcal{G}) \} \]

of matching value terms. Then $\mathcal{R}$ is a filter in the lattice $(\mathbb{V},\le)$.

\end{lemma}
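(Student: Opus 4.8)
The plan is to verify the two properties that make a subset of $(\mathbb{V},\le)$ a filter: that $\mathcal{R}$ is non-empty and that it is upward closed. Non-emptiness is immediate. Applying condition~(1) of Definition~\ref{def-admissible} with $\mathcal{F}=\mathcal{G}=\mathcal{L}$ (the whole lattice is itself a filter and $\mathcal{L}\subseteq\mathcal{L}$) gives $r(\mathcal{L},\mathcal{L})$, and $mvt(\mathcal{L},\mathcal{L})=1$ because the nominator and the denominator both sum over $\mathcal{L}$. Hence the top element $1$ of $(\mathbb{V},\le)$ lies in $\mathcal{R}$, and what remains is the upward closure: if $v\in\mathcal{R}$ and $v\le v'$ in $\mathbb{V}$, then $v'\in\mathcal{R}$.

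For the upward closure I would invoke Lemma~\ref{lem-v1}: the relation $v\le v'$ holds precisely when $v'$ is reachable from $v$ by a finite chain of the elementary operations~(1), (2), (3) listed just before that lemma (each changing a term by one summand $w(C)$ in the nominator, in the denominator, or in both). By induction on the length of the chain it suffices to show that a single such step keeps a term in $\mathcal{R}$. So fix $v\in\mathcal{R}$ with a witness, that is, filters $\mathcal{F},\mathcal{G}$ with $r(\mathcal{F},\mathcal{G})$ and $v=mvt(\mathcal{F},\mathcal{G})$, and put $\mathcal{A}=\mathcal{F}\cap\mathcal{G}$, so that in this expression the nominator sums over $\mathcal{A}$, the denominator sums over $\mathcal{G}$, and $\mathcal{A}\subseteq\mathcal{G}$. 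Suppose $v'$ arises from $v$ by one of the three operations. The organising idea is that an upward move of each kind can be matched, at the level of witnesses, by \emph{enlarging the given filter $\mathcal{F}$} or by \emph{shrinking the requested filter $\mathcal{G}$}---and these are exactly the moves permitted by conditions~(3) and~(2) of Definition~\ref{def-admissible}.

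In detail: if $v'$ is obtained by adjoining $w(C_0)$ to the nominator (operation~(1)), so that $C_0\in\mathcal{G}\setminus\mathcal{A}$ and $\mathcal{A}\cup\{C_0\}$ is a filter, take $(\mathcal{F}\cup\{C_0\},\mathcal{G})$ as the new witness. Here $\mathcal{F}\cup\{C_0\}$ is again a filter---since $\mathcal{A}\cup\{C_0\}$ is, every concept strictly above $C_0$ lies in $\mathcal{A}$ and hence in $\mathcal{F}$---condition~(3) gives $r(\mathcal{F}\cup\{C_0\},\mathcal{G})$ because $C_0\in\mathcal{G}$ makes $\mathcal{G}\cup\{C_0\}=\mathcal{G}$, and $(\mathcal{F}\cup\{C_0\})\cap\mathcal{G}=\mathcal{A}\cup\{C_0\}$, so $mvt(\mathcal{F}\cup\{C_0\},\mathcal{G})=v'$. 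If $v'$ is obtained by adjoining $w(C_0)$ to both nominator and denominator (operation~(3)), with $C_0\notin\mathcal{G}$, take $(\mathcal{F}\cup\{C_0\},\mathcal{G}\cup\{C_0\})$: it is $r$-related directly by condition~(3), is a pair of filters by the argument just given, and has $(\mathcal{F}\cup\{C_0\})\cap(\mathcal{G}\cup\{C_0\})=\mathcal{A}\cup\{C_0\}$, so its $mvt$ equals $v'$. Finally, if $v'$ is obtained by deleting $w(C_0)$ from the denominator (operation~(2)), with $C_0\in\mathcal{G}\setminus\mathcal{A}$ minimal in $\mathcal{G}$, then $C_0\notin\mathcal{F}$ (it lies in $\mathcal{G}$ but not in $\mathcal{F}\cap\mathcal{G}$), so condition~(2) gives $r(\mathcal{F},\mathcal{G}\setminus\{C_0\})$; since $C_0\notin\mathcal{A}$ the nominator is unaffected, so $(\mathcal{F},\mathcal{G}\setminus\{C_0\})$ witnesses $v'$. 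This closes the induction and, with the non-emptiness already noted, proves the lemma.

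I expect the main difficulty to be bookkeeping rather than anything deep. One has to keep the nominator index $\mathcal{A}=\mathcal{F}\cap\mathcal{G}$---on which the operations of Lemma~\ref{lem-v1} act---carefully apart from the witness filter $\mathcal{F}$, which may be a proper superset of $\mathcal{A}$. It is tempting to first replace the witness by $(\mathcal{F}\cap\mathcal{G},\mathcal{G})$, but that move is \emph{not} available: no condition of Definition~\ref{def-admissible} allows the given filter to be shrunk, and the whole argument works precisely because every upward step in $\mathbb{V}$ can be realised by admissible operations that only grow $\mathcal{F}$ or shrink $\mathcal{G}$. Once this is kept straight, the remaining obligations---that the new pairs really consist of filters, and that intersecting the (possibly enlarged) given filter with the (possibly enlarged) requested filter recovers the intended nominator---are routine.
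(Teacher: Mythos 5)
Your proof is correct and follows essentially the same route as the paper's: reduce $v\le v'$ to a single elementary step of type (1), (2) or (3) via Lemma~\ref{lem-v1}, then realise each step at the level of witness pairs using conditions (2) and (3) of Definition~\ref{def-admissible}. The extra bookkeeping you supply (non-emptiness via condition (1), verifying that the modified pairs are still filters, and keeping $\mathcal{F}\cap\mathcal{G}$ distinct from $\mathcal{F}$) only makes explicit what the paper's proof leaves tacit.
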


\begin{proof}

\ Let $v_1 \in \mathcal{R}$, say $v_1 = mvt(\mathcal{F}_1,\mathcal{G}_1)$ such that $r(\mathcal{F}_1,\mathcal{G}_1)$ holds. Let $v_1 \le v_2$ for some other mvt $v_2 \in \mathbb{V}$. Without loss of generality we can assume that $v_1$ and $v_2$ differ by one of the three possible cases (1), (2) and (3) used for Lemma \ref{lem-v1}. We have to show $v_2 \in \mathcal{R}$.

\begin{enumerate}

\item In this case $v_2$ differs from $v_1$ by a summand $w(C)$ added to the nominator, i.e. $v_2 = mvt(\mathcal{F}_2,\mathcal{G}_1)$ with $\mathcal{F}_2 = \mathcal{F}_1 \cup \{ C \}$ for $C \in \mathcal{G}_1 - \mathcal{F}_1$. Then $r(\mathcal{F}_2,\mathcal{G}_1)$ holds due to property (3) of Definition \ref{def-admissible}, which gives $v_2 \in \mathcal{R}$ in this case.

\item In this case $v_1$ differs from $v_2$ by a summand $w(C)$ added to the denominator, i.e. $v_2 = mvt(\mathcal{F}_1,\mathcal{G}_2)$ with $\mathcal{G}_2 = \mathcal{G}_1 - \{ C \}$ for some $C \notin \mathcal{F}_1$. Then $r(\mathcal{F}_1,\mathcal{G}_2)$ holds due to property (2) of Definition \ref{def-admissible}, which gives $v_2 \in \mathcal{R}$ also in this case.

\item In this case $v_2$ differs from $v_1$ by a summand $w(C)$ added to both the nominator and the denominator, i.e. $v_2 = mvt(\mathcal{F}_2,\mathcal{G}_2)$ with $\mathcal{G}_2 = \mathcal{G}_1 \cup \{ C \}$ and $\mathcal{F}_2 = \mathcal{F}_1 \cup \{ C \}$ for some $C \notin \mathcal{G}_1$. Again, due to property (3) of Definition \ref{def-admissible} we obtain $r(\mathcal{F}_2,\mathcal{G}_2)$, which gives $v_2 \in \mathcal{R}$ and completes the proof.

\end{enumerate}

\end{proof}

\section{Learning Matching Measures from User-Defined Matchings}\label{sec:learning}

In the previous Section \ref{sec:matching} we developed a general theory of matching exploiting filters in lattices. Such lattices can be derived from knowledge bases as shown in Section \ref{sec:dl}. We now address the problem how to learn a matching measure from matching values that are given by a human domain expert. For this let $(\mathcal{L},\le)$ be a finite lattice, and let $\mathbb{F}$ denote the set of all its filters. Note that each filter $\mathcal{F} \in \mathbb{F}$ is uniquely determined by its minimal elements, so we can write $\mathcal{F} = \langle C_1 ,\dots, C_k \rangle$. The matching knowledge of a human expert can be represented be a mapping $h: \mathbb{F} \times \mathbb{F} \rightarrow [0,1]$. Though human experts will hardly ever provide complete information, we will assume in the sequel that $h$ is total. 

However, the matching values as such are merely used to determine rankings, whereas their concrete value is of minor importance. Therefore, instead of asking whether there exists a matching measure $\mu$ on $\mathbb{F}$ such that $\mu(\mathcal{F},\mathcal{G}) = h(\mathcal{F},\mathcal{G})$ holds for all pairs of filters, we investigate the slightly weaker problem to find a {\em ranking-preserving} matching measure $\mu$ on $\mathbb{F}$, i.e. the matching measure should imply the same rankings. 

At least this is the case for rankings with respect to a fixed requested profile. For a fixed given profile a bit more care is needed, as the following example shows.

\begin{example}

Consider the following example from the recruiting domain. Let
\begin{align*}
\mathcal{F} &= \{ \text{Skill}, \text{Roman\_language}, \text{Programming} \} ,\\
\mathcal{G}_1 &= \{ \text{Skill}, \text{Roman\_language}, \text{Italian} \} \qquad \text{and} \\
\mathcal{G}_2 &= \{ \text{Skill}, \text{Programming}, \text{Java} \} .
\end{align*}

It makes perfectly sense to rank {\em given} profiles $\mathcal{G}_1$ and $\mathcal{G}_2$ with respect to a {\em requested} profile $\mathcal{F}$, i.e. to consider $h(\mathcal{G}_1,\mathcal{F})$ and $h(\mathcal{G}_2,\mathcal{F})$ and to preserve a ranking between these two in a weighted matching measure, even if such a requested profile appears to be rather odd. 

However, if $\mathcal{F}$ is a given profile (which may not be unrealistic), then it appears doubtful, if a ranking for $\mathcal{G}_1$ (emphasising language skills) and $\mathcal{G}_2$ (emphasising programming skills) makes sense at all.

\end{example}

Therefore, it seems plausible to classify profiles with respect to their {\em relevance} for $\mathcal{F}$ using an equivalence relation $\sim_{\mathcal{F}}$ on $\mathbb{F}$ defined as $\mathcal{G}_1 \sim_{\mathcal{F}} \mathcal{G}_2$ iff $\mathcal{F} \cap \mathcal{G}_1 = \mathcal{F} \cap \mathcal{G}_2$. In doing so our claim becomes that rankings by $h$ should be preserved for a given profile $\mathcal{F}$ in {\em relevance classes} for $\mathcal{F}$.

\begin{definition}\label{def-ranking}\rm

A matching measure $\mu$ on $\mathbb{F}$ is called {\em ranking-preserving} with respect to $h: \mathbb{F} \times \mathbb{F} \rightarrow [0,1]$ if for all filters the following two conditions hold:

\begin{enumerate}

\item $\mu(\mathcal{F}_1,\mathcal{G}) > \mu(\mathcal{F}_2,\mathcal{G})$ holds, whenever $h(\mathcal{F}_1,\mathcal{G}) > h(\mathcal{F}_2,\mathcal{G})$ holds;

\item $\mu(\mathcal{F}, \mathcal{G}_1) > \mu(\mathcal{F}, \mathcal{G}_2)$ holds, whenever $h(\mathcal{F}, \mathcal{G}_1) > h(\mathcal{F}, \mathcal{G}_2)$ holds, provided that $\mathcal{G}_1$ and $\mathcal{G}_2$ are in the same relevance class with respect to $\mathcal{F}$, i.e. $\mathcal{G}_1 \sim_{\mathcal{F}} \mathcal{G}_2$.

\end{enumerate}

\end{definition}

Thus, this section is dedicated to finding conditions for $h$ that guarantee the existence of a ranking-preserving matching measure $\mu$ with respect to $h$.

\subsection{Plausibility Constraints}

We are looking for plausibility constraints for the mapping $h$ that should be satisfied in the absence of bias, i.e. the assessment of the human expert is not grounded in hidden concepts. If such plausibility conditions are satisfied we explore the existence of a ranking-preserving matching measure $\mu$. First we show the following simple lemma.

\begin{lemma}\label{lem-plausible}

Let $\mu$ be a matching measure on $\mathbb{F}$. Then for all filters $\mathcal{F}, \mathcal{F}_1$, $\mathcal{F}_2, \mathcal{G} \in \mathbb{F}$ the following conditions hold, provided that the arguments of $\mu$ are filters:

\begin{enumerate}

\item $\mu(\mathcal{F},\mathcal{G}) =1$ for $\mathcal{G} \subseteq \mathcal{F}$.

\item $\mu(\mathcal{F},\mathcal{G}) = \mu(\mathcal{F} \cap \mathcal{G}, \mathcal{G})$.

\item $\mu(\mathcal{F},\mathcal{G}) < \mu(\mathcal{F},\mathcal{G} - \{C\})$ holds for $C \in \mathcal{G}\setminus \mathcal{F}$.

\item $\mu(\mathcal{F},\mathcal{G}) \le \mu(\mathcal{F} \cup \{C\},\mathcal{G} \cup \{C\})$.

\item If $\mu(\mathcal{F}_1, \mathcal{G}) < \mu(\mathcal{F}_2, \mathcal{G})$ holds, then $\mu(\mathcal{F}_1 \cup \{ C \}, \mathcal{G}) < \mu(\mathcal{F}_2 \cup \{ C \}, \mathcal{G})$ holds for every $C \in \mathcal{G} - \mathcal{F}_1 - \mathcal{F}_2$.

\item If $\mathcal{F} \cap \mathcal{F}_1 \cap \mathcal{G} = \mathcal{F} \cap \mathcal{F}_2 \cap \mathcal{G}$ holds, then $\mu(\mathcal{F}_1, \mathcal{G}) > \mu(\mathcal{F}_2, \mathcal{G}) \Leftrightarrow \mu(\mathcal{F}, \mathcal{F}_1 \cap \mathcal{G}) < \mu(\mathcal{F}, \mathcal{F}_2 \cap \mathcal{G})$.

\item If $\mu(\mathcal{F}, \mathcal{G}_1) < \mu(\mathcal{F}, \mathcal{G}_2)$ holds, then for every $C \in \mathcal{G}_1 \cap \mathcal{G}_2$ also $\mu(\mathcal{F} \cup \{ C \}, \mathcal{G}_1) < \mu(\mathcal{F} \cup \{ C \}, \mathcal{G}_2)$ holds, provided that $\mathcal{F} \cap \mathcal{G}_1 = \mathcal{F} \cap \mathcal{G}_2$ holds.

\item If $\mu(\mathcal{F}_1,\mathcal{G}) < \mu(\mathcal{F}_2,\mathcal{G})$ and $\mathcal{G}^\prime \cap \mathcal{F}_i = \mathcal{G} \cap \mathcal{F}_i$ hold, then also $\mu(\mathcal{F}_1,\mathcal{G}^\prime) < \mu(\mathcal{F}_2,\mathcal{G}^\prime)$ holds.

\end{enumerate}

\end{lemma}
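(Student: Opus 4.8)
The plan is to work throughout with the explicit representation $\mu(\mathcal{F},\mathcal{G}) = w(\mathcal{F}\cap\mathcal{G})/w(\mathcal{G})$, where I abbreviate $w(\mathcal{H}) = \sum_{C\in\mathcal{H}} w(C)$ for $\mathcal{H}\subseteq\mathcal{L}$ and $w$ is the (strictly positive) weighting function underlying $\mu$ in Definition \ref{def-matching}. Two observations drive everything. First, since $\mathcal{L}$ is finite it has a greatest element $\top$, and every nonempty upward-closed filter contains it; as the weights are positive, $w(\mathcal{F}\cap\mathcal{G})\ge w(\top)>0$ for all filters, so no numerator or denominator occurring below ever vanishes, which is exactly what is needed to keep the relevant inequalities strict. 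Second, when two matching values being compared share the \emph{same denominator}, the comparison reduces to comparing the two numerators; when they share the \emph{same numerator}, it reduces to comparing denominators in reverse order. Most items fall out of arranging one of these two situations.

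Items (1) and (2) are immediate from the formula: $\mathcal{G}\subseteq\mathcal{F}$ forces $\mathcal{F}\cap\mathcal{G}=\mathcal{G}$ hence value $1$, and $(\mathcal{F}\cap\mathcal{G})\cap\mathcal{G}=\mathcal{F}\cap\mathcal{G}$ gives (2) ($\mathcal{F}\cap\mathcal{G}$ being again a filter). For (3), with $C\in\mathcal{G}\setminus\mathcal{F}$ we have $\mathcal{F}\cap(\mathcal{G}-\{C\})=\mathcal{F}\cap\mathcal{G}$, so the (positive) numerator is unchanged while the denominator drops by $w(C)>0$, and the value strictly increases. For (4) I will use the elementary mediant inequality ``$0<a\le b,\ c>0 \Rightarrow a/b\le (a+c)/(b+c)$'' together with the identity $(\mathcal{F}\cup\{C\})\cap(\mathcal{G}\cup\{C\})=(\mathcal{F}\cap\mathcal{G})\cup\{C\}$ and a short case split on whether $C$ lies in $\mathcal{F}$, in $\mathcal{G}$, in both, or in neither: in every case the new value arises from the old one by adding $w(C)$ to the numerator, to the denominator, to both, or to nothing, none of which decreases the fraction.

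Items (5), (7), (8) are the cleanest instances of the ``common part'' principle. In (5) both sides have denominator $w(\mathcal{G})$, so the hypothesis reads $w(\mathcal{F}_1\cap\mathcal{G})<w(\mathcal{F}_2\cap\mathcal{G})$; since $C\in\mathcal{G}-\mathcal{F}_1-\mathcal{F}_2$ we get $(\mathcal{F}_i\cup\{C\})\cap\mathcal{G}=(\mathcal{F}_i\cap\mathcal{G})\cup\{C\}$ with $C$ fresh, so each numerator grows by the same $w(C)$ and the strict inequality persists. In (7) the proviso $\mathcal{F}\cap\mathcal{G}_1=\mathcal{F}\cap\mathcal{G}_2$ makes the numerators of $\mu(\mathcal{F},\mathcal{G}_j)$ equal, so the inequality is equivalent to $w(\mathcal{G}_1)>w(\mathcal{G}_2)$; for $C\in\mathcal{G}_1\cap\mathcal{G}_2$ the sets $(\mathcal{F}\cup\{C\})\cap\mathcal{G}_j=(\mathcal{F}\cap\mathcal{G}_j)\cup\{C\}$ still coincide for $j=1,2$ and the denominators $w(\mathcal{G}_j)$ are untouched, giving the claim. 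In (8) the condition $\mathcal{G}'\cap\mathcal{F}_i=\mathcal{G}\cap\mathcal{F}_i$ yields $\mu(\mathcal{F}_i,\mathcal{G}')=w(\mathcal{F}_i\cap\mathcal{G})/w(\mathcal{G}')$, so the numerators are the ones from $\mu(\mathcal{F}_i,\mathcal{G})$ and the new denominator $w(\mathcal{G}')$ is common, transferring the strict inequality verbatim. Item (6) is the ``same numerator'' variant: $\mathcal{F}\cap\mathcal{F}_1\cap\mathcal{G}=\mathcal{F}\cap\mathcal{F}_2\cap\mathcal{G}$ makes the numerator of $\mu(\mathcal{F},\mathcal{F}_i\cap\mathcal{G})$ a common positive constant $k$, so $\mu(\mathcal{F},\mathcal{F}_1\cap\mathcal{G})<\mu(\mathcal{F},\mathcal{F}_2\cap\mathcal{G})$ iff $w(\mathcal{F}_1\cap\mathcal{G})>w(\mathcal{F}_2\cap\mathcal{G})$ iff $\mu(\mathcal{F}_1,\mathcal{G})>\mu(\mathcal{F}_2,\mathcal{G})$ (the last equivalence using that these share denominator $w(\mathcal{G})$), which gives the asserted ``$\Leftrightarrow$''.

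I do not expect a genuine obstacle here: the work is bookkeeping. The two care points are checking in each item that the singletons added or removed keep the arguments filters (which the statement of the lemma already stipulates via ``provided that the arguments of $\mu$ are filters''), and tracking strictness — ensuring that the numerators required to be positive really are, which is precisely what the remark ``$\top$ lies in every filter, and $w(\top)>0$'' secures. Everything else is the two reduction principles plus the mediant inequality for (4).
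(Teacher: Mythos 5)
Your proposal is correct and follows essentially the same route as the paper's proof: direct computation with $\mu(\mathcal{F},\mathcal{G})=w(\mathcal{F}\cap\mathcal{G})/w(\mathcal{G})$, reduction to comparing numerators or denominators when the other is shared (which handles (1), (2), (5), (6), (7), (8) exactly as the paper does), the numerator-fixed/denominator-shrinks argument for (3), and the mediant inequality $a/b\le(a+c)/(b+c)$ for (4). One cosmetic slip: in your case list for (4) the option ``$w(C)$ added to the denominator only'' would in fact decrease the fraction, but that case never arises, since whenever $C\notin\mathcal{G}$ the element $C$ enters $(\mathcal{F}\cup\{C\})\cap(\mathcal{G}\cup\{C\})$ as well, so numerator and denominator grow together.
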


\begin{proof}

Properties (1), (2), (5) and (8) are obvious from the Definition \ref{def-matching} of matching measures. For property (6) both sides of the equivalence are equivalent to $w(\mathcal{F}_1 \cap \mathcal{G}) > w(\mathcal{F}_2 \cap \mathcal{G})$.

For property (3)  we have
\[ \mu(\mathcal{F},\mathcal{G}) = \cfrac{w(\mathcal{F} \cap \mathcal{G})}{w(\mathcal{G} - \{C\}) + w(C)} < \cfrac{w(\mathcal{F} \cap (\mathcal{G} - \{C\}))}{w(\mathcal{G} - \{C\})} = \mu(\mathcal{F},\mathcal{G} - \{C\}) \; . \]

For property (4) the case $C \in \mathcal{F}$ is trivial. In case $C \in \mathcal{G} - \mathcal{F}$ holds, we get
\[ \mu(\mathcal{F},\mathcal{G}) = \cfrac{w(\mathcal{F} \cap \mathcal{G})}{w(\mathcal{G})} \le \cfrac{w(\mathcal{F} \cap \mathcal{G}) + w(C)}{w(\mathcal{G})} = \mu(\mathcal{F} \cup \{C\},\mathcal{G} \cup \{C\}) \; . \]

In case $C \notin \mathcal{G}$ first note that for any values $a,b,c$ with $a \le b$ we get $ab + ac \le ab + bc$ and thus $\cfrac{a}{b} \le \cfrac{a+c}{b+c}$. 

Thus, we get 
$\mu(\mathcal{F},\mathcal{G}) = \cfrac{w(\mathcal{F} \cap \mathcal{G})}{w(\mathcal{G})} \le \cfrac{w(\mathcal{F} \cap \mathcal{G}) + w(C)}{w(\mathcal{G}) + w(C)} = \mu(\mathcal{F} \cup \{C\},\mathcal{G} \cup \{C\})$.

For property (7) assume that we have $\mu(\mathcal{F},\mathcal{G}_1) < \mu(\mathcal{F},\mathcal{G}_2)$. This is equivalent with $w(\mathcal{G}_1)>w(\mathcal{G}_2)$ by the definition of $\mu$ and $\mathcal{F} \cap \mathcal{G}_1 = \mathcal{F} \cap \mathcal{G}_2$. On the other hand, for $C \in \mathcal{G}_1 \cap \mathcal{G}_2$, $(\mathcal{F}\cup \{ C \}) \cap \mathcal{G}_1 = (\mathcal{F}\cup \{ C \}) \cap \mathcal{G}_2$ also holds.

That is, we obtain $\mu(\mathcal{F} \cup \{ C \}, \mathcal{G}_1) < \mu(\mathcal{F} \cup \{ C \}, \mathcal{G}_2)$ as claimed.

\end{proof}

Informally phrased property (1) states that whenever all requirements in a requested profile $\mathcal{G}$ (maybe even more) are satisfied by a given profile $\mathcal{F}$, then $\mathcal{F}$ is a perfect match for $\mathcal{G}$. Property (2) states that the matching value indicating how well the given profile $\mathcal{F}$ fits to the requested one $\mathcal{G}$ only depends on $\mathcal{F} \cap \mathcal{G}$, i.e. the properties in the given profile that are relevant for the requested one. Property (3) states that if a requirement not satisfied by a given profile $\mathcal{F}$ is removed from the requested profile $\mathcal{G}$, the given profile will become a better match for the restricted profile. Property (4) covers two cases. If $C \in \mathcal{G}$ holds, then simply the profile $\mathcal{F} \cup \{C\}$ satisfies more requirements than $\mathcal{F}$, so the matching value should increase. The case $C \notin \mathcal{G}$ is a bit more tricky, as the profile $\mathcal{G} \cup \{C\}$ contains an additional requirement, which is satisfied by the enlarged profile $\mathcal{F} \cup \{C\}$. In this case the matching value should increase, because the percentage of requirements that are satisfied increases. Property (5) states that the given profile $\mathcal{F}_1$ is better suited for the required profile $\mathcal{G}$ than the given profile $\mathcal{F}_2$, then adding a new required property $C$ to both given profiles preserves the inequality between the two matching values. Property (6) states that if the given profile $\mathcal{F}_1$ is better suited for the required profile $\mathcal{G}$ than the given profile $\mathcal{F}_2$, then relative to $\mathcal{G}$ the profile $\mathcal{F}_2$ is less over-qualified than $\mathcal{F}_1$ for any other required profile $\mathcal{F}$, provided the intersections of $\mathcal{F}\cap \mathcal{G}$ with the two given profiles coincide. Property (7) states that if $\mathcal{F}$ fits better to $\mathcal{G}_2$ than to $\mathcal{G}_1$ and the relevance of $\mathcal{F}$ with respect to $\mathcal{G}_1$ and $\mathcal{G}_2$ (expressed by the intersection) is the same, then adding property $C$ to $\mathcal{F}$ that is requested in both $\mathcal{G}_1$ and $\mathcal{G}_2$ preserves this dependency, i.e. $\mathcal{F} \cup \{ C \}$ fits better to $\mathcal{G}_2$ than to $\mathcal{G}_1$. Property (8) states that if $\mathcal{F}_2$ fits better to $\mathcal{G}$ than $\mathcal{F}_1$, then this is also the case for any other requested filter $\mathcal{G}^\prime$ that preserves the relevance of $\mathcal{G}$ for both filters $\mathcal{F}_1$ and $\mathcal{F}_2$.

Thus, disregarding for the moment our theory of matching measures, all eight properties in Lemma \ref{lem-plausible} appear to be reasonable. Therefore, we require them as {\em plausibility constraints} that a human-defined mapping $h: \mathbb{F} \times \mathbb{F} \rightarrow [0,1]$ should satisfy.

\begin{definition}\label{def-plausible}

A function $h: \mathbb{F} \times \mathbb{F} \rightarrow [0,1]$ {\em satisfies the plausibility constraints}, if the following conditions are satisfied, provided that the arguments of $h$ are filters:

\begin{enumerate}

\item $h(\mathcal{F},\mathcal{G}) =1$ for $\mathcal{G} \subseteq \mathcal{F}$, 

\item $h(\mathcal{F},\mathcal{G}) = h(\mathcal{F} \cap \mathcal{G}, \mathcal{G})$,

\item $h(\mathcal{F},\mathcal{G}) < h(\mathcal{F},\mathcal{G} - \{C\})$ for any concept $C \in \mathcal{G}\setminus\mathcal{F}$, and 

\item $h(\mathcal{F},\mathcal{G}) \le h(\mathcal{F} \cup \{C\},\mathcal{G} \cup \{C\})$ for any concept $C$.

\item If $h(\mathcal{F}_1, \mathcal{G}) < h(\mathcal{F}_2, \mathcal{G})$ holds, then $h(\mathcal{F}_1 \cup \{ C \}, \mathcal{G}) < h(\mathcal{F}_2 \cup \{ C \}, \mathcal{G})$ holds for every $C \in \mathcal{G} - \mathcal{F}_1 - \mathcal{F}_2$.

\item If $\mathcal{F} \cap \mathcal{F}_1 \cap \mathcal{G} = \mathcal{F} \cap \mathcal{F}_2 \cap \mathcal{G}$ holds, then $h(\mathcal{F}_1, \mathcal{G}) > h(\mathcal{F}_2, \mathcal{G}) \Leftrightarrow h(\mathcal{F}, \mathcal{F}_1 \cap \mathcal{G}) < h(\mathcal{F}, \mathcal{F}_2 \cap \mathcal{G})$.

\item If $h(\mathcal{F}, \mathcal{G}_1) < h(\mathcal{F}, \mathcal{G}_2)$ holds, then for every $C \in \mathcal{G}_1 \cap \mathcal{G}_2$ also $h(\mathcal{F} \cup \{ C \}, \mathcal{G}_1) < h(\mathcal{F} \cup \{ C \}, \mathcal{G}_2)$ holds, provided that $\mathcal{F} \cap \mathcal{G}_1 = \mathcal{F} \cap \mathcal{G}_2$ holds.

\item If $h(\mathcal{F}_1,\mathcal{G}) < h(\mathcal{F}_2,\mathcal{G})$ and $\mathcal{G}^\prime \cap \mathcal{F}_i = \mathcal{G} \cap \mathcal{F}_i$ hold, then also $h(\mathcal{F}_1,\mathcal{G}^\prime) < h(\mathcal{F}_2,\mathcal{G}^\prime)$ holds.

\end{enumerate}

\end{definition}

Note that condition (5) in Definition \ref{def-plausible} also implies a reverse implication, i.e. $h(\mathcal{F}_1 \cup \{ C \}, \mathcal{G}) < h(\mathcal{F}_2 \cup \{ C \}, \mathcal{G})$ implies $h(\mathcal{F}_1, \mathcal{G}) \le h(\mathcal{F}_2, \mathcal{G})$.
Also note that (6) and (4) imply (2) by  substitution into (6) of filters in (2) as follows. $\mathcal{F}_1=\mathcal{F}$, $\mathcal{F}_2=\mathcal{F} \cap \mathcal{G}$, $\mathcal{G}=\mathcal{G}$, $\mathcal{F}=\top$. (6) gives
\[
h(\mathcal{F},\mathcal{G})>h(\mathcal{F} \cap \mathcal{G},\mathcal{G})\Longleftrightarrow h(\top,\mathcal{F} \cap \mathcal{G})< h(\top,\mathcal{F} \cap \mathcal{G}),
\]
thus implying $h(\mathcal{F},\mathcal{G})\le h(\mathcal{F} \cap \mathcal{G},\mathcal{G})$. The inequality in the other direction is a straightforward corollary of (4).

\begin{example}

Take a lattice with top element $A$ and direct successors $B, C, D, E$. Assume that we have $h(\{A,B,E\},\{A,B,C,D\}) < h(\{A,C,E\},\{A,B,C,D\})$. Then plausibility constraint (4) implies that $h(\{A,B,E\},\{A,B,C,D\}) < h(\{A,B,D,E\},\{A,B,C,D\})$ and $h(\{A,C,E\},\{A,B,C,D\}) \le h(\{A,C,D,E\},\{A,B,C,D\})$ hold. Furthermore, plausibility constraint (4) implies that also 
\[ h(\{A,B,D,E\},\{A,B,C,D\}) \le h(\{A,C,D,E\}, \{A,B,C,D\}) \]
must hold.

\end{example}

\subsection{Linear Inequalities}

Let $h: \mathbb{F} \times \mathbb{F} \rightarrow [0,1]$ be a human-defined function that satisfies the plausibility constraints. Assume the lattice $\mathcal{L}$ contains $n+2$ elements $C_0 ,\dots, C_{n+1}$ with top- and bottom elements $C_0$ and $C_{n+1}$, respectively. From this we will now derive a set of linear inequalities of the form $\sum\limits_{x \in U} x < \sum\limits_{x \in V} x$, where the elements in $U$ and $V$ correspond to $C_1 ,\dots, C_n$.

\begin{lemma}\label{lem-plausible-order}

Let $h: \mathbb{F} \times \mathbb{F} \rightarrow [0,1]$ be a human-defined function that satisfies the plausibility constraints. Then there exists a partial order $\prec$ on the set of terms $\Sigma_I = \{ \sum_{i \in I} x_i \mid I \subseteq \{ 1,\dots,n \} \}$ and a mapping $\Phi : \mathbb{F} \rightarrow \{ \Sigma_I \mid I \subseteq \{ 1,\dots,n \} \}$ such that the following conditions hold:

\begin{enumerate}

\item For all filters $\mathcal{F}_1$, $\mathcal{F}_2$ and $\mathcal{G}$ we have $h(\mathcal{F}_1,\mathcal{G}) < h(\mathcal{F}_2,\mathcal{G})$ implies $\Phi(\mathcal{F}_1 \cap \mathcal{G}) \prec \Phi(\mathcal{F}_2 \cap \mathcal{G})$;

\item For all filters $\mathcal{F}$, $\mathcal{G}_1$, $\mathcal{G}_2$ with $\mathcal{F} \cap \mathcal{G}_1 = \mathcal{F} \cap \mathcal{G}_2$ we have $h(\mathcal{F},\mathcal{G}_1) < h(\mathcal{F},\mathcal{G}_2)$ implies $\Phi(\mathcal{G}_2) \prec \Phi(\mathcal{G}_1)$;

\item For $J \subset I$ we have $\Sigma_J \prec \Sigma_I$;

\item For $\Sigma_J \prec \Sigma_I$ and $k\not\in J\cup I$ we also have $\Sigma_J + x_k \prec \Sigma_I +x_k$.

\end{enumerate}

\end{lemma}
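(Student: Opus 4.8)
The plan is to read the partial order $\prec$ and the map $\Phi$ off the human data $h$ by collecting exactly the comparisons that conditions (1)--(4) demand, and then to show that the relation so obtained is acyclic; this last point is the only real difficulty. For a filter $\mathcal{F}$ I set $\Phi(\mathcal{F}) = \Sigma_{I(\mathcal{F})}$ with $I(\mathcal{F}) = \{\, i : 1\le i\le n,\ C_i\in\mathcal{F}\,\}$, i.e.\ $\Phi$ records which of the intermediate concepts $C_1,\dots,C_n$ belong to $\mathcal{F}$; the top concept $C_0$ lies in every filter and the bottom concept $C_{n+1}$ in none of the filters arising as representing filters of profiles, so on the relevant filters $\Phi$ is injective, and its image consists of those $\Sigma_I$ for which $\{C_i : i\in I\}\cup\{C_0\}$ is already $\le$-upward closed.

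Next I normalise the $h$-data. By plausibility constraint (2), $h(\mathcal{F}_i,\mathcal{G}) = h(\mathcal{F}_i\cap\mathcal{G},\mathcal{G})$, and by plausibility constraint (8) the truth value of $h(\mathcal{F}_1,\mathcal{G}) < h(\mathcal{F}_2,\mathcal{G})$ then depends only on the filters $\mathcal{F}_1\cap\mathcal{G}$ and $\mathcal{F}_2\cap\mathcal{G}$; so for filters $A,B$ I write $A\lhd B$ to mean $h(A,A\cup B) < h(B,A\cup B)$. Two applications of plausibility (8), lifting two instances to the common super-filter $A\cup B\cup D$ and back, show $\lhd$ is transitive, hence a strict partial order on $\mathbb{F}$; note $\lhd$ need not relate $\Phi$-images in any way visible from $\subseteq$ alone, which is precisely what makes the lemma non-trivial. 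Substituting $\mathcal{G}:=\mathcal{G}_1\cup\mathcal{G}_2$ into plausibility (6) turns every admissible instance of condition (2), i.e.\ $h(\mathcal{F},\mathcal{G}_1) < h(\mathcal{F},\mathcal{G}_2)$ with $\mathcal{F}\cap\mathcal{G}_1 = \mathcal{F}\cap\mathcal{G}_2$, into $\mathcal{G}_2\lhd\mathcal{G}_1$; and iterating plausibility (3) gives $\mathcal{G}'\lhd\mathcal{G}$ whenever $\mathcal{G}'\subsetneq\mathcal{G}$ are filters with distinct traces on $\{C_1,\dots,C_n\}$.

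I then define $\prec$ to be the smallest relation on $\{\Sigma_I : I\subseteq\{1,\dots,n\}\}$ that contains every pair $(\Phi(A),\Phi(B))$ with $A\lhd B$ and every pair $(\Sigma_J,\Sigma_I)$ with $J\subsetneq I$, and is closed under transitivity and under the shift rule of condition (4). Since the carrier set is finite this least relation exists and is obtained by iterating the two closure steps to a fixpoint, and conditions (1)--(4) then hold by construction: (1) because $h(\mathcal{F}_1,\mathcal{G}) < h(\mathcal{F}_2,\mathcal{G})$ forces $(\mathcal{F}_1\cap\mathcal{G})\lhd(\mathcal{F}_2\cap\mathcal{G})$ and hence $\Phi(\mathcal{F}_1\cap\mathcal{G})\prec\Phi(\mathcal{F}_2\cap\mathcal{G})$; (2) via the reduction of condition-(2) instances to $\lhd$ just described; (3) from the generators $\Sigma_J\prec\Sigma_I$; and (4) from the closure.

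The remaining --- and main --- obstacle is to show $\prec$ is irreflexive, so that it is genuinely a strict partial order. Here I would note that every edge $\Sigma_I\prec\Sigma_J$ of the generated digraph carries the strict linear inequality $\sum_{i\in I} x_i < \sum_{j\in J} x_j$, with coefficient vector $\chi_J-\chi_I$, so the edge-inequalities around any directed cycle sum to the formal absurdity $0<0$; consequently $\prec$ is acyclic if and only if the finite system $\mathcal{S}$ of all these inequalities together with $x_i>0$ ($1\le i\le n$) is satisfiable over $\mathbb{R}$, and then any solution $(\xi_i)$ exhibits $\prec$ as a subrelation of the strict linear order induced by the additive potential $\nu(I)=\sum_{i\in I}\xi_i$. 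To prove $\mathcal{S}$ satisfiable I would use a Motzkin--Gordan transposition argument: an infeasibility certificate is a non-negative combination of members of $\mathcal{S}$ summing to the zero vector, and such a certificate must be traced back, step by step, into the $h$-world --- shift steps mirrored by plausibility constraints (5) and (7), subset steps by (3), the passage between the two arguments of $h$ by (6), and the independence of the second argument by (8) --- until it collapses to a single strict inequality $h(\mathcal{F},\mathcal{G}) < h(\mathcal{F},\mathcal{G})$, which is impossible. The genuinely delicate bookkeeping is keeping every intermediate pair of sets a pair of filters along this translation, so that constraints (3), (5), (7) remain applicable; this is essentially the same linear-inequality solvability that drives the main theorem of the section.
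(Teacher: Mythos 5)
Your construction of the generating relations is sound and in places cleaner than the paper's: the reduction of first-argument comparisons to a single canonical requested filter via plausibility constraints (2) and (8), and the conversion of second-argument comparisons into first-argument ones via constraint (6) with $\mathcal{G}:=\mathcal{G}_1\cup\mathcal{G}_2$, are exactly the moves the paper makes (the paper routes the latter through $\mathcal{F}=\{C_0\}$ and constraint (7), but your shortcut is legitimate). The genuine gap is in the final step. You define $\prec$ as the transitive-and-shift closure of the generators and then must prove this closure is irreflexive; for that you offer only a one-sentence plan (``trace a Motzkin infeasibility certificate back into the $h$-world until it collapses to $h(\mathcal{F},\mathcal{G})<h(\mathcal{F},\mathcal{G})$''), while yourself conceding that the bookkeeping needed to keep all intermediate sets filters is ``genuinely delicate.'' That bookkeeping is not a detail: it is essentially the entire content of the realisability lemma that follows this one in the paper, and that lemma's proof \emph{assumes} $\hat{\mathcal{P}}$ is a partial order in order to derive its contradiction from a cycle. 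So you cannot establish acyclicity by appealing to realisability without either circularity or a new, fully worked-out feasibility argument, and no such argument is given.

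The irony is that you already hold the key that makes all of this unnecessary. Your relation $\lhd$ is, by constraint (8), exactly the pullback of the strict order on $[0,1]$ under the single real-valued map $\mathcal{F}\mapsto h(\mathcal{F},\mathcal{L})$. A relation induced by a real-valued potential is automatically irreflexive and transitive, and plausibility constraint (5) applied with $\mathcal{G}=\mathcal{L}$ shows it is already closed under the shift rule on filter index sets (and constraint (3) already yields the subset comparisons). Hence on index sets of filters no closure operation is needed at all, and acyclicity is free; this is precisely how the paper's proof sidesteps any Farkas/Motzkin argument at this stage, deferring the hard linear-inequality work to the separate realisability lemma. The only residual issue --- shared with the paper, which dispatches it with a remark about first extending $h$ to the ``worst case'' lattice in which every subset is a filter --- is what to do with terms $\Sigma_I$ whose index set does not correspond to a filter; your proposal does not address this either, and any repair of your closure argument would have to.
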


\begin{proof}

Let $\mathcal{L}$ contain $n+2$ elements $C_0 ,\dots, C_{n+1}$ with top- and bottom elements $C_0$ and $C_{n+1}$, respectively. Define $\Phi(\mathcal{F}) = \sum_{C_i \in \mathcal{F} - \{ C_0, C_{n+1} \} } x_i$. 

For $\mathcal{G} = \mathcal{L}$ the inequality $h(\mathcal{F}_1,\mathcal{G})<  h(\mathcal{F}_2,\mathcal{G})$  defines the inequality $\Phi(\mathcal{F}_1) \prec \Phi(\mathcal{F}_2)$. We show that inequalities between sums of variables defined this way satisfy the requirements of the Lemma.

For $\mathcal{G} \neq \mathcal{L}$ the inequality $h(\mathcal{F}_1,\mathcal{G}) < h(\mathcal{F}_2,\mathcal{G})$ implies $h(\mathcal{F}_1 \cap \mathcal{G},\mathcal{G}) < h(\mathcal{F}_2 \cap \mathcal{G},\mathcal{G})$ due to plausibility constraint (2). Plausibility constraint (8) then gives rise to $h(\mathcal{F}_1 \cap \mathcal{G},\mathcal{L}) < h(\mathcal{F}_2 \cap \mathcal{G},\mathcal{L})$, which defines the inequality $\Phi(\mathcal{F}_1 \cap \mathcal{G}) \prec \Phi(\mathcal{F}_2 \cap \mathcal{G})$ needed.

For $\mathcal{F} = \{ C_0 \}$ the inequality $h(\mathcal{F},\mathcal{G}_2) < h(\mathcal{F},\mathcal{G}_1)$ implies using plausibility constraint (6) with $\mathcal{G} = \mathcal{L}$ that $h(\mathcal{G}_1,\mathcal{G}) < h(\mathcal{G}_2,\mathcal{G})$, so the inequality $\Phi(\mathcal{G}_1)<\Phi(\mathcal{G}_2)$ is again the same as the one defined by the right hand side $\mathcal{L}$.

Let $\mathcal{F} \neq \{ C_0 \}$  with $\mathcal{F} \cap \mathcal{G}_1 = \mathcal{F} \cap \mathcal{G}_2$ (due to plausibility constraint (1) we can ignore $\mathcal{F} = \mathcal{L}$). the inequality $h(\mathcal{F},\mathcal{G}_2) < h(\mathcal{F},\mathcal{G}_1)$ defines again $\Phi(\mathcal{G}_1) \preceq \Phi(\mathcal{G}_2)$. According to plausibility constraint (7) we also have $h(\mathcal{F} - \{ C \},\mathcal{G}_2) \le h(\mathcal{F} - \{ C \},\mathcal{G}_1)$ for $C \in \mathcal{G}_1 \cap \mathcal{G}_2$. As $\mathcal{F} \cap \mathcal{G}_i \subseteq \mathcal{G}_1 \cap \mathcal{G}_2$ holds, we obtain $h(\{ C_0 \},\mathcal{G}_2) \le h(\{ C_0 \},\mathcal{G}_1)$, so the derived inequality is again the same as for the case $\mathcal{F} = \{ C_0 \}$. This shows the claimed properties (1) and (2) of the lemma.

In order to see the claimed property (3) we fix $\mathcal{F} = \{ C_0 \}$ and exploit plausibility constraint (3) using induction on the size of $I\setminus J$.

From plausibility constraint (5) we obtain $h(\mathcal{F}_1 \cup \{ C_k \},\mathcal{G}) < h(\mathcal{F}_2 \cup \{ C_k \},\mathcal{G})$ for $C_k \notin \mathcal{F}_1 \cup \mathcal{F}_2$, which gives the claimed property (4) $\Sigma_J + x_k \prec \Sigma_I + x_k$.

Finally extend the obtained partial order $\preceq$ to a total one preserving properties (3) and (4).

\end{proof}

Note that we obtain directly a partial order for the ``worst case'', i.e. the lattice $\mathcal{L}$, in which all $C_i$ ($i=1,\dots,n$) are pairwise incomparable. Thus we could have first extended $h$ to this case, where all subsets correspond to filters, and then used the arguments in the proof.

With Lemma \ref{lem-plausible-order} we reduce the problem of finding a ranking-preserving matching measure to a problem of solving a set of linear inequalities. We will exploit the properties in this lemma for the proof of our main result in the next subsection. First we investigate a general condition for realisability.

\begin{definition}

Let $\mathcal{P}$ be a set of linear inequalities on the set of terms $\{ \sum_{i \in I} x_i \mid I \subseteq \{ 1,\dots,n \} \}$. We say that $\mathcal{P}$ is {\em realisable}, if there is a substitution $v: \{ x_1 , \dots, x_n \} \rightarrow \mathbb{R}^+$ of the variables by positive real numbers such that $\sum_{i \in I} x_i$ precedes $\sum_{j \in J} x_j$ in $\mathcal{P}$ iff $\sum_{i \in I} v(x_i) < \sum_{j \in J} v(x_j)$ holds. 

\end{definition}

As all sums are finite, it is no loss of generality to seek substitutions by rational numbers, and further using the common denominator it suffices to consider positive integers only.

 For convenience we introduce the notation $U \prec V$ for multisets $U, V$ over $\{ x_1 ,\dots, x_n \}$ to denote the inequality $\sum\limits_{x_i \in U} m_U(x_i) x_i < \sum\limits_{x_j \in V} m_V(x_j) x_j$, where $m_U$ and $m_V$ are the multiplicities for the two multisets.

\begin{theorem}\label{thm-realisable}

$\mathcal{P}$ is realisable iff there is no positive integer combination of inequalities in $\mathcal{P}$ that results in $A \prec B$ with $B \subseteq A$ as multisets, i.e. $m_B(x_i) \le m_A(x_i)$ for all $i=1,\dots,n$.

\end{theorem}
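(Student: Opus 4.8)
The plan is to read this as the combinatorial face of a homogeneous transposition theorem of the Farkas/Gordan type; the genuine work is the translation, not any analytic difficulty. First I would turn realisability into a feasibility statement. Identify a term $\Sigma_I$ with the $0/1$ vector $\chi_I\in\mathbb{R}^n$ and a substitution $v$ with the point $(v(x_1),\dots,v(x_n))$; then $\Sigma_I\prec\Sigma_J$ becomes the strict inequality $(\chi_J-\chi_I)\cdot x>0$, while $v(x_i)\in\mathbb{R}^+$ becomes $e_i\cdot x>0$ (with $e_i$ the $i$-th unit vector). Because the relation $\prec$ supplied by Lemma~\ref{lem-plausible-order} is a strict total order on the terms (and any partial order can first be extended to one), a point $v$ realises $\mathcal{P}$ in the sense of the definition exactly when it satisfies all of these strict inequalities at once: totality forces the numeric order to agree with $\prec$ in both directions, so no ``accidental'' strict inequality is possible. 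Hence realisability is equivalent to the joint solvability over $\mathbb{R}$ of the homogeneous system $S$ consisting of $(\chi_J-\chi_I)\cdot x>0$ for all $(I,J)\in\mathcal{P}$ together with $e_i\cdot x>0$ for $i=1,\dots,n$.

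Next I would invoke Gordan's theorem (a corollary of Farkas' lemma, also obtainable directly by Fourier--Motzkin elimination): $S$ is solvable iff there is no nonzero vector of nonnegative rationals $(\lambda_{(I,J)})_{(I,J)\in\mathcal{P}}$, $(\mu_i)_{i=1}^n$ with $\sum_{(I,J)}\lambda_{(I,J)}(\chi_J-\chi_I)+\sum_i\mu_i e_i=0$; clearing denominators makes the coefficients integral. It remains to match this dual condition with the statement. Rewriting the identity as
\[
\sum_{(I,J)\in\mathcal{P}}\lambda_{(I,J)}\,\chi_I \;=\; \sum_{(I,J)\in\mathcal{P}}\lambda_{(I,J)}\,\chi_J \;+\; \sum_{i=1}^n\mu_i\,e_i
\]
and reading both sides as characteristic vectors of multisets over $\{x_1,\dots,x_n\}$, the left side is $\chi_A$ with $A=\biguplus_{(I,J)}\lambda_{(I,J)}\cdot I$, and the right side is $\chi_B+\chi_M$ with $B=\biguplus_{(I,J)}\lambda_{(I,J)}\cdot J$ and $M$ the multiset with multiplicities $\mu_i$; thus $B\subseteq A$ with $A\setminus B=M$. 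So the combination of the inequalities $\Sigma_I\prec\Sigma_J$ with integral coefficients $\lambda_{(I,J)}$ is a positive integer combination of inequalities in $\mathcal{P}$ that derives precisely $A\prec B$ with $B\subseteq A$, and it is nontrivial because all $\lambda_{(I,J)}=0$ would force all $\mu_i=0$, contradicting nonzeroness. Conversely, a positive integer combination $\sum_k c_k(\Sigma_{I_k}\prec\Sigma_{J_k})$ yielding $A\prec B$ with $B\subseteq A$ yields back a dual certificate by setting $\mu_i:=m_A(x_i)-m_B(x_i)\ge 0$, since then $\sum_k c_k(\chi_{J_k}-\chi_{I_k})+\sum_i\mu_i e_i=0$. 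This gives the claimed equivalence.

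Two things need care. The ``$\Leftarrow$'' half of the realisability definition (no accidental inequality) I would settle via totality of $\prec$ as above; for a completely arbitrary $\mathcal{P}$ one only gets that solvability of $S$ produces a $v$ satisfying every listed inequality, which is all the subsequent development needs. The rational-to-integer passage is routine since every sum is finite. The opposite, easy direction of the theorem needs no machinery: a bad combination derives $A\prec B$ with $B\subseteq A$, so any realising $v$ would give $\sum_{x\in A}v(x)<\sum_{x\in B}v(x)\le\sum_{x\in A}v(x)$, a contradiction. I therefore expect the main obstacle to be organisational rather than deep — keeping the dictionary straight, and in particular routing the positivity constraints $x_i>0$ onto the ``slack'' side so that they surface exactly as the possibility $B\subsetneq A$ (as opposed to $B=A$, which corresponds to an inconsistency derivable from $\mathcal{P}$ alone) — together with, if the paper is to be self-contained, supplying the elementary proof of the transposition theorem itself.
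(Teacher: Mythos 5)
Your proposal is correct, but it reaches the conclusion by a different route than the paper. You translate the multiset inequalities into a homogeneous system of strict linear inequalities $(\chi_J-\chi_I)\cdot x>0$, $e_i\cdot x>0$, and invoke Gordan's transposition theorem: the system is solvable iff no nonzero nonnegative combination of the row vectors vanishes. Your dictionary between dual certificates $\sum\lambda_{(I,J)}(\chi_J-\chi_I)+\sum\mu_i e_i=0$ and positive integer combinations deriving $A\prec B$ with $B\subseteq A$ (with $\mu$ playing the role of the slack $A\setminus B$) is exactly right, as is the observation that some $\lambda$ must be positive and that rational certificates can be cleared to integers. The paper instead proves the sufficiency direction from scratch by Fourier--Motzkin elimination: induction on the number of variables, with a three-way case split on whether the eliminated variable $x_q$ occurs only negatively, only positively, or with both signs, and in the mixed case forming all positive combinations that cancel $x_q$ and checking that the resulting lower bounds stay below the upper bounds. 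Since Gordan's theorem is itself provable by Fourier--Motzkin, the two arguments are the same mathematics in different packaging: yours is shorter and cleaner if one is willing to cite a classical theorem of the alternative, while the paper's is self-contained and never leaves the multiset language. Your handling of the ``iff'' in the definition of realisability (an arbitrary solution of the strict system could satisfy unlisted inequalities, so one needs $\prec$ to be total, or a one-directional reading) is a point the paper glosses over, and flagging it is to your credit; the only thing your write-up would still owe a self-contained paper is the proof of the transposition theorem itself, which is precisely the content of the paper's elimination argument.
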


\begin{proof}

The necessity of the condition is obvious, since if $\PP$ is realizable and  $A\prec B$ is a positive integer combination of inequalities from $\PP$, then $\sum_{x_i\in A}m_A(x_i)v(x_i)<\sum_{x_j\in B}m_B(x_j)v(x_j)$ follows for the realizer substitution $v\colon X\rightarrow\R_{>0}$ that contradicts to  $m_B(x_i)\le m_A(x_i)$ for all $i=1,2,\ldots ,n$.

To prove that the condition is sufficient we use \emph{Fourier--Motzkin elimination}. That is, we do induction on the number of variables. For one variable, the statement is trivial. For the sake of convenience the inequalities are transformed into the following standard form
\[
\alpha_{i_1}x_{i_1}+\alpha_{i_2}x_{i_2}+\ldots +\alpha_{i_k}x_{i_k}-\beta_{j_1}x_{j_1}-\beta_{j_2}x_{j_2}-\ldots -\beta_{j_p}x_{j_p}<0
\]

for $\alpha_{i_t},\beta_{j_r}\in\mathbb{N}$. The condition is now that no positive integer combination of the inequalities result in an inequality with all variables having nonnegative coefficient.  
Consider variable $x_q$. Let $\PP_0$ denote the set of those inequalities that do not involve $x_q$, while $\PP_1$ denote the set of those inequalities that do  involve $x_q$.

\smallskip

\noindent
\textsc{Case 1.} $x_q$ occurs only with negative coefficients in inequalities of $\PP_1$. Let us assume that the set of inequalities $\PP_0$ is realizable, i.e., there is a substitution of the other variables that makes those inequalities valid. Then  this substitution of the other variables and a large enough value for $x_q$ will satisfy the inequalities of $\PP_1$, as well.

\smallskip

\noindent \textsc{Case 2.} $x_q$ occurs only with positive coefficients in inequalities of $\PP_1$. Let $\PP_1'$ denote the set of inequalities obtained from $\PP_1$ by removing occurrences of $x_q$ from each inequality in $\PP_1$. We claim that $\PP_0\cup\PP_1'$ satisfies the property that  no positive integer combination of inequalities of $\PP_0\cup\PP_1'$ that results in $A\prec B$ with $B\subseteq A$ as multisets, i.e., $m_B(x_i)\le m_A(x_i)$ for all $i=1,2,\ldots ,n$. Indeed, if such a combination existed, then replacing the inequalities of $\PP_1'$ by their counterparts of $\PP_1$, an  invalid positive integer linear combination of $\PP$ would be obtained.   Since $\PP_0\cup\PP_1'$ has one less variable than $\PP$, by induction hypothesis it is realizable. The inequalities in  $\PP_0\cup\PP_1'$ are all strict, so  this substitution of the other variables and a small enough value for $x_q$ will satisfy the inequalities of $\PP_1$, as well.

\smallskip

\noindent \textsc{Case 3.} $x_q$ occurs with positive and negative coefficients, as well in $\PP_1$. For every pair of inequalities such that the coefficient of $x_q$ is positive in one of them and negative in the other one we create a new inequality which is a positive integer combination of the two so that the coefficient of  $x_q$ is reduced to zero, denote the set of inequalities obtained in such way $\PP^*$. Let $\PP'=\PP_0\cup \PP^*$. It is clear that there exists no positive integer combination of inequalities of $\PP'$ that have all variables with non-negative coefficients, since any such combination is also a positive integer combination of inequalities of $\PP_0\cup \PP_1$. Thus, by the induction hypothesis $\PP'$ realizable, so there exists a substitution of the other variables than $x_q$ that makes every inequality in  $\PP'$ valid. Each inequality in $\PP_1$ that has $x_q$ with positive coefficient gives an upper bound for $x_q$ with this substitution. Also, the inequalities of $\PP_1$ containing $x_q$ with negative coefficient give lower bounds for $x_q$. We claim that the largest lower bound obtained so is smaller than the smallest upper bound. Indeed, consider the pair of inequalities giving these bounds. If the upper bound were less than or equal of the lower bound, then the positive integer combination of these two inequalities that is included in  $\PP^*$ would be violated by the given substitution of the variables  other than $x_q$.

\end{proof}

\subsection{Ranking-Preserving Matching Measures}

We now use Theorem \ref{thm-realisable} to prove the existence of ranking-preserving matching measures. As $\mathcal{P}$ is defined by $h$ we can assume that $\sum_{i \in I} x_i$ precedes $\sum_{j \in J} x_j$ for $I \subset J$. We can then also extend $\mathcal{P}$ to a partial order $\hat{\mathcal{P}}$ on multisets of variables by adding the same variable(s) to both sides. Indeed, $\Sigma_J \prec \Sigma_I\iff \Sigma_{J\setminus I} \prec \Sigma_{I\setminus J} $ by (4) of Lemma~\ref{lem-plausible-order}. Clearly, $\mathcal{P}$ is realisable iff $\hat{\mathcal{P}}$ is realisable. Here a partial order is considered realized iff the collection of strict inequalities is realized as a set of linear inequalities in the sense of  Theorem~\ref{thm-realisable}

\begin{lemma}\label{lem-plausible-ranking}

Let $\mathcal{P}$ be a partial order on the set of terms $\{ \Sigma_I \mid I \subseteq \{ 1,\dots,n \} \}$ for  $\Sigma_I = \sum_{i \in I} x_i$ such that $\Sigma_J < \Sigma_I$ holds for $J \subset I$ and $\Sigma_J + x_k < \Sigma_I +x_k$ holds, whenever $\Sigma_J < \Sigma_I$ holds and $k\not\in J\cup I$. Then $\mathcal{P}$ is realisable.

\end{lemma}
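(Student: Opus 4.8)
The plan is to invoke Theorem~\ref{thm-realisable}: since $\mathcal{P}$ (in the form relevant here) is a set of strict linear inequalities, it is realisable as soon as no positive integer combination of those inequalities produces an inequality $A \prec B$ with $B \subseteq A$ as multisets. So I would argue by contradiction: assume such a combination exists. Listing the inequalities it uses with their multiplicities as $\Sigma_{L_1} \prec \Sigma_{R_1},\dots,\Sigma_{L_m}\prec\Sigma_{R_m}$ (each $\Sigma_{L_t}\prec\Sigma_{R_t}$ belonging to $\mathcal{P}$), we get $A = L_1\uplus\cdots\uplus L_m$, $B = R_1\uplus\cdots\uplus R_m$ as multisets, with $B\subseteq A$.

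The key move is to reinterpret this algebraic combination as a chain of derivations in the partial order $\hat{\mathcal{P}}$ on multisets introduced just before the lemma. Fix the above ordering of the inequalities, put $P_0 = A$ and $P_p = (P_{p-1}-L_p)\uplus R_p$ for $p=1,\dots,m$. A short multiset computation gives $P_{p-1} = L_p\uplus R_1\uplus\cdots\uplus R_{p-1}\uplus L_{p+1}\uplus\cdots\uplus L_m \supseteq L_p$, so each step passes from $\Sigma_{L_p}\prec\Sigma_{R_p}$ to $\Sigma_{P_{p-1}}\prec\Sigma_{P_p}$ by adding the common multiset $P_{p-1}-L_p$ to both sides, which is a legitimate $\hat{\mathcal{P}}$-move obtained by iterating property~(4) of Lemma~\ref{lem-plausible-order}. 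Transitivity of $\hat{\mathcal{P}}$ then gives $\Sigma_{P_0}\prec\Sigma_{P_m}$, i.e. $\Sigma_A\prec\Sigma_B$ because $P_0=A$ and $P_m=B$. On the other hand $B\subseteq A$ as multisets forces $\Sigma_B\preceq\Sigma_A$ by property~(3), used on multisets via adding back the elements of $A\setminus B$ one at a time. Thus $\Sigma_A\prec\Sigma_B\preceq\Sigma_A$, contradicting antisymmetry of the partial order $\hat{\mathcal{P}}$; hence no bad combination exists and $\mathcal{P}$ is realisable by Theorem~\ref{thm-realisable}.

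The part I expect to need the most care is the passage from ``positive integer combination of inequalities'' to ``derivable chain inside $\hat{\mathcal{P}}$'': one must choose the order of the inequalities, verify that every intermediate $P_p$ is a genuine non-negative multiset (not merely a vector with possibly negative entries) and that $P_m$ really collapses to $B$, and in particular make sure that adding a common term to both sides is still available when the term to be added already occurs with positive multiplicity — this is exactly why one works in $\hat{\mathcal{P}}$ on multisets rather than in $\mathcal{P}$ on sets, and why it matters that $\hat{\mathcal{P}}$ is antisymmetric and not just a preorder. Everything else is immediate from Theorem~\ref{thm-realisable} together with properties~(3) and~(4) of Lemma~\ref{lem-plausible-order}.
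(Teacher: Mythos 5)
Your telescoping construction is correct as far as it goes: with $P_0=A$ and $P_p=(P_{p-1}-L_p)\uplus R_p$ one indeed has $P_{p-1}\supseteq L_p$, each step is a legitimate ``add a common multiset to both sides'' move applied to an inequality of $\mathcal{P}$, and $P_m=B$, so a bad combination yields $\Sigma_A\prec\Sigma_B\preceq\Sigma_A$ in $\hat{\mathcal{P}}$. The problem is the final step, which you yourself flag as the delicate one: the appeal to antisymmetry of $\hat{\mathcal{P}}$ is circular. The hypotheses of the lemma only constrain $\mathcal{P}$ at the level of \emph{sets}: $\Sigma_J\prec\Sigma_I$ for $J\subset I$, and closure under adjoining a single variable $x_k$ with $k\notin J\cup I$. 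Nothing is assumed about adding a variable that already occurs on one side, let alone about multisets of multiplicity greater than one; $\hat{\mathcal{P}}$ is merely \emph{generated} by the move ``add a common multiset to both sides'' together with transitive closure, and whether that generated relation is irreflexive is precisely the question at issue. Indeed the two statements are equivalent: running your own computation backwards, a cycle $M_0\prec M_1\prec\dots\prec M_r=M_0$ in $\hat{\mathcal{P}}$ with $M_{p-1}=J_p\uplus W_p$ and $M_p=I_p\uplus W_p$ satisfies $\biguplus_p J_p=\biguplus_p I_p$, i.e.\ is exactly a positive integer combination violating the condition of Theorem~\ref{thm-realisable}. So ``$\hat{\mathcal{P}}$ is antisymmetric'' and ``no bad combination exists'' are the same assertion, and your argument assumes the conclusion.

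This is exactly why the paper's proof is so much longer. It also terminates by exhibiting a cycle, but it first passes to a \emph{minimal} bad subsystem and tracks, inequality by inequality, which elements witness the failure of $B\subseteq A$ for each partial sum (the sets $V_i'$, $U_i'$, $V_i''$, $W_i$, $X_i$); the point of that bookkeeping is to control which variables get added where, so that every link of the eventual cycle is obtained from an inequality of $\mathcal{P}$ using only the licensed set-level operations --- shrinking a left-hand side via property~(3) of Lemma~\ref{lem-plausible-order}, adjoining a genuinely new variable via property~(4) --- and the contradiction can be charged to $\mathcal{P}$ being a partial order rather than to an unproven property of its multiset extension. To repair your proof you would need to supply this descent from the multiset level back to the set level; as written, it establishes only the (easy) equivalence between bad integer combinations and cycles in $\hat{\mathcal{P}}$, not realisability.
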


\begin{proof}

Assume that $\mathcal{P}$ is not realisable. Then according to Theorem \ref{thm-realisable} there exist inequalities $U_1 < V_1 ,\dots, U_k < V_k$ in $\mathcal{P}$ such that $V = \biguplus\limits_{i=1}^k V_i \subseteq \biguplus\limits_{i=1}^k U_i = U$ as multisets and
\[ \sum\limits_{x \in U} \sum_{j=1}^k m_{U_j}(x) x < \sum\limits_{x \in V} \sum_{j=1}^k m_{V_j}(x) x . \]

Let this system of inequalities be minimal, so each subset violates the condition in Theorem \ref{thm-realisable}. We may assume without loss of generality that $U_i\cap V_i=\emptyset$. Taking the inequalities in some order let
\[ A_i = \sum\limits_{x} \sum_{j=1}^i m_{U_j}(x) x \qquad\text{and}\qquad B_i = \sum\limits_{x} \sum_{j=1}^i m_{V_j}(x) x . \]

Then for $i<k$ there always exists some $x$ with $\sum_{j=1}^i m_{U_j}(x) < \sum_{j=1}^i m_{V_j}(x)$, while $A_i \prec B_i$. 
On the other hand $\sum_{j=1}^k m_{U_j}(x) \ge \sum_{j=1}^k m_{V_j}(x)$, while $A_k \prec B_k$.

Let $V_i^\prime$ be the multiset $B_i - A_i$, i.e. the multiset of all $x$ with $m_{B_i}(x) > m_{A_i}(x)$ such that $m_{V_i^\prime}(x) = m_{B_i}(x) - m_{A_i}(x)$ holds. Each $x \in V_i^\prime$ is a witness for the violation of the condition in Theorem \ref{thm-realisable}. In particular, we have $V_i^\prime \neq \emptyset$ for all $i<k$, but $V_k^\prime = \emptyset$.

Let $V_{i+1}^{\prime\prime} = V_i^\prime \cap V_{i+1}^\prime$ as multisets, so $m_{V_{i+1}^{\prime\prime}}(x) = \min ( m_{V_i^\prime}(x), m_{V_{i+1}^\prime}(x) )$, i.e. $x$ will at most be added to $B_i$ to give $B_{i+1}$, but not to $A_i$. In particular, $V_{i+1}^{\prime\prime} \subseteq V_i^\prime$.
Take the complement $U_{i+1}^\prime$ such that $V_i^\prime = U_{i+1}^\prime \uplus V_{i+1}^{\prime\prime}$.

As $U_i < V_i$ is in $\mathcal{P}$, we also have $U_i^\prime < V_i$ in $\mathcal{P}$ for all $i>1$ ($U_1^\prime$ is not yet defined). Indeed, using $U_i\cap V_i=\emptyset$ one can see that $U_{i+1}1^\prime = V_i^\prime \cap U_{i+1}$. 

Let $B_1^\prime = V_1 = V_1^\prime$. Then proceed inductively defining $W_i = B_i^\prime - U_{i+1}^\prime$ as well as $A_{i+1}^\prime = U_{i+1}^\prime \uplus W_i$ and $B_{i+1}^\prime = V_{i+1} \uplus W_i$, which gives $A_{i+1}^\prime < B_{i+1}^\prime$ in $\hat{\mathcal{P}}$ and $B_i^\prime = A_{i+1}^\prime$. That is, we obtain a chain 
\[ B_1^\prime \le A_2^\prime < B_2^\prime \le \dots < B_{k-1}^\prime \le A_k^\prime < B_k^\prime . \]

Complement these definitions by $U_1^\prime = B_k^\prime \cap U_1 = A_1^\prime$, and $X_0 = B_k^\prime - U_1 = X_1$. Proceed inductively defining
\begin{gather*}
C_1 = U_1^\prime \uplus X_0 \prec V_1 \uplus X_1 = B_1^\prime \uplus X_1 = D_1 \;\text{and}\; X_{i+1} = X_i - ( U_{i+1} -A_{i+1}^\prime )
\end{gather*}

This gives $C_{i+1} = A_{i+1}^\prime \uplus X_i \prec B_{i+1}^\prime \uplus X_{i+1} = D_{i+1}$
and $C_{i+1} \le D_i$. Due to this construction we also have $X_i \supseteq X_{i+1}$ for all $i$. 

Furthermore, for all elements $x$ in $X_0$ there exists a maximal $i$ with $x \in V_i - U_{i+1}$ and also $x \in A_j^\prime \cap B_j^\prime$ for all $j \ge i+1$, i.e. $x \in W_i$ and $x$ is added to both sides of $U_{i+1}^\prime < V_{i+1}$ to give the inequality $A_{i+1}^\prime < B_{i+1}^\prime$. To prove this property we proceed as follows: For $x \in V_k$ there is nothing to show. For $x \notin V_k$ we have $x \in W_{k-1}$, i.e. $x$ has been added on both sides of $U_k^\prime < V_k$ to yield $A_k^\prime < B_k^\prime$. In particular, $x \notin U_k^\prime$, but $x \in B_{k-1}^\prime$. Then either $x \in V_{k-1}$ and we are done again or $x$ has been added on both sides of $U_{k-1}^\prime < V_{k-1}$, thus proceeding this way we reach a minimal $i$ with the given property---each $x$ added to $U_2^\prime < V_2$ appears in $V_1$, so the process always stops.

On the other hand the assumed non-realisability implies $m_{B_k}(x) \le m_{A_k}(x)$, so there exists a $j$  with $x \in U_j \notin U_j^\prime$ (we must have $j<i$ for the $i$ given by the just shown property). This implies $x \in X_{j-1} - X_j$, so this (occurrence of) $x$ will not appear in $X_k$. As we can do this for all $x \in X_0$, we obtain $X_k = \emptyset$, which implies $D_k = C_1$. This defines a cycle in $\hat{\mathcal{P}}$ contradicting the fact that it is a partial order. Therefore, $\mathcal{P}$ must be realisable.

\end{proof}

\begin{example}\label{bsp-inequalities1}

To illustrate the construction in the proof take the following inequalities:
\begin{alignat*}{3}
U_1 &= \qquad & x_1 + x_2 \;&<\; x_3 + x_4 & \qquad =& V_1 \\
U_2 &= \qquad & x_2 + x_3 \;&<\; x_5 & \qquad =& V_2 \\
U_3 &= \qquad & x_4 + x_5 \;&<\; x_1 + x_3 & \qquad =& V_3 \\
U_4 &= \qquad & x_3 \;&<\; x_2 & \qquad =& V_4
\end{alignat*}

Their combination (adding up the left and right hand sides) give the following:
\[ x_1 + 2 x_2 + 2 x_3 + x_4 + x_5 < x_1 + x_2 + 2 x_3 + x_4 + x_5 , \]

i.e. multiplicities on the right are always smaller or equal as those on the left. According to Theorem \ref{thm-realisable} this means that the system is not realisable.

Taking the inequalities in the given order gives the following sums :
\begin{alignat*}{3}
A_1 &= \qquad & x_1 + x_2 &< x_3 + x_4 & \qquad =& B_1 \\
A_2 &= \qquad & x_1 + 2 x_2 + x_3 &< x_3 + x_4 + x_5 & \qquad =& B_2 \\
A_3 &= \qquad & x_1 + 2 x_2 + x_3 + x_4 + x_5 &< x_1 + 2 x_3 + x_4 + x_5 & \qquad =& B_3 \\
A_4 &= \qquad & x_1 + 2 x_2 + 2 x_3 + x_4 + x_5 &< x_1 + x_2 + 2 x_3 + x_4 + x_5 & \qquad =& B_4
\end{alignat*}

In the first three inequalities we always have at least one $x_i$ on the right hand side that has a larger multiplicity than on the left hand side, so it is a {\em witness} for the violation of the condition in Theorem \ref{thm-realisable}, i.e. we have realisability for the corresponding subsystem of inequalities.

In the proof of Theorem \ref{thm-plausible} we take these witnesses into the set $V_i^\prime$, i.e. we have:
\[
V_1^\prime = \{ x_3, x_4 \} \quad
V_2^\prime = \{ x_4, x_5 \} \quad
V_3^\prime = \{ x_3 \} \quad
V_4^\prime = \emptyset
\]

Whenever we proceed from one of these partial sums to the next some of the witnesses will no longer be witnesses---these we collect in the sets $U_i^\prime$---while others remain witnesses---these will be collected in $V_i^{\prime\prime}$. Thus, we obtain:
\begin{alignat*}{2}
U_2^\prime &=  \{ x_3 \} \qquad & \qquad V_2^{\prime\prime} &= \{ x_4 \} \\
U_3^\prime &=  \{ x_4, x_5 \} \qquad & \qquad V_3^{\prime\prime} &= \emptyset \\
U_4^\prime &=  \{ x_3 \} \qquad & \qquad V_4^{\prime\prime} &= \emptyset
\end{alignat*}

As $U_i^\prime \subseteq U_i$, i.e. we have subsets of the left hand sides of the original inequalities (for $i >1$), we can remove some of the summands and still have an inequality $U_i^\prime < V_i$ in $\mathcal{P}$:
\begin{gather*}
x_3 < x_5 \qquad
x_4 + x_5 < x_1 + x_3 \qquad
x_3 < x_2
\end{gather*}

Next we add to each of these inequalities the same sum to the left and right hand side---we denote the iequalities as $A_i^\prime < B_i^\prime$ such that the left hand side of the second inequality becomes $V_1^\prime$, and we inductively obtain a chain
\[ B_1^\prime = A_2^\prime < B_2^\prime = \dots < B_{k-1}^\prime = A_k^\prime < B_k^\prime . \]

The inequalities are as follows:
\begin{gather*}
x_3 + x_4 < x_4 + x_5 \qquad
x_4 + x_5 < x_1 + x_3 \qquad
x_1 + x_3 < x_1 + x_2
\end{gather*}

So far we only exploited the realisability of the proper subsets of inequalities, not the non-realisability of the whole set of inequalities. This is, where the left hand side of the first original inequality comes in, which we intersect with the right hand side of the last constructed inequality. In our case this gives
\[ U_1^\prime = \{ x_1 , x_2 \} . \]

As $U_1^\prime \subseteq U_1$ holds, we can again reduce the left hand side of our first inequality (in this case there is nothing to be taken away), and add an additional inequality, so we obtain now:
\begin{gather*}
x_1 + x_2 < x_3 + x_4 \quad
x_3 + x_4 < x_4 + x_5 \quad
x_4 + x_5 < x_1 + x_3 \quad
x_1 + x_3 < x_1 + x_2
\end{gather*}

This case this gives us already $X_0 = \emptyset$ and a cycle in $\mathcal{P}$ contradicting the fact that we have a linear order.

\end{example}

\begin{example}\label{bsp-inequalities2}

Let us replace the third inequality in Example \ref{bsp-inequalities1} by $x_4 + x_5 < x_1 + x_2 +x_3$. Then the combination of all four inequalities (adding up the left and right hand sides) gives the following:
\[ x_1 + 2 x_2 + 2 x_3 + x_4 + x_5 < x_1 + x_2 + 2 x_3 + x_4 + x_5 , \]

i.e. according to Theorem \ref{thm-realisable} the system is not realisable.

We take again the inequalities in the given order gives the following sums (which in the proof we denoted as $A_i \prec B_i$ for $i = 1,\dots, 4$:
\begin{gather*}
x_1 + x_2 < x_3 + x_4 \\
x_1 + 2 x_2 + x_3 < x_3 + x_4 + x_5 \\
x_1 + 2 x_2 + x_3 + x_4 + x_5 < x_1 + x_2 + 2 x_3 + x_4 + x_5 \\
x_1 + 2 x_2 + 2 x_3 + x_4 + x_5 < x_1 + 2 x_2 + 2 x_3 + x_4 + x_5
\end{gather*}

Again in the first three sums we have witnesses on the right hand side with larger multiplicities than on the left hand side, so we have realisability according to Theorem \ref{thm-realisable}. Taking these witnesses into the set $V_i^\prime$ we obtain:
\[
V_1^\prime = \{ x_3, x_4 \} \qquad
V_2^\prime = \{ x_4, x_5 \} \qquad
V_3^\prime = \{ x_3 \} \qquad
V_4^\prime = \emptyset
\]

Next we collect in the sets $U_i^\prime$ witnesses from the $(i-1)$th sum that are no longer witness in $V_i^\prime$, and in $V_i^{\prime\prime}$ we collect witnesses that retain these properties. Thus, we obtain:
\begin{alignat*}{2}
U_2^\prime &=  \{ x_3 \} \qquad & \qquad V_2^{\prime\prime} &= \{ x_4 \} \\
U_3^\prime &=  \{ x_4, x_5 \} \qquad & \qquad V_3^{\prime\prime} &= \emptyset \\
U_4^\prime &=  \{ x_3 \} \qquad & \qquad V_4^{\prime\prime} &= \emptyset
\end{alignat*}

As $U_i^\prime \subseteq U_i$, i.e. we have subsets of the left hand sides of the original inequalities (for $i >1$, which give rise to the following inequalities $U_i^\prime < V_i$ in $\mathcal{P}$ for $i=2,\dots,4$:
\begin{gather*}
x_3 < x_5 \qquad
x_4 + x_5 < x_1 + x_2 + x_3 \qquad
x_3 < x_2
\end{gather*}

Adding to each of these inequalities the same sum to the left and right hand side---we denote the iequalities as $A_i^\prime < B_i^\prime$---such that the left hand side of the $(i+1)$th inequality equals the right hand side of the $i$th inequality, we obtain a chain
\begin{gather*}
x_3 + x_4 < x_4 + x_5 \quad
x_4 + x_5 < x_1 + x_2 + x_3 \quad
x_1 + x_2 = x_3 < x_1 + 2 x_2
\end{gather*}

Now $U_1^\prime = \{ x_1, x_2 \}$ gives rise to the additional first inequality $x_1 + x_2 < x_3 + x_4$, which together with the other three does not yet define a cycle.

In this case we have $X_0 = \{ x_2 \} = X_1$ and $X_2 = X_3 = X_4 = \emptyset$, which leads to the following inequalties $C_i < D_i$ for $i=1,\dots,4$:
\begin{gather*}
x_1 + 2 x_2 < x_2 + x_3 + x_4 \quad
x_2 + x_3 + x_4 < x_4 + x_5 \\
x_4 + x_5 < x_1 + x_2 + x_3 \quad
x_1 + x_2 + x_3 < x_1 + 2 x_2
\end{gather*}

These again defines a cycle in $\hat{\mathcal{P}}$ contradicting the fact that we have a linear order.

\end{example}

From Lemma \ref{lem-plausible-ranking} together with Lemma \ref{lem-plausible-order} we immediately obtain the following main result of this section.

\begin{theorem}\label{thm-plausible}

Let $h: \mathbb{F} \times \mathbb{F} \rightarrow [0,1]$ be a human-defined function that satisfies the plausibility constraints. Then there esists a matching measure $\mu$ that is ranking-preserving with respect to $h$.

\end{theorem}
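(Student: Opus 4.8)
The plan is to bolt together the two lemmas that precede the statement, so almost all the real work is already done; what remains is to convert a realiser of a system of linear inequalities into a weighting function and to check the two clauses of Definition~\ref{def-ranking}. First I would apply Lemma~\ref{lem-plausible-order} to the given $h$, obtaining a partial order $\prec$ on the terms $\Sigma_I$ ($I \subseteq \{1,\dots,n\}$) together with the map $\Phi$. By construction $\mathcal{P} := {\prec}$ satisfies precisely the two structural hypotheses of Lemma~\ref{lem-plausible-ranking}: $\Sigma_J \prec \Sigma_I$ whenever $J \subsetneq I$ (property (3) of Lemma~\ref{lem-plausible-order}), and $\Sigma_J + x_k \prec \Sigma_I + x_k$ whenever $\Sigma_J \prec \Sigma_I$ and $k \notin J \cup I$ (property (4)). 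Hence Lemma~\ref{lem-plausible-ranking} applies and yields that $\mathcal{P}$ is realisable: there is a substitution $v : \{x_1,\dots,x_n\} \to \mathbb{R}^+$ with $\Sigma_I \prec \Sigma_J$ iff $\sum_{i \in I} v(x_i) < \sum_{j \in J} v(x_j)$.

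Next I would manufacture the weighting function. Writing $C_0, C_{n+1}$ for the top and bottom of $\mathcal{L}$ as in Lemma~\ref{lem-plausible-order}, I set $w(C_i) = v(x_i)$ for $1 \le i \le n$ and $w(C_0) = w(C_{n+1}) = 1$ (any positive values suffice), then rescale all $n+2$ weights by the common factor $\bigl(\sum_{j=0}^{n+1} w(C_j)\bigr)^{-1}$ so that $w(\mathcal{L}) = 1$. Extending additively to $\mathcal{P}(\mathcal{L})$ gives a weighting function in the sense of Definition~\ref{def-weighting} with every $w(C) > 0$, and hence a matching measure $\mu$ via $\mu(\mathcal{F},\mathcal{G}) = w(\mathcal{F} \cap \mathcal{G})/w(\mathcal{G})$ as in Definition~\ref{def-matching}. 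Since rescaling by a fixed positive factor leaves every ratio $\mu(\mathcal{F},\mathcal{G})$ and every comparison between weights unchanged, I would argue henceforth with the unnormalised weights.

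It then remains to verify ranking-preservation. For clause (1) of Definition~\ref{def-ranking}, suppose $h(\mathcal{F}_1,\mathcal{G}) > h(\mathcal{F}_2,\mathcal{G})$; by Lemma~\ref{lem-plausible-order}(1) this gives $\Phi(\mathcal{F}_2 \cap \mathcal{G}) \prec \Phi(\mathcal{F}_1 \cap \mathcal{G})$, whence by the realiser $\sum_{C_i \in \mathcal{F}_2 \cap \mathcal{G}} w(C_i) < \sum_{C_i \in \mathcal{F}_1 \cap \mathcal{G}} w(C_i)$, the sums ranging over $i \in \{1,\dots,n\}$. Both filters $\mathcal{F}_1 \cap \mathcal{G}$ and $\mathcal{F}_2 \cap \mathcal{G}$ contain $C_0$, and neither contains $C_{n+1}$ unless it equals $\mathcal{L}$ (a case in which $h$ attains its maximum $1$ and is handled directly by plausibility constraint (1)); adding the common term $w(C_0)$ back gives $w(\mathcal{F}_2 \cap \mathcal{G}) < w(\mathcal{F}_1 \cap \mathcal{G})$, and division by the fixed positive $w(\mathcal{G})$ yields $\mu(\mathcal{F}_1,\mathcal{G}) > \mu(\mathcal{F}_2,\mathcal{G})$. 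For clause (2), assume $\mathcal{G}_1 \sim_{\mathcal{F}} \mathcal{G}_2$, i.e. $\mathcal{F} \cap \mathcal{G}_1 = \mathcal{F} \cap \mathcal{G}_2$, and $h(\mathcal{F},\mathcal{G}_1) > h(\mathcal{F},\mathcal{G}_2)$; by Lemma~\ref{lem-plausible-order}(2) we get $\Phi(\mathcal{G}_1) \prec \Phi(\mathcal{G}_2)$, hence $w(\mathcal{G}_1) < w(\mathcal{G}_2)$, and since the numerators $w(\mathcal{F} \cap \mathcal{G}_1) = w(\mathcal{F} \cap \mathcal{G}_2)$ coincide, $\mu(\mathcal{F},\mathcal{G}_1) = w(\mathcal{F} \cap \mathcal{G}_1)/w(\mathcal{G}_1) > w(\mathcal{F} \cap \mathcal{G}_2)/w(\mathcal{G}_2) = \mu(\mathcal{F},\mathcal{G}_2)$.

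The substantive obstacle has already been surmounted inside Theorem~\ref{thm-realisable} and Lemma~\ref{lem-plausible-ranking}; within this final assembly the only points demanding a little care are the bookkeeping with the extreme lattice elements $C_0$ and $C_{n+1}$ (which is exactly why $\Phi$ was defined to ignore them, so that $\prec$-comparisons translate faithfully into weight comparisons), and making sure that the hypothesis $\mathcal{G}_1 \sim_{\mathcal{F}} \mathcal{G}_2$ of Definition~\ref{def-ranking}(2) matches the side condition $\mathcal{F} \cap \mathcal{G}_1 = \mathcal{F} \cap \mathcal{G}_2$ of Lemma~\ref{lem-plausible-order}(2) — which it does by the very definition of the relevance relation $\sim_{\mathcal{F}}$.
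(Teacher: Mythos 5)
Your proposal is correct and follows essentially the same route as the paper's proof: apply Lemma~\ref{lem-plausible-order} to obtain the partial order, Lemma~\ref{lem-plausible-ranking} to realise it, extend the realiser to weights on the top and bottom elements and normalise, and then invoke properties (1) and (2) of Lemma~\ref{lem-plausible-order} to check the two clauses of Definition~\ref{def-ranking}. Your write-up is in fact somewhat more careful than the paper's, spelling out the bookkeeping for $C_0$ and $C_{n+1}$ that the paper leaves implicit.
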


\begin{proof}

According to Lemma \ref{lem-plausible-order} $h$ defines a partial order on the set of terms $\{ \Sigma_i \mid I \subseteq \{ 1,\dots,n \} \}$, which according to Lemma \ref{lem-plausible-ranking} is realisable. Then extend a substitution $v : \{ x_1 ,\dots, x_n \} \rightarrow \mathbb{R}^+$ that gives $\sum_{i \in I} v(x_i) \le \sum_{j \in J} v(x_j)$ for $\sum_{i \in I} x_i \preceq \sum_{j \in J} x_j$ in $\mathcal{P}$ to $x_0$ and $x_{n+1}$. This defines the weighting function $w$ with $w(C_i) = \cfrac{v(x_i)}{\sum_{i=0}^{n+1} v(x_i)}$ and a corresponding matching measure $\mu$. Due to properties (1) and (2) of Lemma \ref{lem-plausible-order} $\mu$ is ranking-preserving with respect to $h$.

\end{proof}

Note that Theorem \ref{thm-plausible} only states the existence of a ranking preserving matching measure $\mu$. However, we obtain solutions for the linear inequations defined by $h$ by minimising $x_1 + \dots + x_n - 1$ under the conditions $\sum_{x_j \in V} x_j - \sum_{x_i \in U} x_i > 0$. For this linear optimisation problem the well-known {\em simplex algorithm} and matching learning approaches \cite{martinez:jikm2014} can be exploited.

\subsection{Totally Preserving Matching Measures}

While Theorem \ref{thm-plausible} guarantees the existence of a ranking-preserving matching measure $\mu$ for a human-defined function $h: \mathbb{F} \times \mathbb{F} \rightarrow [0,1]$ that satisfies the plausibility constraints, it may still be the case that not all inequalities defined by $h$ are preserved by $\mu$. Another more general problem would be to find out, if from such a human-defined function $h$ we can regain a matching measure $\mu$ on $\mathbb{F}$ that preserves all the inequalities defined by $h$.

If $w$ is a weighting function on $\mathcal{L}$, then it defines a lattice homomorphism $\hat{w} : \mathbb{V} \rightarrow [0,1]$ on the lattice of matching value terms. Let $mvt(\mathcal{F},\mathcal{G})$ denote the matching value term defined by the filters $\mathcal{F}$ and $\mathcal{G}$.

\begin{definition}\label{def-preserve}

Let $h$ be a matching function satisfying the plausibility constraints, and let $w$ be a weighting function. Then $w$ {\em totally preserves} $h$, if for all $\mathcal{F}, \mathcal{G}, \mathcal{F}^\prime, \mathcal{G}^\prime \in \mathbb{F}$ $h(\mathcal{F},\mathcal{G}) \ge h(\mathcal{F}^\prime,\mathcal{G}^\prime)$ holds iff $\hat{w}(mvt(\mathcal{F},\mathcal{G})) \ge \hat{w}(mvt(\mathcal{F}^\prime,\mathcal{G}^\prime))$ holds.

\end{definition}

The following is a counterexample to the existence of a weighting function $w$ totally preserving an arbitrary human-defined function $h: \mathbb{F} \times \mathbb{F} \rightarrow [0,1]$ that satisfies the plausibility constraints. Whether additional necessary conditions of matching measures can be claimed to define plausibility constraints, is an open problem.

\begin{example}

Let $(\mathcal{L},\le)$ be a lattice with top element $x_1$, bottom element $y$, and $n-1$ pairwise incomparable elements $x_i$ ($i=2,\dots,n$, $n \ge 4$). Filters of this lattice are of the form $\mathcal{F}_U=\{x_1\}\cup \{x_u\colon u\in U\}$ for any $U\subseteq\{2,3,\ldots ,n\}$ and $\mathcal{L}$ itself.

Let $w(x_i)=w_i$ and $w(y)=w_y$, then the lattice of matching value terms $(\mathbb{V},\le)$ consists of fractions $\frac{w_1+\sum_{a\in A}w_a}{w_1+\sum_{b\in B}w_b}$ where $A\subsetneq B\subseteq\{2,3,\ldots ,n\}$ and  $\frac{w_1+\sum_{a\in A}w_a}{w_y+\sum_{i=1}^n w_i}$. From Theorem \ref{lem-v2} we immediately obtain that $\frac{w_1+\sum_{a\in A}w_a}{w_1+\sum_{b\in B}w_b}\le \frac{w_1+\sum_{c\in C}w_c}{w_1+\sum_{d\in D}w_d}$ holds in lattice $(\mathbb{V},\le)$ iff $A\subseteq C$ and $C\setminus B=D\setminus B$.

Let us consider the filter in $(\mathbb{V},\le)$ generated by $\frac{w_1+\sum_{a\in A}w_a}{w_1+\sum_{b\in B}w_b}$ with $A=\{x_2,\ldots ,x_k\}$ and $B=\{x_2,\ldots ,x_k,x_{k+1},\ldots ,x_{\ell}\}$ for some $k+1<\ell$. We need a set of weights $w_i$, such that $\exists t$ with
\[ t\le  \cfrac{w_1+\sum_{c\in C}w_c}{w_1+\sum_{d\in D}w_d}\iff \cfrac{w_1+\sum_{a\in A}w_a}{w_1+\sum_{b\in B}w_b}\le \cfrac{w_1+\sum_{c\in C}w_c}{w_1+\sum_{d\in D}w_d} \]

in lattice $(\mathbb{V},\le)$. We claim that such a weight assignment does not exist. 

Indeed, suppose that $w_i$ is a good one, and assume without loss of generality that $w_k=\min_{i\in \{1,2,\ldots ,k\}}w_i$ and $w_{k+1}=\min_{i\in \{k+1,k+2,\ldots ,\ell\}}w_i$. Let $U=B\setminus\{k,k+1\}$ and $V=B\setminus\{k\}$. As $\cfrac{w_1+\sum_{a\in A}w_a}{w_1+\sum_{b\in B}w_b}\not\le \cfrac{w_1+\sum_{u\in U}w_u}{w_1+\sum_{v\in V}w_v}$ in lattice $(\mathbb{V},\le)$ we must have 
\[\frac{w_1+\sum_{u\in U}w_u}{w_1+\sum_{v\in V}w_v}<\frac{w_1+\sum_{a\in A}w_a}{w_1+\sum_{b\in B}w_b}\] 

numerically. This is equivalent to 
\[ \left(w_1+\sum_{u\in U}w_u\right)\left(w_1+\sum_{b\in B}w_b\right)<
\left(w_1+\sum_{a\in A}w_a\right)\left(w_1+\sum_{v\in V}w_v\right).\]

Writing in more comprehensible form the above inequality is 
\begin{gather*}
(w_1+w_2+\ldots +w_{k-1}+w_{k+2}+\ldots +w_{\ell})(w_1+w_2+\ldots+w_{\ell}) < \\
\qquad\qquad\qquad (w_1+w_2+\ldots +w_{k-1}+w_{k+1}+\ldots +w_{\ell}))(w_1+w_2+\ldots+w_k).
\end{gather*}

After simplification this is equivalent with
\[(w_1+w_2+\ldots +w_{k-1}+w_{k+2}+\ldots +w_{\ell})(w_{k+2}+\ldots +w_{\ell})<w_{k+1}w_k,\]

which contradicts that $w_{k-1}w_{k+2}\ge w_{k+1}w_k$ by the choice of indices.

\end{example}

\section{Matching Queries}\label{sec:queries}

In this section we address matching queries. According to Section \ref{sec:dl} our starting point is a knowledge base with a fixed TBox and an ABox that is determined by a set of profiles. Actually, we can consider the ABox and the profiles to define a large database. Each profile $P$ defines a filter $\mathcal{F}$ in the lattice $\mathcal{L}$ that is derived from the TBox. The purpose of matching queries is to match a profile to other profiles. This can basically two forms: 

\begin{enumerate}

\item If a profile $P_r$ describes a set of requested properties, then we are looking for those profiles that fit well to the requirements. If $\mu$ is a matching measure, this means to determine profiles $P$ such that $\mu(\mathcal{F}(P),\mathcal{F}(P_r))$ is high, where $\mathcal{F}(P)$ is the representing filter of the profile $P$ according to Definition \ref{def-profile}.

\item If a profile $P_g$ describes a set of given properties, then we are looking for those profiles to which this profile fits well, i.e. to determine profiles $P$ with high matching values $\mu(\mathcal{F}(P_g),\mathcal{F}(P))$.

\end{enumerate}

In both cases we expect an ordered answer, where the order is given by decreasing matching values.

In the recruiting area case (1) refers to a job offer of a company, which defines a requested profile, for which best matching candidates are sought. Usually, these candidates are taken from a well-defined set of applicants, where different applicants may have the same profile. Here case (2) refers to the opposite, a person searching for a suitable job.

In both cases we are usually only interested in high matching values, which motivates to look at top-$k$ queries with some user-defined positive integer $k$. We want to obtain the $k$ profiles with the highest matching values, or better---as there may be many equally-ranked profiles, which make it impossible to determine exactly the $k$ best ones---we want to get those profiles, for which there are less than $k$ better-ranked profiles. For our two cases this means to determine the results of the following two queries:
\begin{align*}
(1)\; & top_k(\mu(\cdot,\mathcal{G})) = \{ P \mid | \{ P^\prime \mid \mu(\mathcal{F}(P^\prime),\mathcal{G}) > \mu(\mathcal{F}(P),\mathcal{G}) | < k \} \qquad \text{with}\; \mathcal{G} = \mathcal{F}(P_r) \\
(2)\; & top_k(\mu(\mathcal{F},\cdot)) = \{ P \mid | \{ P^\prime \mid \mu(\mathcal{F},\mathcal{F}(P^\prime)) > \mu(\mathcal{F},\mathcal{F}(P)) | < k \} \qquad \text{with}\; \mathcal{F} = \mathcal{F}(P_g)
\end{align*}

We will investigate such top-$k$ queries in Subsection \ref{ssec:topk}. We will concentrate on case (1), as case (2) can be handled in a completely analogous way.

In addition we will address gap queries, which for a profile $P$ aim at a minimal enlargement---the difference constitutes the gap---such that $P$ will appear in the result of sufficiently many top-$k$ queries. Again the two cases above can be distinguished, which can be formalised by the following two queries using user-defined positive integers $k$ and $\ell$:

\begin{enumerate}

\item For a given profile $P$ determine a profile $P^\prime$ such that $\mathcal{F}(P^\prime) - \mathcal{F}(P)$ is minimal with respect to the condition
\[ \exists P_1 ,\dots, P_\ell . P^\prime \in top_k(\mu(\cdot,\mathcal{F}(P_i))) \qquad \text{for all} \; i = 1,\dots, \ell \]

\item For a given profile $P$ determine a profile $P^\prime$ such that $\mathcal{F}(P^\prime) - \mathcal{F}(P)$ is minimal with respect to the condition
\[ \exists P_1 ,\dots, P_\ell . P^\prime \in top_k(\mu(\mathcal{F}(P_i)),\cdot) \qquad \text{for all} \; i = 1,\dots, \ell \]

\end{enumerate} 

In the recruiting application area for a given profile of a job seeker a gap query determines additional skills that should be obtained by some training or education in order to increase chances on the job market, while a gap query for a requested job offer indicates additional incentives that should be added to make the job suitable for a larger number of highly skilled candidates. We will investigate such gap queries in Subsection \ref{ssec:gap} focusing again on case (1), while case (2) can be handled analogously.

\subsection{Top-$k$ Queries}\label{ssec:topk}

For a set of profiles $\mathbb{P}$ defined by filters in a lattice $\mathcal{L}$ we denote by $\varphi$ the conditions to be met by profiles in order to be selected, then $\mathbb{P}_\varphi$ denotes the set of profiles in $\mathbb{P}$ satisfying $\varphi$ and $P_r \in \mathbb{P}$ is a \emph{required} profile driving the selection by holding the conditions $\varphi$.

\begin{quote}

For all $P \in \mathbb{P}_\varphi$ and $P^\prime \in (\mathbb{P} - \mathbb{P}_\varphi)$, select $\lambda$ profiles, where $\lambda \geq k$, out of the set of profiles $\mathbb{P}_\varphi$ such that $P$ is selected and $P^\prime$ is not selected if $\mu(\mathcal{F}(P), \mathcal{F}(P_r)) > \mu(\mathcal{F}(P^\prime), \mathcal{F}(P_r))$ and no subset of $\mathbb{P}_\varphi$ satisfies this property.

\end{quote}

In order to obtain the best-$k$ matching profiles (either job or applicant profiles) we first need to query for filters representing those profiles. In the sequel we permit to simply write $\mu(P^\prime,P)$ instead of $\mu(\mathcal{F}(P^\prime),\mathcal{F}(P))$.

\subsubsection{Preliminaries}

Let $\mathcal{F}_r$ be a filter representing the required profile $P_r$. Then, consider $l$ filters $\mathcal{F}_1, \dots ,\mathcal{F}_\ell \in \mathbb{F}$ representing profiles that satisfy $\varphi$, i.e. they match $P_r$ with matching values above a threshold $t \in [0,1]$, i.e. $\mu(\mathcal{F}_i, \mathcal{F}_r) \geq t$ for $i=1, \dots, \ell$. In addition, every filter $\mathcal{F}_i$ represents a finite number $k_i$ of profiles $P_1^i, \dots, P_{k_i}^i$ matching $P_r$, so $\mu(P_j^i, P_r) \geq t$ for all $i=1, \dots, \ell$ and $j = 1 ,\dots, k_i$. 

Then we have $\sum_{i=1}^\ell k_i = \lambda$ and $\sum_{i=1}^{\ell - 1} k_i < k$. Furthermore, any non-selected filter $\mathcal{F}_{\ell+1}$ satisfies $\mu(\mathcal{F}_{\ell+1}, \mathcal{F}_r) < t$.

\begin{example}\label{exa:prbyfl}

Let $\mathcal{L}$ be a lattice with four elements---$\mathcal{L} = \{C_1, C_2, C_3, C_4\}$---where $C_1$ is the top element, $C_4$ is the bottom element, and $C_2, C_3$ are incomparable. This defines five filters $\mathbb{F} = \{\mathcal{F}_1, \mathcal{F}_2, \mathcal{F}_3, \mathcal{F}_4, \mathcal{F}_5 \}$, as shown in Figure \ref{fig-lattice-dexa} (a) and (b), respectively.

\begin{figure}
\noindent
	\begin{subfigure}{0.25\textwidth}
		\centering
		\includegraphics[scale=0.7]{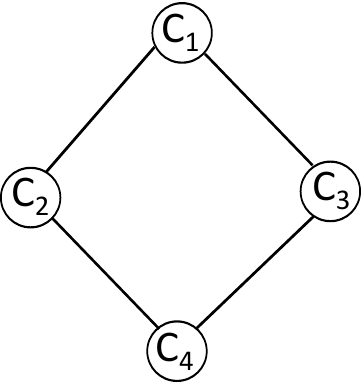} 
		\caption{Lattice}
		\label{fig:subim1}
	\end{subfigure}
	\begin{subfigure}{0.30\textwidth}\footnotesize
		\begin{align*}
			\mathcal{F}_1 &= \{C_1\} \\
			\mathcal{F}_2 &= \{C_2, C_1\} \\
			\mathcal{F}_3 &= \{C_3, C_1\} \\
			\mathcal{F}_4 &= \{C_3, C_2, C_1\} \\
			\mathcal{F}_5 &= \{C_4, C_3, C_2, C_1\} 
		\end{align*}
		\centering
		\caption{Filters}
		\label{fig:subim2}
	\end{subfigure}
	\begin{subfigure}{0.35\textwidth}\footnotesize
		{\renewcommand{\arraystretch}{1.2}
		\begin{tabular}{m{0.5cm} m{0.4cm} m{0.4cm} m{0.4cm} m{0.4cm} m{0.4cm}} 
			& $\mathcal{F}_1$	& $\mathcal{F}_2$ & $\mathcal{F}_3$	&$\mathcal{F}_4$	& $\mathcal{F}_5$\\
			\hline					
			\hline					
			$\mathcal{F}_1$ & 1	&	$\frac{1}{5}$ &	$\frac{1}{4}$	&	$\frac{1}{8}$	& $\frac{1}{10}$	\\
			\hline					
			$\mathcal{F}_2$ & 1	& 1							&	$\frac{1}{4}$							& $\frac{5}{8}$	& $\frac{1}{2}$ \\
			\hline					
			$\mathcal{F}_3$ & 1 & $\frac{4}{5}$	& 1							& $\frac{1}{2}$	& $\frac{2}{5}$ \\
			\hline					
			$\mathcal{F}_4$ & 1	& 1							& 1							& 1							& $\frac{4}{5}$ \\
			\hline					
			$\mathcal{F}_5$ & 1 & 1							& 1							& 1							& 1 \\
			\hline					
		\end{tabular}
		\centering
		\caption{Matching Measures}
		\label{fig:subim3}
		}
	\end{subfigure}
\caption{\label{fig-lattice-dexa}A lattice, its filters and corresponding matching values}

\end{figure}

If we assign the following weights to the elements of $\mathcal{L}$, $w(C_1) = \frac{1}{10}, w(C_2) = \frac{2}{5}, w(C_3) = \frac{3}{10}$, and $w(C_4) = \frac{1}{5}$ and calculate the matching values $\mu(\mathcal{F}_i, \mathcal{F}_j)$ (for $1 \leq i,j \leq 5$), we obtain the result shown in Figure \ref{fig-lattice-dexa}(c).

Assume a requested profile $A$ and four given profiles $\{B,C,D,E\}$. Let $A,B$ be represented by $\mathcal{F}_4$, $C$ by $\mathcal{F}_2$ and $D,E$ by $\mathcal{F}_3$, respectively using the filters in Figure \ref{fig-lattice-dexa}. Then we obtain the matching values $\mu(B, A) = 1$, $\mu(C, A) = \frac{5}{8}$ and $\mu(D, A) = \mu(E, A) = \frac{1}{2}$.

If we choose $k=3$, the final answer is $\{B, C, D, E\}$ with $\lambda = 4$ and with $t = \frac{1}{2}$, i.e. all profiles are in the answer to the query $top_3(\mu(\cdot,\mathcal{F}(A)))$. If we choose $k=2$, the answer is $\{B, C\}$ with $\lambda = 2$ and with $t = \frac{5}{8}$, i.e. the answer to the query $top_2(\mu(\cdot,\mathcal{F}(A)))$ contains just the profiles $B$ and $C$.

\end{example}

In order to obtain the best $l$ filters satisfying $\varphi$, we first need to know the minimum matching value representing $l$ filters. Thus, we start by selecting any $t$. If less than $\ell$ solutions are found, we decrease $t$. If more than $\ell$ solutions are found, we increase $t$. The search stops when the $\ell$ filters satisfying $\mu(\mathcal{F}_i, \mathcal{F}_r) \geq t$ for $i=1, \dots, \ell$ are found. 

With the obtained value $t$ we query for the related $k$ profiles $P_g$ where $\mu(P_g, P_r) \geq t$. This assumes to be given the matching measures between all filters in $\mathcal{L}$ and ultimately, between all profiles represented by filters. 

As seen in Example \ref{exa:prbyfl}, matching values between filters define a matrix, where columns represent the required filters $\mathcal{F}_r$ and rows represent the given filters $\mathcal{F}_g$ as depicted in Figure \ref{fig-lattice-dexa}(c). Obtaining the solutions in $top_k(\mu(\cdot,\mathcal{G}))$ or $top_k(\mu(\mathcal{F},\cdot))$ refers to one column or one row in this matrix, respectively. The process is analogous although, the perspective is different. While reading the measures from the columns provides the so called \emph{fitness} between profiles $\mu(P_g, P_r)$, the measures read from rows are the inverted measure $\mu(P_r, P_g)$. 

If we focus on columns, when querying for a particular $\mathcal{F}_r$, there would be $\mathcal{F}_i$ ($i=1, \dots, \ell$) where $\mu(\mathcal{F}_i, \mathcal{F}_r) \geq t$. We assume all elements are in total order according to the $\leq$ relation of $\mu(\mathcal{F}_i, \mathcal{F}_r)$. The advantage is that when searching for any given $\ell$ and $t$ we only need to point to the right element in the column and search for the next consecutive $\ell-1$ elements in descending order of $\mu$. The process is analogous if searching on rows.

\subsubsection{Data Structures}

We explain next how we organize profiles. We first assume an identification label $\rho_i$ representing the number $i$ of rows and, $\sigma_i$ representing the number $i$ of columns in $\mathcal{M}$ where $i > 0$.  

\begin{definition} \label{def:col}

Given a \emph{required} filter $\mathcal{F}_r$, for every element $\mu_i$ in column $\sigma_i$ in $\mathcal{M}$ representing $\mu(\mathcal{F}_{g_x}, \mathcal{F}_r)$ there is a \emph{profile record} 
\[ (\mu_i, n^>_i, n^=_i, n^<_i, \emph{next}, \emph{prev}, p) \]

describing the matching profiles $P_g$ where:

\begin{itemize}

\item $n^>_i$ denotes the number of profiles $P_g$ where $\mu (P_g, P_r) > \mu_i$,

\item $n^=_i$ denotes the number of profiles $P_g$ where $\mu (P_g, P_r) = \mu_i$,

\item $n^<_i$ denotes the number of profiles $P_g$ where $\mu (P_g, P_r) < \mu_i$,

\item \emph{next} is a reference to the next matching value in $\sigma_i$ where $\mu(\mathcal{F}_{g_{(x+1)}}, \mathcal{F}_r) \geq \mu_i$,

\item \emph{prev} is a reference to the next matching value in $\sigma_i$ where $\mu(\mathcal{F}_{g_{(x-1)}}, \mathcal{F}_r) \leq \mu_i$,

\item $p$ is a reference to a linked-list of filters matching $\mathcal{F}_r$. 

\end{itemize}

\end{definition}

The numbers $n^>_i, n^=_i, n^<_i$ are significantly important when determining the number of profiles represented by a filter without actually querying for them, i.e., if $(n^>_i + n^=_i) \geq k$ for a given $\mu_i$ we get all profiles needed.  

References \emph{Next} and \emph{Prev} make possible to track the following greater or smaller matching value by following the references.
Every profile record contains additionally a reference $p$ to the related profiles in a $\sigma_i$ column where they are organized in a transitive closure structure, ordered by the $\leq$ elements of $\mu$. 

Note that as matching values are pre-computed, the approach also works for simultaneous (or aggregated) matching with multiple matching measures.

\begin{example}\label{exa:mrecord}

Figure \ref{fig:LList} shows a representation of profile records corresponding to $\mathcal{F}_4$ (filter representing profile $A$) from Example \ref{exa:prbyfl}. Note that only the relation to filters and profiles of $\mu = 0.5$ are shown in here in order to simplify the graph.

\end{example}

\begin{figure}
\begin{center}
\includegraphics[width=\textwidth]{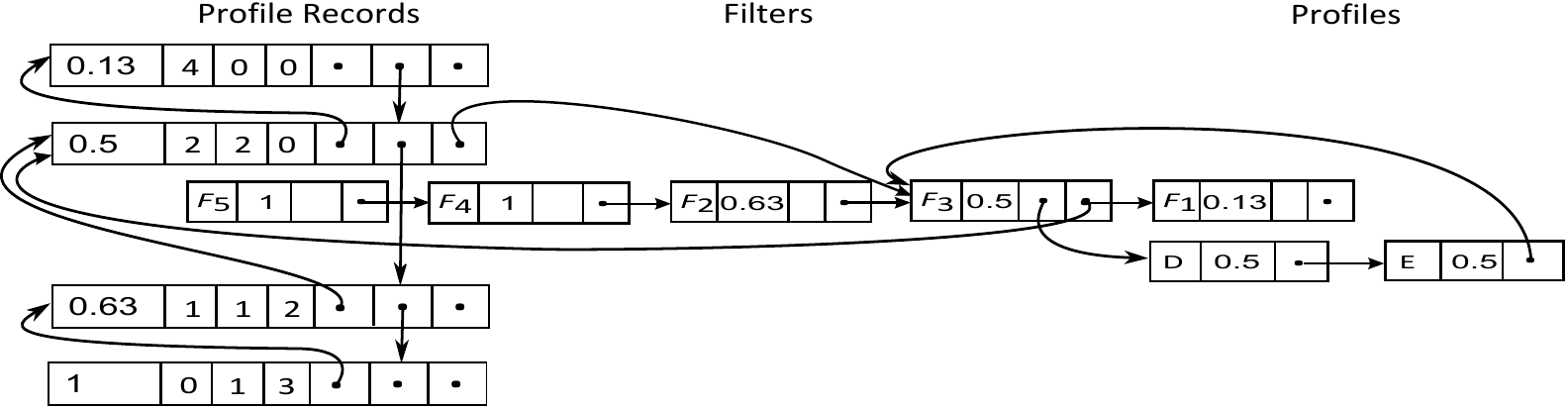} 
\caption{\label{fig:LList}Linked list of matching measures}
\end{center}
\end{figure}

Organizing data in this rings and spiders structure known from network databases leads to a better performance on the search for the top-$k$ elements. Starting by fetching a column $\sigma_i$ and together with $n^>_i$, $n^=_i$, $n^<_i$ calculate the profile records needed to get the $k$ matching profiles. Then, following the ordered linked-list structure of filters, and profiles afterward, until the $k (\lambda)$ elements are found. The definition of profile records on rows $\rho_i$ of $\mathcal{M}$ is analogous to Definition \ref{def:col}.

\subsubsection{Algorithmic Solution}

We now present an implementation of the matrix of matching values, profile records and linked-list of profiles in a relational database schema, and an algorithm that implements our definition of top-$k$ queries. Our implementation approach is designed on a relational database schema
for the storage and maintenance of filters, profiles and matching measures of an instance of $\mathcal{L}$. 

The relational schema is composed of 9 relations although, we present only two relations: \emph{ProfileRecords} and \emph{MatchingProfiles} that describe profile records as in Definition \ref{def:col} and the linked-lists of matching profiles, respectively. Figure \ref{fig:relDB} shows an example of the relations representing the elements involved in Example \ref{exa:prbyfl}.	 

\begin{figure}
\begin{center}
	\begin{subfigure}{0.50\textwidth}
		\tiny
		\begin{tabular}{|c|c|c|c|c|c|c|}
			\hline
			ID & \vtop{\hbox{\strut Required}\hbox{\strut Filter}} & Fitness & \vtop{\hbox{\strut Greater}\hbox{\strut Fitness}} & \vtop{\hbox{\strut Equal}\hbox{\strut Fitness}} & \vtop{\hbox{\strut Lesser }\hbox{\strut Fitness}} & \vtop{\hbox{\strut Next}\hbox{\strut ID}}\\
			\hline
			1	& $\mathcal{F}_4$ & 1 	 & 0  & 1  & 3 & 2 \\
			2	& $\mathcal{F}_4$ & 0.63 & 1  & 1  & 2 & 3 \\
			3	& $\mathcal{F}_4$ & 0.50 & 2  & 2  & 0 & 4 \\
			4	& $\mathcal{F}_4$ & 0.13 & 4  & 0  & 0 & null \\
			\hline
		\end{tabular}\caption{Relation \emph{ProfileRecords}}
	\end{subfigure}

	\begin{subfigure}{0.50\textwidth}
		\tiny
		\begin{tabular}{|c|c|c|c|c|c|c|}
			\hline
			ID & \vtop{\hbox{\strut Required}\hbox{\strut Filter}} & \vtop{\hbox{\strut Required}\hbox{\strut Profile}} & \vtop{\hbox{\strut Given}\hbox{\strut Filter}} & \vtop{\hbox{\strut Given}\hbox{\strut Profile}} & Fitness & \vtop{\hbox{\strut Next}\hbox{\strut ID}}\\
			\hline
			1	& $\mathcal{F}_4$ 	 & A 	  & $\mathcal{F}_4$	 & B 	& 1 		& null 	\\
			2	& $\mathcal{F}_4$ 	 & A 	  & $\mathcal{F}_2$	 & C 	& 0.63 	& null 	\\
			3	& $\mathcal{F}_4$ 	 & A 	  & $\mathcal{F}_3$	 & D 	& 0.5 	& 4 		\\
			4	& $\mathcal{F}_4$ 	 & A 	  & $\mathcal{F}_3$	 & E 	& 0.5 	& null 	\\
			\hline
		\end{tabular}\caption{Relation \emph{MatchingProfiles}}
	\end{subfigure}
\caption{Example of relations \emph{ProfileRecords} and \emph{MatchingProfiles}}\label{fig:relDB}
\end{center}
\end{figure}

For every \emph{RequiredFilter} in \emph{ProfileRecords} there is a number of matching measure, that represent the number of elements per column $\sigma_i$ in $\mathcal{M}$. The attribute \emph{NextID} in \emph{ProfileRecords} is a reference to another tuple (ID) in the relation defining the $\leq$ relation of elements of \emph{Fitness}. Attributes \emph{GreaterFitness}, \emph{EqualFitness} and \emph{LesserFitness} represent, respectively, the elements $n^>_i$, $n^=_i$, $n^<_i$ from profile records as in Definition \ref{def:col}.

In turns, for every \emph{RequiredFilter}($\mathcal{F}_r$) in \emph{ProfileRecords} there is a finite number of \emph{GivenFilters} in \emph{MatchingProfiles} that match the requirements in $\mathcal{F}_r$. The attribute \emph{NextID} is a reference to another tuple (ID) in the relation defining the $\leq$ relation of elements of \emph{Fitness}. Note that a value of \emph{null} represents the end of the list for the \emph{GivenFilter}.

Filters in a lattice $\mathcal{L}$ represent the properties of profiles via the hierarchical dependency of concepts in $\mathcal{L}$. Thus, for every \emph{required} profile $P_r$ in $\mathbb{P}$ there is a \emph{required} filter $\mathcal{F}_r \in \mathcal{L}$ representing the profile. Then retrieving the top-$k$ candidate profiles for a required filter from our relational schema is mainly performed by querying on relations \emph{ProfileRecords} and \emph{MatchingProfiles}. 

\begin{framed}\footnotesize \noindent
	\textbf{Algorithm: Top-$k$}\\
	\textbf{Input}: \\ 	
	${}$ \kern 1pc required filter: $\mathcal{F}_r$, number of matching profiles: $k$, matching threshold: $\mu$\\
	\textbf{Output}:\\
	${}$ \kern 1pc matching profiles: $P_{g_1},\dots, P_{g_k}$, measures: $\mu_{1}, \dots, \mu_{k}$\\ 
	\textbf{Begin}
	\newline
		1 \kern 1pc CREATE relation \emph{Results} = (GivenProfile, Fitness, NextID) \\
		2 \kern 1pc (fitness, count, nextid) $:= \pi_{(3,5,7)} \big( \sigma_{\underset {\emph{max}(\emph{Fitness}))}{(2=\mathcal{F}_r,}} $ (\emph{ProfileRecords}) \big) \\	
		3 \kern 1pc WHILE (count $< k$) OR (fitness $\geq \mu$) DO\\
		4 \kern 2pc  \emph{Results} $\leftarrow \pi_{(5,6,7)} \big( \sigma_{\underset{2=\mathcal{F}_r)}{(6=\text{fitness},}} (\emph{MatchingProfiles}) \big) $ \\
		5 \kern 2pc  next := \emph{Results.NextID}\\
		6 \kern 2pc  WHILE (next IS NOT NULL) DO\\
		7 \kern 3pc    \emph{Results} $\leftarrow \pi_{5,6,7} \big( \sigma_{(2=\mathcal{F}_r)} (\emph{MatchingProfiles}) \underset{1=3}{\bowtie} \emph{Results} \big)$ \\
		8 \kern 2pc  END WHILE\\
		9 \kern 2pc  (fitness, total, nextid) $:= \pi_{(3,5,7)} \big( \sigma_{(1=\text{nextid})} $ (\emph{ProfileRecords}) \big) \\
		10 \kern 2pc count := count + total\\
		11 \kern 1pc END WHILE\\
		12 \kern 1pc RETURN ($\pi_{1,2}$(\emph{Results}))\\
	\textbf{End} 
\end{framed} 

The algorithm Top-$k$ returns an ordered list of top-$k$ profiles matching a given filter. We use relational algebra notation thus, $\sigma$, $\pi$ and $\bowtie$ are the \emph{selection}, \emph{projection} and \emph{natural join} operators, respectively. Numeric subscripts are used to denote relation attributes. For instance, $\pi_1(\emph{MatchingProfiles})$ is the projection of attribute \emph{ID} of relation \emph{MatchingProfiles}.

The algorithm accepts as inputs: the required filter $\mathcal{F}_r$, the number $k$ of matching profiles and the minimum matching value $\mu$ to search for. The outputs are: the $k$ matching profiles $P_{g_1},\dots, P_{g_k}$ and their matching measures $\mu_{1}, \dots, \mu_{k}$.

With $\mathcal{F}_r$, the algorithm fetches the tuple with the greatest value of \emph{Fitness} in \emph{ProfileRecords} (line 2) and follows the references on \emph{NextID} (line 9) until the $k$ tuples are reached or $\mu(\mathcal{F}_g, \mathcal{F}_r) < t_i$ (line 3). Then, for every $\mathcal{F}_g$ in \emph{MatchingProfiles}, the algorithm queries on the linked-list of profiles (lines 6-8) and appends them in the temporary relation \emph{Results} (line 7). 
Note that by using ``fitness $\geq \mu$'' in line 3, we include the $\lambda - k$ elements as in Definition \ref{def:selection}.
The algorithm finishes by returning the elements of \emph{GivenProfile} and \emph{Fitness} on the tuples of \emph{Results}.

An implementation of B-Tree indexes on elements of \emph{Fitness} (\emph{ProfileRecords} and \emph{MatchingProfiles}) in order to access the sorted elements, as well as indexes on \emph{RequiredFilter} (\emph{ProfileRecords} and \emph{MatchingProfiles}) for random access is expected to improve performance. 
The implementation of a parallel processing on the search of matching profiles given the required profile records by calculating $(n^>_i + n^=_i)$ is another point of improvement.

\subsection{Gap Queries}\label{ssec:gap}

Let $P$ be a profile and let $\mathcal{F} = \mathcal{F}(P)$ be its representing filter in the lattice $\mathcal{L}$. Assume that a matching measure $\mu$ on $\mathcal{L}$ is fixed. We concentrate on the following case of gap queries---precisely {\em $(k,\ell)$-gap queries} for fixed positive integers $k$ and $\ell$---to determine a profile $P^\prime$ such that $\mathcal{F}(P^\prime) - \mathcal{F}$ is minimal with respect to the condition
\[ \exists P_1 ,\dots, P_\ell . P^\prime \in top_k(\mu(\cdot,\mathcal{F}(P_i))) \qquad \text{for all} \; i = 1,\dots, \ell . \]

Here and in the following minimality always refers to set inclusion. We proceed as follows:

\begin{enumerate}

\item Determine $top_\ell(\mu(\mathcal{F},\cdot)) = \{ P_1 ,\dots, P_m \}$ with $m \ge \ell$. Let $\mathcal{G}_i = \mathcal{F}(P_i)$ for $i = 1,\dots, m$ denote the filters corresponding to the profiles resulting from this top-$\ell$ query.

\item Determine $top_k(\mu(\cdot, \mathcal{G}_i)) = \{ P_1^i ,\dots, P_{k_i}^i \}$ with $k_i \ge k$ for all $i = 1,\dots,m$.

\item Let $\mathcal{D}_i^{(j)} = \mathcal{F}(P_i^j) - \mathcal{F}$ for $i = 1,\dots,m$ and $j = 1,\dots, k_i$. Each such $\mathcal{D}_i^{(j)}$ determines a gap with respect to the top-$k$ query with $\mathcal{G}_i$. As we are only interest in a minimal gap we discard all those pairs $(i,j)$, for which another $i^\prime$ exists with $\mathcal{D}_i^{(j)} \supseteq \mathcal{D}_{i^\prime}^{(j)}$.

\item Now select $\ell$ different indices $i_1 ,\dots, i_\ell$ out of $\{ 1 ,\dots, m \}$ and for each $i_x$ select $k$ different indices $j_1(i_x) ,\dots, j_k(i_x)$ out of $\{ 1 ,\dots, k_{i_x} \}$ such that $\mathcal{D}_{i_x}^{(j_y(i_x))}$ (for $x = 1 ,\dots, \ell$ and $y = 1, \dots, k$) has not been discarded in (3).

\item For each such selection of indices the union $E = \bigcup\limits_{i=1}^\ell \mathcal{D}_{i_x}^{j_h(i_x)}$ defines a {\em gap candidate}.

\end{enumerate}

Due to our construction each gap candidate $E$ satisfies the condition for the profile $P^\prime = P \cup E$ there exists at least $\ell$ profiles such such $P^\prime$ appears in the result of the top-$k$ query for all these profiles. This gives rise to the following main result.

\begin{theorem}

Each minimal gap candidate $E$ is a result for the $(k,\ell)$ gap query for profile $P$.

\end{theorem}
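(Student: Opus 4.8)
The plan is to split the proof into two parts: \emph{feasibility} of every gap candidate, and the \emph{minimality upgrade}. For feasibility --- essentially the claim announced right before the theorem --- I would fix a gap candidate $E$ built from distinct indices $i_1,\dots,i_\ell\in\{1,\dots,m\}$ and, for each $i_x$, non-discarded indices $j$ as in steps~(4)--(5), and put $\mathcal{F}'=\mathcal{F}\cup E$; this is again a filter, being a union of filters, and it is the representing filter of the profile $P'$ obtained by adjoining to $P$ the finitely many profiles $P_{i_x}^{j}$ selected in step~(5). For every chosen $i_x$ and selected $j$ one has $\mathcal{F}'\supseteq\mathcal{F}\cup\mathcal{D}_{i_x}^{(j)}=\mathcal{F}\cup\mathcal{F}(P_{i_x}^{j})\supseteq\mathcal{F}(P_{i_x}^{j})$. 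Monotonicity of $\mu$ in its first argument --- immediate from Definition~\ref{def-matching}, since $\mathcal{F}_1\subseteq\mathcal{F}_1'$ forces $\mathcal{F}_1\cap\mathcal{G}\subseteq\mathcal{F}_1'\cap\mathcal{G}$ and $w$ is additive with positive weights (it also follows by iterating property~(4) of Lemma~\ref{lem-plausible}) --- then gives $\mu(\mathcal{F}',\mathcal{G}_{i_x})\ge\mu(\mathcal{F}(P_{i_x}^{j}),\mathcal{G}_{i_x})$. Since $P_{i_x}^{j}\in top_k(\mu(\cdot,\mathcal{G}_{i_x}))$, fewer than $k$ profiles strictly exceed it on $\mu(\cdot,\mathcal{G}_{i_x})$, hence fewer than $k$ strictly exceed $P'$, so $P'\in top_k(\mu(\cdot,\mathcal{G}_{i_x}))$. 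The $\ell$ profiles $P_{i_1},\dots,P_{i_\ell}$ then witness the existential condition, so $E$ is feasible.

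For the minimality upgrade I would show that every feasible extension $F$ of $P$ (every $F$ for which $\mathcal{F}\cup F$ is the representing filter of a profile satisfying the condition) contains a gap candidate. Granting this, if $E$ is a minimal gap candidate and $F\subseteq E$ is feasible, choose a gap candidate $E'\subseteq F\subseteq E$; minimality of $E$ among gap candidates forces $E'=E$, hence $F=E$, so $E$ is minimal among all feasible extensions and the profile $P'$ with $\mathcal{F}(P')=\mathcal{F}\cup E$ is a result for the $(k,\ell)$-gap query. To extract a gap candidate from a feasible $F$, let $P'$ have representing filter $\mathcal{F}\cup F$ with witnessing requested profiles $Q_1,\dots,Q_\ell$, $P'\in top_k(\mu(\cdot,\mathcal{F}(Q_x)))$, and carry out two reductions. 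First, replace each $\mathcal{F}(Q_x)$ by one of the filters $\mathcal{G}_i$ produced in step~(1): because $\mathcal{F}\cup F\supseteq\mathcal{F}$ and $\mu$ is monotone in its first argument, the requested profiles into whose top-$k$ $P'$ can enter with the least help from $F$ are those for which $P$ already fits best, i.e. those enumerated (as a superset) by $top_\ell(\mu(\mathcal{F},\cdot))$. Second, for a reduced witness $\mathcal{G}_i$, membership $P'\in top_k(\mu(\cdot,\mathcal{G}_i))$ means $\mathcal{F}\cup F$ reaches the matching level of some profile $P_i^{j}$ still lying in $top_k(\mu(\cdot,\mathcal{G}_i))$; the discard rule of step~(3) retains the subset-minimal $\mathcal{D}_i^{(j)}$, and one checks $F\supseteq\mathcal{D}_i^{(j)}$ for such a retained $j$. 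The union of these $\ell$ retained $\mathcal{D}_i^{(j)}$'s is a gap candidate contained in $F$; and since step~(3) has already removed dominated gaps, such a union is subset-minimal precisely when it is a minimal gap candidate, which closes the argument.

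The main obstacle is the first reduction --- justifying that the witnessing requested profiles may be assumed to lie among those delivered by $top_\ell(\mu(\mathcal{F},\cdot))$, i.e. that steps~(1)--(3) delimit an exhaustive search space for minimal gaps. This is exactly where monotonicity of $\mu$ in the first argument has to be pushed: enlarging $P$ never lowers any matching value, and entering a requested profile's top-$k$ only requires $\mathcal{F}\cup F$ to reach the level of one profile already in that top-$k$ (the domination inequality from the feasibility part), so a requested profile for which $P$ fits comparatively poorly cannot produce a strictly smaller gap than one already reachable through a top-$\ell$ requested profile. The remaining steps --- the second reduction and the bookkeeping that a union of $\ell$ retained $\mathcal{D}_i^{(j)}$'s is subset-minimal iff it is a minimal gap candidate --- are routine given the discard rule and additivity of $w$. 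If one prefers not to restrict the witnessing requested profiles in this way, the theorem should be read relative to the search space fixed by steps~(1)--(3), with the same argument applying verbatim.
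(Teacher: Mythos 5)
Your feasibility half is correct and is, in substance, everything the paper itself offers: the theorem is stated without a proof environment, and the only justification given is the one-sentence remark preceding it (``Due to our construction each gap candidate $E$ satisfies the condition\dots''). Your version makes that remark precise in the right way: $\mathcal{F}\cup E$ is a union of filters (namely $\mathcal{F}$ and the $\mathcal{F}(P_{i_x}^{j})$), it contains each $\mathcal{F}(P_{i_x}^{j})$, monotonicity of $\mu$ in the first argument follows from additivity of $w$, and the set of profiles strictly exceeding $P'$ on $\mu(\cdot,\mathcal{G}_{i_x})$ is contained in the set strictly exceeding $P_{i_x}^{j}$, which has fewer than $k$ elements. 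That part stands and matches the paper's intent.

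The minimality half contains a genuine gap, and it is exactly the one you flag: the first reduction --- that the witnessing requested profiles of an arbitrary feasible extension may be assumed to lie among the $\mathcal{G}_i$ delivered by $top_\ell(\mu(\mathcal{F},\cdot))$ --- does not follow from monotonicity. Whether a small extension of $P$ suffices to enter $top_k(\mu(\cdot,\mathcal{F}(Q)))$ is governed not by how well $P$ matches $Q$ in absolute terms but by the entry threshold of $Q$'s top-$k$, i.e.\ by how well the \emph{other} profiles in the database match $Q$. A requested profile $Q$ with $\mu(\mathcal{F},\mathcal{F}(Q))$ low enough to be excluded from $top_\ell(\mu(\mathcal{F},\cdot))$ may nonetheless already have $P$ in its top-$k$ (if every profile in the database matches $Q$ poorly); then the empty extension is feasible via such $Q$'s while every gap candidate produced by steps (1)--(5) is non-empty. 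So steps (1)--(3) do not delimit an exhaustive search space, ``minimal gap candidate'' need not coincide with ``minimal feasible extension'', and the theorem as literally stated is false without the restriction you mention in your closing caveat. That caveat is therefore not an optional stylistic choice but the only reading under which the statement can be proved; since the paper establishes nothing beyond feasibility, your proposal is more informative than the original, but as a proof of the literal statement it does not close and cannot be closed without weakening the statement or modifying step (1) of the construction.
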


\section{Enriched Matching}\label{sec:probability}

In this section we extend the matching theory from Section \ref{sec:matching}. So far the developed theory is grounded in matching measures that are defined on pairs of filters in a lattice $\mathcal{L}$. The lattice $\mathcal{L}$ is derived from a TBox of a knowledge base representing the application knowledge. Thus $C_1 \le C_2$ holds in $\mathcal{L}$, if $C_1$ and $C_2$ are concepts satisfying $C_1 \sqsubseteq C_2$ in the TBox. Logically this means that the implication $C_1(x) \Rightarrow C_2(x)$ holds for all individuals $x$. However, as already mentioned in the introduction the presence of a particular concept in a profile may only partially imply the presence of another one. Therefore, we capture such additional partial implications by enriching the lattice by extra weighted edges. Then we investigate how this can be used to find the most suitable matchings. As these extra edges can introduce cycles, the notion of filter is no longer applicable. Therefore, we use a graph-based approach to extend the given and requested profiles. We show that this approach corresponds to the definition of fuzzy filters \cite{hajek:1998}. We also provide an interpretation in probabilistic logic with maximum entropy semantics \cite{beierle:entropy2015,kern-isberner:ai2004}.

\begin{figure}
\begin{center}
\includegraphics[width=122mm]{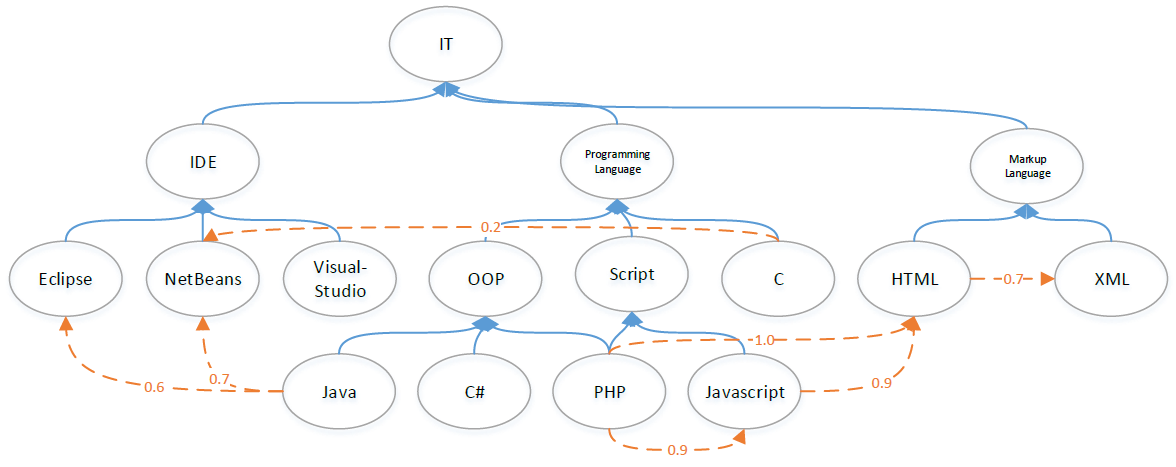}
\caption{\label{fig-graph}Fragment of a graph with lattice edges (solid) and extra edges (dashed) and assignment of degrees.}
\end{center}
\end{figure}

For the extended matching theory we could investigate again the problems of matching learning and querying that we dealt with in Sections \ref{sec:learning} and \ref{sec:queries} for the case of the filter-based matching theory. However, such investigations are still subject to ongoing research and beyond the scope of this article.

\subsection{Maximum Length Matching}

Let $G=(V, E)$ be a directed weighted graph where $V$ is a finite non-empty set of nodes and $E=\{E_O \cup E_E\} \subseteq V \times V$ is a set of edges. Each node represents a property and a directed edge $e=(v_i,v_j) \in E_O$ expresses that $v_i$ is a specialization of $v_j$ (write also $v_i \le v_j$ for this). Therefore, we claim that the subgraph $(V,E_O)$ still defines a lattice with the partial order $\le$, and we call edges in $E_O$ {\em lattice edges}. A directed edge $e=(v_i,v_j) \in E_E$ (called {\em extra edge}) represents a conditional dependency between the $v_i$ and $v_j$, namely that property $v_i$ may imply property $v_j$. For node $s, t \in V$ let $p_E(s ,t)$ denote the set of all directed paths from $s$ to $t$.

Let $d: E \rightarrow (0, 1]$ be a function with $d(e)=1$ for all lattice edges $e \in E_O$. For all extra edges $e \in E_E$ the value $d(e)$ expresses the degree (or conditional probability) to which property $v_j$ holds, when property $v_i$ is given. See Figure \ref{fig-graph} for a fragment of such a graph.  

\begin{definition}

A {\em fuzzy set} on $S$ is a mapping $f:S \rightarrow [0,1]$. A fuzzy set is called empty if $f$ is identically zero on $S$. For $t \in [0,1]$ the set $f_{t}=\{ s \in S | f(s) \geq t \}$ is called a {\em level subset} of the fuzzy set $f$. 

If $(S, \preceq)$ is a lattice and $f$ is a fuzzy set on $S$, then $f$ is called a {\em fuzzy filter}, if for all $t \in [0,1]$, $f_t$ is either empty or a filter of $S$.

\end{definition}

Note that as $S$ is a finite set, we can define a fuzzy set by enumerating all elements in $S$ with their assigned values (called the grade of membership) if that value is greater than zero, i.e. $f = \{ (s,f(s)) | s \in S \text{ and } f(s)>0\}$. For fuzzy sets $f,g$ in $S$, we define their intersection and their union as $(f \cap g)(s):=\min\{ f(s), g(s) \}$ and $(f \cup g)(s):=\max\{ f(s), g(s) \}$ for all $s \in S$, respectively.

A weighting function $w$ on a lattice $\mathcal{L}$ can be easily extended to fuzzy sets on $\mathcal{L}$ by $w(f) = \sum_{C \in \mathcal{L}} f(C) \cdot w(\{ C \})$. If $f,g$ are fuzzy filters on $\mathcal{L}$, then the matching measure $\mu$ defined by the weighting function $f$ naturally extends to $\mu(f,g) = \cfrac{w(f \cap g)}{w(g)}$.

We will now show how to derive fuzzy filters from graphs as above, i.e. lattices that are extended by extra edges. For this we extend the representing filters of a set of properties $O \subseteq V$ to fuzzy sets. The extension $\widehat{O}$ of $O$ with respect to $E$ is defined as the set of all the properties $v \in V$ that are reachable from $O$ via a directed path containing lattice and extra edges, and the grade of membership assigned to each reachable $v$ is the length of the longest path between $O$ and $v$, i.e.
\[ \widehat{O} = \{(v,\gamma_v) |  v \in V \text{ and } \exists v^\prime \in O: |p_E(v^\prime, v)| \ge 1 \text{ and } \gamma_v=\max_{v^\prime \in O, p \in p_E(v^\prime, v)} length(p) \} , \]

where $length(p) = \prod_{i=1}^{n-1} d((v_i,v_{i+1}))$ is the product of the degrees assigned to the edges on the path $p$. If the edge weights are interpreted as probabilities, then the length of longest path from $v^\prime$ to $v$ has to be interpreted as the conditional probability of property $v$ under the condition $v^\prime$.

In general, finding the longest path between two nodes in a graph is a NP-hard problem. In our case, however, the length of a path is defined as the product of the degrees of the edges on the path, so it is $exp(- L)$ with
\[ L = - \max\limits_{p \in p_E(v^\prime, v)} \log{\{\prod_{i=1}^{n-1} d((v_i,v_{i+1}))\}} = 
\min\limits_{p \in p_E(v^\prime, v)} \sum_{i=1}^{n-1} -\log{d((v_i,v_{i+1}))} . \]

Moreover, as $d(v_i,v_{i+1}) \in (0,1]$ holds for all edges, we have $-\log{d((v_i,v_{i+1}))} \ge 0$, i.e. the determination of the longest paths can be reduced to a single-source shortest path problem with non-negative edge weights, which can be solved with Dijkstra's algorithm in $O(|E| + |V|\log{|V|})$ time \cite{dijkstra:nm1959}.

\begin{theorem}

Let $(\mathcal{L}, E_O \cup E_E)$ be a directed graph with degree function $d$ extending the lattice $(\mathcal{L}, \le)$, and let $O \subseteq \mathcal{L}$ be a non-empty subset. Then the extension $\widehat{O}$ of $O$ with respect to $E$ is a fuzzy filter in $\mathcal{L}$.

\end{theorem}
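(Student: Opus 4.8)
The plan is to show that every nonempty level subset $\widehat{O}_t = \{ v \in \mathcal{L} \mid \gamma_v \ge t \}$ of the fuzzy set $\widehat{O}$ is a filter in $(\mathcal{L}, \le)$. By the definition of a fuzzy filter this is exactly what needs to be checked, so the whole proof reduces to verifying the upward-closure property for each threshold $t \in [0,1]$: if $v \in \widehat{O}_t$ and $v \le v'$ in $\mathcal{L}$, then $v' \in \widehat{O}_t$.

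**First I would** unwind the definitions. Fix $t \in [0,1]$ and suppose $\widehat{O}_t \ne \emptyset$. Take $v \in \widehat{O}_t$; then there is some $v'' \in O$ and a directed path $p \in p_E(v'', v)$ (using lattice and extra edges) with $\mathit{length}(p) = \gamma_v \ge t$ — here I use that $\gamma_v$ is the \emph{maximum} over all such paths, so in particular some realizing path exists with length $\ge t$. Now let $v \le v'$ in $\mathcal{L}$. The key observation is that $v \le v'$ means, by the definition of the graph, that $(v,v')$ is obtained from a chain of lattice edges $e \in E_O$, each with $d(e) = 1$; concretely there is a directed path $q$ from $v$ to $v'$ consisting entirely of lattice edges, hence $\mathit{length}(q) = \prod d(e) = 1$. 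Concatenating $p$ with $q$ yields a directed path from $v'' \in O$ to $v'$ of length $\mathit{length}(p) \cdot \mathit{length}(q) = \gamma_v \cdot 1 = \gamma_v \ge t$. Therefore $\gamma_{v'} = \max_{\ldots} \mathit{length}(\cdot) \ge \gamma_v \ge t$, so $v' \in \widehat{O}_t$. This establishes that $\widehat{O}_t$ is upward closed, and since it is nonempty it is a filter.

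**The main obstacle** — or rather the only subtlety worth flagging — is the claim that $v \le v'$ in the lattice is witnessed by a directed path of lattice edges in the graph. This is not automatic from an abstract lattice order; it relies on the standing assumption in this section that $(V, E_O)$ \emph{defines} the lattice $(\mathcal{L}, \le)$, i.e. $E_O$ is (or generates, via transitive-reflexive closure) exactly the covering/order relation $\le$. Under the natural reading that $v \le v'$ iff there is an $E_O$-path from $v$ to $v'$ (which is how the paper sets things up when it says ``the subgraph $(V, E_O)$ still defines a lattice with the partial order $\le$''), the argument goes through verbatim. If instead one only has a covering relation, the same conclusion follows by transitivity of path concatenation, each covering step contributing a factor $1$. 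A minor side point: one should note $\widehat{O}$ is nonempty whenever $O \ne \emptyset$, since every $v'' \in O$ reaches itself, but this is only needed to know $\widehat{O}_t$ is eventually nonempty and does not affect the filter property. No case analysis on $t$ is required beyond the empty/nonempty dichotomy, and no appeal to the shortest-path reformulation is needed — that remark is about computability, not about the structural claim.
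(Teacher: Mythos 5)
Your overall strategy is exactly the paper's: fix a threshold $t$, take $v$ in the level set with a maximizing path $p$ from some element of $O$, use the fact that $v \le v'$ is witnessed by a chain of lattice edges each carrying degree $1$, and splice the two together to bound $\gamma_{v'}$ from below by $\gamma_v$. The definitional unwinding, the observation that lattice edges contribute a factor of $1$, and the reduction to upward closure of level sets are all correct and match the paper.

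However, there is one step you pass over in a single word --- ``concatenating'' --- and it is precisely the step to which the paper devotes the entire second half of its proof. The concatenation of $p$ (from $v''\in O$ to $v$) with the lattice-edge path $q$ (from $v$ to $v'$) need not be a directed path: $q$ may revisit vertices already on $p$, in which case the spliced object is only a walk, and $p_E(v'',v')$ as defined ranges over paths. The paper handles this by induction on the first common node $t$ of the two pieces, replacing $p$ by its prefix $p_{v'',t}$ (whose length is at least $length(p)$ because every discarded edge has degree $\le 1$) and $q$ by its suffix from $t$, and recursing until the pieces are internally disjoint. Your argument as written silently assumes the problem does not arise. The gap is real but easily repaired: either reproduce the paper's induction, or observe that since every degree lies in $(0,1]$, deleting any cycle from a walk can only increase (never decrease) the product of degrees along it, so the maximum of $length(\cdot)$ over walks from $v''$ to $v'$ is attained on a simple path; one may then argue at the level of walks, where concatenation is unproblematic, and conclude $\gamma_{v'} \ge \gamma_v \ge t$ as you do. Either patch should be stated explicitly; without it the key inequality $\gamma_{v'}\ge\gamma_v$ is not justified.
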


\begin{proof}

For $t \in [0,1]$ we have that $\widehat{O}_t = \{ s \in \mathcal{L} \mid \widehat{O}(s) \ge t \}$ is a filter in $\mathcal{L}$ if for all $s,s' \in \mathcal{L}$ with $s \le s'$ whenever $s \in \widehat{O}_t$ holds, then also $s' \in \widehat{O}_t$ holds, i.e. if $\widehat{O}(s) \ge t$, then $\widehat{O}(s') \ge t$.

Let $s$ be in $\widehat{O}_t$ and let $p_{a,s}=(a=s_{i_1}, s_{i_2}, \dots , s_{i_k}=s)$ be a path of maximal between $O$ and $s$. We have to show that if an $s' \in \mathcal{L}$ is a generalization of $s$, i.e. $s \le s'$, then a path in $p_{a,s'}$ exists such that $length(p_{a,s'}) \ge length(p_{a,s})$.

As $s \le s'$ holds, there exists a path $p_{s,s'}=(s = s_{j_1}, s_{j_2}, ..., s_{j_l} = s')$ from $s$ to $s'$ of length $1$. If $p_{a,s}$ and $p_{s,s'}$ do not have any node in common except $s$, we can concatenate them to a path $p_{a,s'} = (a = s_{i_1}, \dots , s_{i_k},s_{j_1}, \dots s{j_l} = s')$, the length of which is $length(p_{a,s}) \cdot 1 \ge length(p_{a,s})$.

Otherwise, proceed by induction. Let $t$ be the first common node. Then take the paths $p_{a,t} = (a=s_{i_1}, \dots , s_{i_x} = t)$ and the $p_{t,s'}=(t = s_{j_y}, s_{j_{y+1}}..., s_{j_l} = s')$. We have $length(p_{a,t}) \ge length(p_{a,s})$, $d(e) \le 1$ for all $e \in E$ and $length(p_{t,s'}) = 1$. If these paths have only $t$ in common, then we can concatenate them, otherwise apply the induction hypothesis.

\end{proof}

\begin{example}

For the graph in Figure \ref{fig-graph} take the following sets of properties
\[ O_1 = \{ Java, Netbeans, XML \} \;\text{and}\; O_2 = \{ Java, PHP, Eclipse \} \]

These generate the following fuzzy filters:
\begin{align*}
\widehat{O_1} &= \{(Java, 1.0), (Netbeans, 1.0), (XML, 1.0), (OOP, 1.0), (PL, 1.0), \\
&\qquad\qquad (IT, 1.0), (IDE, 1.0), (ML, 1.0) \} \\
\text{and}\qquad \widehat{O_2} &= \{ (Java, 1.0), (PHP, 1.0), (Eclipse, 1.0), (OOP, 1.0), (PL, 1.0),\\ 
&\qquad\qquad (IT, 1.0), (Script, 1.0), (IDE, 1.0), (Netbeans, 0.7), \\
&\qquad\qquad (Javascript, 0.9), (HTML, 1.0), (ML, 1.0), (XML, 0.7) \}
\end{align*}

This gives rise to the intersection fuzzy filter
\begin{align*}
\widehat{O_1} \cap \widehat{O_2} &= \{(Java, 1.0), (OOP, 1.0), (PL, 1.0), (IT, 1.0), (IDE, 1.0), \\
&\qquad\qquad (Netbeans, 0.7), (XML, 0.7), (ML, 1.0)\}
\end{align*}

Assuming a weighting function $w$ that assigns the same weight to all elements we obtain the matching value $\mu(\widehat{O_1}, \widehat{O_2}) = \frac{7.4}{8}$.

\end{example}

\subsection{Probabilistic Matching}

The lattice and the extra edges can be handled from an information-theoretical point of view with probabilistic logic programs \cite{kern-isberner:ai2004}, or from set theoretic point of view with probabilistic models \cite{beierle:entropy2015} as well. Here we adapt the latter approach using apply the maximum entropy model to define a probabilistic matching method.

\subsubsection{Preliminaries}

Let $\Theta$ be a finite set. Let $R := \{a_1, \dots, a_l \}$ be a set of subsets of the power set $\mathcal P(\Theta)$ of $\Theta$, namely $a_i \in \mathcal P(\Theta)$, $i=1, \dots, l$. The elements of $R$ are called \emph{events}.

\begin{definition}

Let $X$ be some set. Let $\mathcal{A}$ be a subset of $\mathcal P(X)$. $\mathcal{A}$ is a {\em $\sigma$-algebra} over $X$, denoted by $\mathcal{A}(X)$, if:

\begin{itemize}

	\item $X \in \mathcal{A}$,

	\item if $Y \in \mathcal A$, then $(X \setminus Y) \in \mathcal{A}$, and

	\item if $Y_1, Y_2, ...$ is a countable collection of sets in $\mathcal{A}$, then their union $\bigcup\limits_{n=1}^{\infty}Y_n$ is in $\mathcal{A}$ as well.

\end{itemize}

\end{definition}

The set of full conjunction over $R$ is given by $\Omega := \left \{ \bigcap\limits_{i=1}^{l}e_i | e_i \in \{a_i, \neg{a_i} \} \right\} $,
where $a_i \in \mathcal P(\Theta)$, $i=1, \dots, l$, and $\neg{a_i}=\Theta \setminus a_i$.
It is well known fact that the $2^l$ elements of $\Omega$ are mutually disjoint and span the set $R$ (any $a_i$ can be expressed by a disjunction of elements of $\Omega$).
Therefore, the smallest $\sigma$-algebra $\mathcal A(R)$ that contains $R$ is identical to $\mathcal A(\Omega)$. For that reason we restrict the set of \emph{elementary events} (set of possible worlds) to $\Omega$ instead of the underlying $\Theta$.

\begin{definition}

A {\em measurable space} $(\Omega, \mathcal A)$ over a set $R:=\{a_1, \dots, a_l\}$ is defined by

\[ \Omega := \left \{ \bigcap\limits_{i=1}^{l}e_i | e_i \in \{a_i, \neg{a_i} \} \right\}  \qquad\text{and}\qquad \mathcal A = \mathcal A(\Omega) = \mathcal P(\Omega) . \]

If $(\Omega, \mathcal A)$ is a measurable space over $R$ with $\Omega = \{ \omega_1, \dots, \omega_n \}$, a {\em discrete probability measure $P$} (P-model) is an assignment of non-negative numerical values to the elements of $\Omega$, which sum up to $1$, i.e. $p_i = P(\omega_i) \ge 0$ for $i=1, \dots , n$ and $\sum a_i = 1$.

\end{definition}

The $n$-tuple $\vec{p} = (p_1, \dots, p_n)$ is called a {\em probability vector} (P-vector).
$W_{\Omega}$ (respectively $V_{\Omega}$) denotes the set of all possible P-models (P-vectors) for $(\Omega, \mathcal A)$.

\begin{definition}

Let $(\Omega, \mathcal A)$, $P \in W_{\Omega}$ $a,b \in \mathcal A, P(a)>0$, and $[l,u] \subseteq [0,1]$. A {\em sentence} in $(\Omega, \mathcal A)$ is a term $\langle P(b|a)=\delta; \delta \in [l,u] \rangle$ or $P(b|a)[l,u]$ (also use $P(a)=P(a|\Omega)$ as a shortcut), where $P(b|a)=P(a \cap b) / P(a)$ denotes the conditional probability of the event $b$ given $a$. Use $P(a)=P(a|\Omega)$ as a shortcut. The sentence is {\em true in $P \in W_{\Omega}$} iff $P(b|a) \in [l,u]$. Otherwise it is {\em false}.

\end{definition}

A sentence $P(b|a)[l,u]$ defines two inequalities, namely $P(b|a) \leq u$ (be less than the upper bound) and $P(b|a) \ge  l$ (be greater than the lower bound). These inequalities can be further transformed in the following way:
\begin{gather*}
P(b|a) \leq u \Leftrightarrow P(a \cap b) \leq u * P(a) \Leftrightarrow P(a \cap b) \leq u * (P(a \cap b) + P(a \cap \neg{b}))\\
P(b|a) \ge l \Leftrightarrow P(a \cap b) \ge l * P(a) \Leftrightarrow P(a \cap b) \ge l * (P(a \cap b) + P(a \cap \neg{b}))
\end{gather*}

Rearranging the inequalities and using the elementary probabilities $p_i, i=1,\dots, n$ we obtain

\begin{gather*}
P(b|a) \leq u \Leftrightarrow (1-u) \cdot \sum\limits_{i:w_i\in a \cap b}{p_i} + u \cdot  \sum\limits_{j:w_j \in a \cap \neg{b}}{p_j} \ge 0\\
P(b|a) \ge l \Leftrightarrow (1-l) \cdot \sum\limits_{i:w_i\in a \cap b}{p_i} - l \cdot  \sum\limits_{j:w_j \in a \cap \neg{b}}{p_j} \ge 0
\end{gather*}

Note, that if $u=1$ (respectively, $l=0$), then the first (second) inequality is always satisfied as $p_i \ge 0$.

\begin{definition}

Let $DB := \{c_1, \dots, c_m\}$ be a set of $m$ sentences in $(\Omega, \mathcal A)$. $W_{DB}$ is defined as the set of all P-models $P \in W_{\Omega}$ in which $c_1, \dots, c_m$ are true. 
We call $c_1, \dots, c_m$ {\em constraints} on $W_{\Omega}$, and $W_{DB}$ denotes the set of all elements of $W_{\Omega}$ that are consistent with the constraints in $DB$.

\end{definition}

If $W_{DB}$ is empty, the information in DB is inconsistent. If $W_{DB}$ contains more than one element, the information in DB is incomplete for determining a single P-model. 

\subsubsection{Maximum entropy model}

If $W_{DB}$ contains more than one element, the information in DB is incomplete for determining a single P-model. Therefore, we must add further constraints to the system to get a unique model. It was shown in \cite{kern-isberner:ai2004} that the maximum entropy model adds the lowest amount of additional information between single elementary probabilities to the system. Moreover, the maximum entropy model also satisfies the principle of indifference and the principle of independence:

\begin{itemize}

\item The {\em principle of indifference} states that if we have no reason to expect one event rather than another, all the possible events should be assigned the same probability.

\item The {\em principle of independence} states the if the independence of two events $a$ and $b$ in a P-model $\omega$ is given, any knowledge about the event $a$ does not change the probability of $b$ (and vice verse) in $\omega$, formally $P(b|a)=P(b)$.

\end{itemize}

In order to obtain the consistent P-model to a DB that has the maximum entropy, we have to solve the following linear optimization problem:

\begin{definition}

Let  $DB:=\{c_1, \dots, c_m\}$ be a set of $m$ sentences in $(\Omega, \mathcal A)$ with $\Omega=\{\omega_1, \dots, \omega_n\}$. Let $W_{DB}$ (respectively $V_{DB}$) be the set of all P-models $P \in W_{\Omega}$ (P-vectors $\vec{p} \in V_{\Omega}$) in which $c_1, \dots, c_m$ are true.

The maximum entropy problem is to obtain

\[ \max\limits_{\vec{v}=(p_1, ..., p_n) \in [0,1]^n}{-\sum\limits_{i=1}^n{p_i \log{p_i}}} \]

subject to the following conditions:

\begin{align*}
& p_i \ge 0 \;\text{for} i=1, \dots, n \qquad\text{and}\qquad \sum\limits_{i=1}^n{p_i} = 1 \\
& (1-u) \cdot \sum\limits_{i:w_i\in a \cap b}{p_i} + u \cdot  \sum\limits_{j:w_j \in a \cap \neg{b}}{p_j} \ge 0 \qquad\text{for all} \;\; c=P(b|a)[l,u] \in DB, l>0 \\
& (1-l) \cdot \sum\limits_{i:w_i\in a \cap b}{p_i} - l \cdot \sum\limits_{j:w_j \in a \cap \neg{b}}{p_j} \ge 0 \qquad \text{for all} \;\; c=P(b|a)[l,u] \in DB, u<0
\end{align*}

\end{definition}

In the following let $me[DB]$ denote the P-model that solve the maximum entropy problem, provided such model exists.

\begin{definition}

Let $DB$ be a set of sentences and let $c=\langle P(b|a)[l,u] \rangle$ be a sentence. We say that $c$ is a {\em maximum entropy consequence} of $DB$, denoted by $DB \parallel\sim^{me} c$, iff either $DB$ is inconsistent or $me[DB](b|a) \in [l,u]$ holds.

A {\em probabilistic query} is an expression $QP_{DB}(b|a)$ where $a$ and $b$ are two events, i.e. $a,b \in \mathcal A$, and $DB$ is a set of sentences. 

Let $DB$ be a set of sentences and let $QP(b|a)$ be a probabilistic query. The {\em answer to the query} is $\delta = me[DB](b|a)=\cfrac{me[DB](a \cap b)}{me[DB](a)}$, if $DB \parallel\sim^{me} P(a)(0,1]$. Otherwise $\delta = -1$.

\end{definition}

The query means what is the probability of $b$ given $a$ with respect to $DB$. An answer $\delta = -1$ means that the set $DB \cup \{P(a)(0,1]\}$ is inconsistent.
 
\subsubsection{Probabilistic Matching}

We now show how the matching problem can be expressed with probabilistic queries. For this let again $G=(V, \{E_O \cup E_E\})$ be a directed graph with degree function $d$. We construct $DB$ from $G$ in the following way:

\begin{enumerate}

\item Assign a new set (event) $a_v$ to each node $v$ of $G$ which contains property sets that include the property $v$.

\item For each lattice edge $(v_i,v_j) \in E_O$ add a new sentence $s_{ij}$ to the DB in the form $P(a_{v_j} | a_{v_i})[1,1]$. The sentence means if a property set contains $v_i$---i.e., is an element of $a_{v_i}$)---then it also contains $v_j$---is an element of $a_{v_j}$) as well. 

\item Then, for every extra edge  $(v_i,v_j) \in E_E$ we also add a new statement in the form of $P(a_{v_j} | a_{v_i})=[l,u]$. Here the degree of an edge can be handled in two different ways. In the first approach, let the lower bound of the interval $l$ be equal to the degree $d(v_i,v_j)$ of the edge, and let the upper bound of the interval $u$ be equal to $1$. In the second approach, let $l=u=d(v_i,v_j)$. The latter is the stricter approach, as it adds constraints to the upper bounds as well.

\end{enumerate}

A property set $A=\{v_1, ..., v_n\}$ is translated into the event $ev(A)=a_{v_1} \cap \dots \cap a_{v_n}$. 
The matching value $\mu(A_g,A_r)$ for a given profile $A_g$ with respect to a requested profile $A_r$ is the result of the probabilistic query $QP(ev(A_r)|ev(A_g))$. In this way a matching value is interpreted as the probability that the given profile is a fit for the requested one, provided that the constructed $DB$ is consistent. 

\section{Conclusions}\label{sec:schluss}

In this article we first developed a filter-based matching theory with profiles that are defined via a knowledge base. For the knowledge base we assume some TBox and an associated ABox expressed in some suitable description logic similar to DL-LITE, $\mathcal{SROIQ}$ or OWL-2, though less expressive languages might suffice. From the knowledge base we define a lattice, such that profiles give rise to filters in this lattice. This is the basis for the definition of weighted matching measures on pairs of such profiles. We showed that the theory is rather powerful, as fuzzy and probabilistic extensions can be captured.

As the theory cannot tell which weights would be appropriate for a particular application, we investigated the problem, how human-made matchings could be exploited to learn a suitable weighting function. We could show that under some mild assumptions---plausibility constraints that a human-defined matching should obey in order to exclude bias---a ranking-preserving matching measure exists, which can be used for the implementation of weight maintenance procedures. Ranking preservation takes all rankings with respect to fixed requested profiles and all rankings for fixed given profiles in connection with relevance classes into account. However, a total preservation of all relationships in the human-defined matchings is not possible. This may indicate that either the set of plausibility constraints requires extensions or the knowledge base needs to be enlarged. This open problem of adjustment of the matchings to unbiased human expertise requires further investigation. This also extends to the simultaneous capture of matchings from several human experts and the detection of probabilistic matching measures.

We further investigated the efficient implementation of matching queries, which requires to address top-$k$ queries on knowledge bases with embedded computation of matching values to evaluate the ranking criterium. This problem can be solved by pre-computed matching measures, which is efficient, as updates to the weighting functions or the TBox are considered to appear not overly often. Furthermore, in practice usually only matching values above a rather high threshold are considered to be relevant. Consequences for the system architecture of a matching system are the use of relational databases for the storage of profiles, i.e. the ABox, and the pre-computed matching values, the realisation of a querying engine on top of this database, and the coupling of updates to the matching measures with the TBox and the learning algorithms. 

For gap queries we adopted a rather pragmatic approach computing an extension of a given profile that will appear in the result of a top-$k$ matching query for at least $\ell$ requested profiles (or the other way round). With respect to the application area of job recruiting this defines a suitable extension of the profile that will improve chances of successful matching. This is of interest for targeted decisons concerning further training and education. However, we believe that gap queries require further investigation, as it should also be taken into account how likely it is that a profile can be improved, in other words how close the necessary extensions are to the properties that are already in the profile.

Prototypes of these system components will be further developed by a company towards a knowledge-based matching system in the recruitment area. Additional components comprise crawlers for job offers and candidate profiles, extraction from natural language documents, and user-friendly update interfaces for maintaining the TBox that are suitable for the use by domain experts. For other application areas, e.g. matching in sport strategies, our research results present an open invitation to explore their suitability for these applications, which may uncover further open problems for research.


\bibliographystyle{elsarticle-num} 
\bibliography{aceprom}

\end{document}